\documentclass{article}


\usepackage{amsthm}
\usepackage{amsmath}
\usepackage{amssymb}
\usepackage{bm}
\usepackage{hyperref}
\usepackage[noabbrev,capitalise,nosort,nameinlink]{cleveref}
\usepackage{natbib}
\usepackage{mathtools}
\usepackage{lineno}
\usepackage{multirow}
\usepackage{caption}
\usepackage{subcaption}
\usepackage[noend]{algpseudocode} 
\usepackage{algorithm}


\newtheorem{theorem}{Theorem}
\newtheorem{lemma}{Lemma}
\newtheorem{corollary}[theorem]{Corollary} 

\newcommand{\todo}[1]{\bgroup\color{red}#1\egroup}

\DeclareMathOperator*{\argmax}{arg\,max}
\DeclareMathOperator*{\argmin}{arg\,min}

\DeclarePairedDelimiterX{\infdivx}[2]{(}{)}{%
  #1\;\delimsize\|\;#2%
}
\newcommand{\infdiv}{KL\infdivx} 

\usepackage[a4paper, top=88pt, bottom=88pt, left=72pt, right=72pt, headsep=16pt, footskip=28pt]{geometry}
\hypersetup{
	colorlinks,
	linkcolor=blue,
	citecolor=blue,
	urlcolor=blue,
}
\setlength{\bibsep}{0pt plus 0.3ex}
\newcommand*{\email}[1]{\bgroup\color{blue}\href{mailto:#1}{#1}\egroup}
\renewcommand*{\url}[1]{\bgroup\color{blue}\href{#1}{#1}\egroup}
\usepackage{enumitem}
\setlist[enumerate]{nosep}
\setlist[itemize]{nosep}
\usepackage{mleftright} \mleftright
\renewcommand{\qedsymbol}{$\blacksquare$}
\renewenvironment{proof}[1][\proofname]{\noindent{\sffamily\bfseries #1.} }{\hfill \qedsymbol \medskip}
\usepackage{scrlayer-scrpage, xhfill}
\usepackage[labelfont={sf,bf}]{caption}
\automark[section]{section}
\setkomafont{pageheadfoot}{\normalcolor\sffamily}
\setkomafont{pagenumber}{\normalfont\normalsize\sffamily}
\clearpairofpagestyles
\let\oldtitle\title
\renewcommand{\title}[1]{\oldtitle{#1}\newcommand{\theshorttitle}{#1}}

\let\oldauthor\author
\renewcommand{\author}[1]{\oldauthor{#1}\newcommand{\theshortauthor}{#1}}
\newcommand{\shortauthor}[1]{\renewcommand{\theshortauthor}{#1}}
\cohead{\xrfill[0.525ex]{0.6pt}~\theshorttitle~\xrfill[0.525ex]{0.6pt}}
\cehead{\xrfill[0.525ex]{0.6pt}~\theshortauthor~\xrfill[0.525ex]{0.6pt}}
\cfoot*{\xrfill[0.525ex]{0.6pt}~\pagemark~\xrfill[0.525ex]{0.6pt}}
\newcommand{\theabstract}[1]{\par\bgroup\noindent\textbf{\textsf{Abstract.}} #1\egroup}
\newcommand{\thekeywords}[1]{\par\smallskip\bgroup\noindent\textbf{\textsf{Keywords.}}\newcommand{\and}{ $\bullet$ } #1\egroup}
\newcommand{\themsc}[1]{\par\smallskip\bgroup\noindent\textbf{\textsf{2010 Mathematics Subject Classification.}}\newcommand{\and}{ $\bullet$ } #1\egroup}
\newcommand*{\affilref}[1]{\ref{affiliation#1}}
\newcommand*{\affiliation}[3]{
	\footnotetext[#1]{\label{affiliation#2} #3}
}

\newcommand{\thmmaintheorem}{Theorem~1\xspace}
\newcommand{\thmposteriormeancorollary}{Corollary~2\xspace}
\newcommand{\thmMLEcorollary}{Corollary~3\xspace} 
\newcommand{\thmposteriorpredcorollary}{Corollary~4\xspace}
\newcommand{\thmposteriorconsistencycorollary}{Corollary~5\xspace}

\title{Bayesian Variational Inference for Mixed Data Mixture Models}

\author{%
	Junyang Wang\textsuperscript{\affilref{Imperial}}
	\and
    James Bennett\textsuperscript{\affilref{Imperial2}}
    \and
    Victor Lhoste\textsuperscript{\affilref{Imperial2}}
    \and 
	Sarah Filippi\textsuperscript{\affilref{Imperial}}
	}
\shortauthor{J.~Wang, Sarah Filippi}

\begin{document}

\maketitle
\affiliation{1}{Imperial}{Department of Mathematics, Imperial College London, 
London, United Kingdom.}
\affiliation{2}{Imperial2}{School of Public Health, Imperial College London, 
London, United Kingdom.}

	\theabstract{




Heterogeneous, mixed type datasets including both continuous and categorical variables are ubiquitous, and enriches data analysis by allowing for more complex relationships and interactions to be modelled. Mixture models offer a flexible framework for capturing the underlying heterogeneity and relationships in mixed type datasets. Most current approaches for modelling mixed data either forgo uncertainty quantification and only conduct point estimation, and some use MCMC which incurs a very high computational cost that is not scalable to large datasets. This paper develops a coordinate ascent variational inference algorithm (CAVI) for mixture models on mixed (continuous and categorical) data, which circumvents the high computational cost of MCMC while retaining uncertainty quantification. We demonstrate our approach through simulation studies as well as an applied case study of the NHANES risk factor dataset. We provide theoretical justification for our method by establishing that the CAVI variational posterior mean converges locally to the true parameter value at a gap of $O(1/n)$ from the maximum likelihood estimator. Building on this result, we show that the CAVI variational posterior contracts around the true parameter at $O(n^{-1/2})$ rate.}
	
	\thekeywords{{Mixture model}\and{Mixed type data}\and{Variational inference}}


\section{Introduction} \label{sec: introduction}

Mixture models provide a probabilistic framework for modelling latent subpopulations within a population, without needing to know which subpopulation each individual data point belongs to. Mixture models are ubiquitous in statistical modelling and are commonly used for clustering and density estimation. Popular applications include image analysis 
\citep{Niknejad2015,Beecks2015}, finance \citep{Alexander2006, Durham2007}  and genomics \citep{Pritchard2000,Rau2015} . 

Commonly in data analysis, datasets include both continuous and categorical variables, and such datasets are known as mixed type data. For instance, healthcare datasets might include continuous variables such as a patient's blood pressure, weight and height, but also categorical variables such as smoking status. In such mixed datasets, the ability to model multimodal continuous and categorical data simultaneously is critical in understanding the relationship between the variables of interest, and the underlying subpopulations in the data.

Existing mixture models for mixed type data have been developed in various forms. \cite{Hunt2002} proposed a model where each mixture component is a product of multivariate Gaussian distributions for the continuous data, and a categorical distribution for the categorical data. The authors use an Expectation-Maximisation (EM) algorithm to calculate the maximum likelihood estimate of the parameters of interest, which was later extended in \cite{Hunt2003} to include missing data. \cite{Lawrence1996} uses a similar parametrisation, though allowing the mean of the Gaussian components to depend on the value of the categorical data, but the covariance of the components are assumed to be independent of the categorical data as well as the mixture components in order to reduce the number of parameters estimated with EM.

There are also a number of works that assume categorical variables are generated from latent continuous variables, and the values of the categorical variables correspond to certain cutoffs in the latent continuous variable. Examples of these include \cite{Ranalli2017} and \cite{McParland2016}, both of which use EM for parameter estimation.

Approaches using Bayesian inference to obtain a posterior distribution over the model parameters rather than a point estimate have also been used widely in mixture models, especially for clustering (\citep{Celeux2020, Celeux2000, Schnatter2001, Miller2018,Richardson1997, Zhang2004, Teh2006}). However, mixed type data mixture models using Bayesian inference appear to be much more limited (\citep{Shu2019, Zhang2014}). MCMC for mixture models are however not without significant drawbacks. MCMC is computationally intensive especially in higher dimensions, and diagnosing convergence can be challenging. Due to the non-identifiability of mixture components under symmetric priors, MCMC samples can exhibit label switching, where component labels permute during sampling \cite{Jasra2005}. This makes direct interpretation of component-specific parameters difficult without additional post-processing or constraints. 

Variational inference (VI) is a popular method used in Bayesian statistics and machine learning to approximate the posterior distribution when it is difficult or intractable to compute (\cite{Blei2017}). VI approximates the posterior distribution using optimisation, and is typically faster and more scalable than MCMC.  

Variational inference has been utilised for mixture models previously. For finite mixtures and continuous data, it is well known that a Normal-inverse Wishart prior on the mixture component mean and covariance and a Dirichlet prior on the mixture component probabilities, coupled with Gaussian data, leads to closed form updates using the Coordinate Ascent Variational Inference (CAVI) algorithm \citep{Bishop2006}. For finite mixtures and categorical data, \cite{Ahlmann2018} proposed a CAVI algorithm for multinomially distributed discrete data, and more recently \cite{Rao2025} applied a VI algorithm with variable selection for discrete biomedical data. Infinite mixture models using Dirichlet Process and variational inference include the seminal work of \cite{Blei2006} as well as further developments such as \citep{Chong2011, Hughes2013, Huynh2016}. 


In this paper, we propose a VI algorithm for finite mixture models on mixed type data, which allows both the continuous and categorical variables to influence the make up of each mixture component. We employ a mean-field variational approximation of the posterior with conjugate priors in order to obtain analytical updates of the variational posterior using the CAVI algorithm, with the derivation equations provided. 
To provide a rigorous foundation for our method, we study the asymptotic properties of the CAVI-derived variational posterior as the sample size tends to infinity. We prove that the CAVI posterior mean is locally consistent to the true parameter and that it remains asymptotically close to the maximum likelihood estimator (MLE) with a gap that vanishes at a rate of $O(1/n)$. Using these results, we show that the variational predictive density recovers the true data-generating likelihood, and the CAVI variational posterior contracts around the true parameter at rate $O(n^{-1/2})$.
We investigate the empirical performance of our algorithm extensively on a number of simulated datasets, where both point estimation and uncertainty quantification of the variational posterior are assessed and compared with a Gibbs sampler and existing EM methods. We also compare runtimes between CAVI and Gibbs. Finally, we analyse data from the National Health and Nutrition Examination Survey (NHANES), a program conducted by the National Center for Health Statistics (NCHS) that assesses the health and nutritional status of the U.S. population through interviews and physical examinations, from which we use continuous and categorical risk factor variables such as body mass index (BMI) and smoking status. These risk factors frequently co-occur and are often etiologically linked; for example, obesity is a known risk factor for dyslipidemia, elevated blood pressure, and hyperglycemia, making it difficult to accurately assess health risks when considering each factor in isolation. We show our method can produce medically interpretable clusters of participants, and we visualise said clusters in a way that takes into account uncertainty using the posterior predictive distribution, something that is missed in most frequentist and non model based clustering methods.  


\section{Mixed data mixture model} \label{subsec: mixeddatamixturemodel}

Consider a set of $n$ independent observations $\{(x_i,c_i)\}_{1\leq i\leq n}$ where $x_i\in\mathbb{R}^q$  contains continuous observations while $c_i\in\mathbb{Z}^p$ is a $p$-dimensional vector of categorical data such that $c_{ij} \in \{1,2,\dots,d_j\}$. Note that $d_j$ denotes the number of categorical values for the $j$-th categorical variable. We assume that the number of categorical parameters to be estimated remains small relative to the sample size $n$. 
In this paper we focus on a finite mixture model of $K$ components, where $K$ is a constant specified by the user.
A popular alternative would be the Dirichlet Process mixtures, which is more suitable if the number of data clusters is expected to increase with respect to sample size, whereas finite mixture models are more suitable if the data arise from a moderate number of clusters even as the sample size increases (\cite{Schnatter2019}). This paper focuses on the latter scenario. In particular, we consider a model where the likelihood for a single data point $(x_i, c_i)$ can be written as follows:
\begin{align} \label{eq: likelihoodVIzintegratedout} 
p(x_i, c_i | \mu, \Lambda, \psi, \pi) = \sum_{k=1}^{K} \pi_{k} \mathcal{N}(x_i|\mu_{k}, \Lambda_{k}^{-1} ) \prod_{j=1}^{p} \psi_{k,j,c_{ij}}\;.
\end{align}
Here, $\pi_{k}$ are the mixture weights, denoting the probability of each component $1 \leq k \leq K$; and  $\psi_{k,j,c_{ij}}$ denotes the probability of observing $c_{ij}$ for the $j$th categorical variable under component $k$. In addition, $\mu_{k}$ and $\Lambda_{k}$ denotes the component centre and component precision matrix respectively for component $k$, and we will also use $\Sigma_{k}=\Lambda_{k}^{-1}$ to denote the covariance matrix for component $k$. 

In this model formulation, conditional independence is assumed between the $p$ categorical variables in $c_i$ as well as between $c_i$ and $x_i$ within a given mixture component $k$. Note that conditional independence does not imply independence between the covariates in the marginal likelihood in \Cref{eq: likelihoodVIzintegratedout}. If in addition the model user wishes to model the dependence between categorical variables within each mixture, they can opt to merge categorical variables and relabel the categorical values accordingly under our current framework. For example, merging two binary categorical variables into one variable with four outcomes. 

Often for computational reasons, it is beneficial to introduce a latent variable $z$ of size $n$ denoting the realised mixture components of each observations. 
Let $x\in \mathbb{R}^{n \times q}$  and $c \in \mathbb{Z}^{n \times p}$ represent the matrix containing  respectively the continuous and categorical data. 
Incorporating $z$, the model can be written as: 
\begin{align} \label{eq: likelihoodVI} 
p(z|\pi) &= \prod_{i=1}^{n}\pi_{z_i}, \quad p(c|\psi,z) = \prod_{i=1}^{n}\prod_{j=1}^{p} \psi_{z_i,j,c_{ij}}, \nonumber \\
& p(x|c,z,\mu,\Lambda) = \prod_{i=1}^{n} \mathcal{N}(x_i|\mu_{z_i}, \Lambda_{z_i}^{-1} ). 
\end{align}
As we are presenting a Bayesian approach to infer the parameters of this mixture model, we now define the prior distributions over the model parameters, for which we consider the following:
\begin{align} \label{eq: priordist}
&\mu_{k} \sim \mathcal{N}(m_{k}, \beta^{-1} \Lambda_{k}^{-1} ), \quad \Lambda_{k} \sim \mathcal{W}(\nu, \Phi^{-1}) \nonumber \\
&\pi = (\pi_{1},\dots,\pi_{K}) \sim Dir(\alpha,\dots,\alpha), \nonumber \\
&\psi_{k,j} = (\psi_{k,j,1},\dots,\psi_{k,j,d_j}) \sim Dir(\eta_j,\dots,\eta_j).
\end{align}
The prior of $\mu_{k}$ and $\Sigma_{k}=\Lambda_{k}^{-1}$ is a joint Normal-Inverse-Wishart distribution while $\pi$ and $\psi_{k,j}$ take Dirichlet priors. The prior hyperparameters $m_{k}$, $\beta$, $\nu$, $\Phi$, $\alpha$, $\eta_j$ are specified by the user. The posterior distribution 
\begin{align}
p(\mu, \Lambda, \psi, \pi, z |x,c) =\frac{p(x|c,z,\mu,\Lambda)p(c|\psi,z)p(z|\pi)p(\mu, \Lambda, \psi, \pi)}{p(x,c)}  \label{eq:posterior}
\end{align}
induced by this prior is not available in closed form. In this paper, we derive a variational inference procedure to approximate the posterior. 


\section{Variational Inference} \label{subsec: VIbackground}
\subsection{Background on Variational Inference and Coordinate Ascent}
\label{subsec: CAVI}

In this section, we introduce the key concepts of Variational Inference and the Coordinate Ascent Variational Inference (CAVI) procedure that are essential for understanding the remainder of the paper. For more detailed explanations, we refer the reader to \cite{Bishop2006}. Suppose that we have some data $\bm{x}$ (not necessarily the data described in the previous section) generated by a likelihood function $p(\bm{x}|\bm{\theta})$ where $\bm{\theta}$ denotes some unknown parameters of interest, which have a prior distribution $p(\bm{\theta})$. In variational inference, the posterior distribution $p(\bm{\theta}|\bm{x})$ is approximated using a variational distribution $q$, which is chosen from a family of distributions $\mathcal{Q}$ that are computationally more tractable than the true posterior. The variational distribution $q$ is chosen by selecting the distribution from the variational family which minimises the Kullback-Leibler divergence (KL-divergence) between the variational distribution and the true posterior. Specifically:
\begin{equation} \label{eq: VIobjective}
q=\argmin_{q \in \mathcal{Q}} \infdiv{q(\bm{\theta})}{p(\bm{\theta}|\bm{x})}
\end{equation}
In practice, the variational family $\mathcal{Q}$ is often assumed to be parametrised by variational parameters $\lambda$, and the optimisation problem becomes equivalent to finding the parameter $\lambda$ which minimises $\infdiv{q_{\lambda}(\bm{\theta})}{p(\bm{\theta}|\bm{x})}$. One key advantage in this optimisation approach to Bayesian inference is that one no longer needs to evaluate the marginal distribution $p(\bm{x})$, which is typically the main computational challenge in conducting Bayesian inference through sampling. To see this, note that:
\begin{align} \label{eq: VUobjectiveparametrised}
\lambda^* &=\argmin_{\lambda} \infdiv{q_{\lambda}(\bm{\theta})}{p(\bm{\theta}|\bm{x})} \nonumber \\
&=\argmin_{\lambda} \mathbb{E}_{q_{\lambda}} [ \mathrm{ln}(q_{\lambda}(\bm{\theta}) )- \mathrm{ln} (p(\bm{x}|\bm{\theta}))- \mathrm{ln} (p(\bm{\theta}))] 
\end{align}
where $\mathrm{ln} (p(\bm{x}))$ can be dropped from the optimisation objective as it is independent of $q_{\lambda}$. The negative of the remaining optimisation objective $\mathbb{E}_{q_{\lambda}} [-\mathrm{ln}(q_{\lambda}(\bm{\theta}) )+ \mathrm{ln} (p(\bm{x}|\bm{\theta}))+ \mathrm{ln} (p(\bm{\theta}))] $ is commonly referred to as the Evidence Lower Bound (ELBO). 


A common assumption for the variational family is that the variational posterior factorises into a product of independent distributions:\begin{align} \label{eq: meanfieldapprox}
q(\bm{\theta})=\prod_{i=1}^{I} q_i(\bm{\theta}_i)
\end{align}
where $q_i$ is the posterior for the $i$th block of variables $\bm{\theta}_i$. This is known as the mean field assumption. One popular and convenient algorithm for solving the optimisation problem in  \Cref{eq: VUobjectiveparametrised} under the mean field assumption is Coordinate Ascent Variational Inference (CAVI). CAVI maximises the ELBO by iteratively updating each factor in the variational distribution while treating all other factors as fixed, until the ELBO has converged to a local maximum. 
It can be shown CAVI updates the $q_j$ in turn via \begin{equation}
q_j(\bm{\theta}_j) \propto \mathrm{exp}(\mathbb{E}_{\bm{\theta}_{-j}}[\mathrm{ln}(p(\bm{x},\bm{\theta}) ) ])\label{eq:cavi-gen}
\end{equation}while keeping all $q_i, i \neq j$ fixed.

\subsection{CAVI for mixed data mixture model} \label{subsec: ourCAVI}

In this section, we derive a variational inference procedure for the mixture model  defined in~\Cref{eq: likelihoodVIzintegratedout}, building on the general concepts and techniques introduced in the previous section. To approximate the posterior distribution $p(\mu, \Lambda, \psi, \pi, z |x,c)$ from~\Cref{eq:posterior}, we consider a variational distribution of the following form:
\begin{align} \label{eq: meanfieldapproxourmodel}
q(\mu, \Lambda, \psi, \pi, z)=q(\mu, \Lambda, \psi, \pi) q(z)\;.
\end{align}
This variational posterior assumes a general joint distribution for the `global' variables $\mu, \Lambda, \psi, \pi$, and a separate distribution for the `local' variables $z$ that is independent from the `global' variables. The exact distributional families of the variational posterior $q(\mu, \Lambda, \psi, \pi)$ and $q(z)$ need not to be specified in advance and will fall out during the derivation. Independence between $q(\mu, \Lambda, \psi, \pi)$ and $q(z)$ is a standard mean field assumption that allows us to split the variables into two blocks for the Coordinate Ascent Variational Inference (CAVI) algorithm. In other words,the algorithm  alternates between updating $q(\mu, \Lambda, \psi, \pi)$ and $q(z)$. By applying the CAVI update from equation~\eqref{eq:cavi-gen} to our variational posterior in~\eqref{eq: meanfieldapproxourmodel}, we obtain  (details of the derivations can be found in the supplement):
\begin{align} 
q(\mu, \Lambda, \psi, \pi)  \propto \underbrace{\mathrm{exp}(\mathbb{E}_{z} [\mathrm{ln}(p(x|c,z,\mu,\Lambda)p(\mu|\Lambda)p(\Lambda))]) }_{\propto q(\mu,\Lambda)} \underbrace{\mathrm{exp}(\mathbb{E}_{z}[\mathrm{ln}(p(z|\pi)p(\pi))] ) }_{\propto q(\pi)}\underbrace{\mathrm{exp}(\mathbb{E}_{z} [\mathrm{ln}(p(c|\psi,z)p(\psi))]  ) }_{\propto q(\psi)}  \nonumber 
\end{align}  
The variational distribution can therefore be written as the product of three terms which can be derived separately: $q(\mu, \Lambda, \psi, \pi)=q(\mu, \Lambda)q(\psi)q(\pi)$. The variational distribution over $\pi$,  $q(\pi)$, can be shown to take the form:
\begin{align} \label{eq: piupdatemaintext}
q(\pi) =  Dir(\pi|\hat{\alpha}_1,\dots,\hat{\alpha}_K ) \quad \text{with} \quad 
\hat{\alpha}_k &=\alpha+\sum_{i=1}^{n}r_{ik}
\end{align}
where $r_{ik}=\mathbb{E}_{z_i}[z_{ik}]=q(z_i=k)$ is the variational posterior probability of the $i$-th data being in component $k$ under the variational posterior probability.  The term 
$q(\mu, \Lambda)$ can be shown to take the following form:
\begin{align} \label{eq: mulambdaupdatemaintext}
q(\mu, \Lambda) &= \prod_{k}\mathcal{N}(\mu_{k}|\hat{m}_{k}, \beta_{k}^{-1} \Lambda_{k}^{-1} ) \mathcal{W}(\Lambda_{k}|\hat{\nu}_{k}, \hat{\Phi}_{k}^{-1})) \nonumber \end{align}
where \begin{align}
&\hat{m}_{k} =\frac{\beta m_{k} + \sum_{j=1}^{n} r_{jk} x_j}{\beta + \sum_{j=1}^{n} r_{jk}}, \quad \hat{\beta}_{k}^{-1} =(\beta + \sum_{j=1}^{n} r_{jk} )^{-1} \nonumber \\
&\hat{\nu}_{k} =\nu+\sum_{i=1}^{n} r_{ik}, \quad \hat{\Phi}_{k} =\Phi- (\beta + \sum_{i=1}^{n} r_{ik} ) \hat{m}_{k}\hat{m}_{k}^T +\beta m_{k} m_{k}^T + \sum_{i=1}^{n} r_{ik} x_i x_i^T
\end{align}
This gives for each $k$, independent Normal-Inverse-Wishart distributions \\
$\mu_{k}|\Lambda_{k} \sim \mathcal{N}(\mu_{k}|\hat{m}_{k}, \hat{\beta}_{k}^{-1} \Lambda_{k}^{-1} )$ and $\Lambda_{k} \sim \mathcal{W}(\Lambda_{k}|\hat{\nu}_{k}, \hat{\Phi}_{k}^{-1}))$.
Lastly, $q(\psi)$ can be shown to take the form:
\begin{align} \label{eq: psiupdatemaintext}
& q(\psi) = \prod_{k}\prod_{j} Dir(\psi_{k,j}|\hat{\eta}_{k,j,1},\dots,\hat{\eta}_{k,j,d_j} ) \quad\text{with} \quad
\hat{\eta}_{k,j,g}=\eta_j+\sum_{i=1}^{n} I_{\{c_{ij}=g\}}r_{ik}
\end{align} 
After updating $q(\mu, \Lambda, \psi, \pi)$, we use its newly updated distribution to in turn update $q(z)$, which can be shown to take the form:
\begin{align} \label{eq: zupdatemaintext}
& q(z) \propto \mathrm{exp} \Bigl\{ \mathbb{E}_{-z} \mathrm{ln} p(x, c, z, \mu, \Lambda, \psi, \pi) \Bigr\} \nonumber \\
& \propto \mathrm{exp} \Bigl\{ \sum_{i=1}^{n}\sum_{k=1}^{K} \Bigl( z_{ik}\mathbb{E}_{\mu, \Lambda} [\mathrm{ln}(\mathcal{N}(x_i|\mu_{k}, \Lambda_{k}^{-1} ) )] + \sum_{j=1}^{p} z_{ik}\mathbb{E}_{\psi} [\mathrm{ln}(\psi_{k,j,c_{ij}}) ] + z_{ik}\mathbb{E}_{\pi} [\mathrm{ln}(\pi_{k} ) ] \Bigr) \Bigr\} \nonumber \\
& \propto \prod_{i=1}^{n} \prod_{k=1}^{K} \rho_{ik}^{z_{ik}}
\end{align} 
where
\begin{align} \label{eq: rhoik main text} 
\rho_{ik} &=\mathrm{exp}\Bigl\{ \mathbb{E}_{\mu, \Lambda} [\mathrm{ln}(\mathcal{N}(x_i|\mu_{k}, \Lambda_{k}^{-1} ) )]+\sum_{j=1}^{p}\mathbb{E}_{\psi} [\mathrm{ln}(\psi_{k,j,c_{ij}}) ]+\mathbb{E}_{\pi} [\mathrm{ln}(\pi_{k} ) ] \Bigr\}
\end{align}
The exact forms of each of the expectations inside $\rho_{ik}$ can be found in the supplement, and we also define the responsibilites $r_{ik}=\frac{\rho_{ik}}{\sum_{j=1}^{K} \rho_{ij} }$. \Cref{alg:ourCAVI} summarises the  CAVI updates for all the parameters. Details including the full derivation for each update and the exact forms of the ELBO is reserved for the supplement.



\begin{algorithm}
\caption{CAVI algorithm for mixed data mixture model} \label{alg:ourCAVI}
\begin{algorithmic}
\State Initialise $\hat{\alpha}_k$, $\hat{m}_{k}$, $\hat{\beta}_{k}$, $\hat{\Phi}_{k}$, $\hat{\nu}_{k}$, $\hat{\eta}_{k,j,g}$, $r_{ik}$
\While{$ELBO(\hat{\alpha}_k, \hat{m}_{k}, \hat{\beta}_{k}, \hat{\Phi}_{k}, \hat{\nu}_{k}, \hat{\eta}_{k,j,g}, r_{ik})$ not converged} 
    \State Update $\hat{\alpha}_k$ for $k=1, \dots K$ using \Cref{eq: piupdatemaintext}
    \State Update $\hat{m}_{k}$, $\hat{\beta}_{k}$, $\hat{\Phi}_{k}$, $\hat{\nu}_{k}$ for $k=1, \dots K$ using \Cref{eq: mulambdaupdatemaintext} 
    \State Update $\hat{\eta}_{k,j,g}$ for $k=1, \dots K$ using \Cref{eq: psiupdatemaintext} 
    \State Update $r_{ik}$ for $k=1, \dots K$ using  \Cref{eq: rhoik main text} 
    \State compute ELBO
\EndWhile \\
\Return{$\hat{\alpha}_k$, $\hat{m}_{k}$, $\hat{\beta}_{k}$, $\hat{\Phi}_{k}$, $\hat{\nu}_{k}$, $\hat{\eta}_{k,j,g}$, $r_{ik}$}
\end{algorithmic}
\end{algorithm}

\subsection{Posterior Predictive Distribution}

Now that we have the variational posterior, we can use the variational posterior predictive distribution to predict possible future (unobserved) data points given the observed data, while accounting for uncertainty in the model parameters. This can be used for purposes such as density estimation. We derive the variational posterior predictive distribution of a new observation $\tilde{x}, \tilde{c}$, which conveniently can be obtained in closed form in this case:
\begin{align} \label{eq: posteriorpred}
& q(\tilde{x},\tilde{c}|x,c) = \int p(\tilde{x},\tilde{c}|\mu, \Lambda,\psi,\pi )q(\mu, \Lambda)q(\psi)q(\pi) d\mu d\Lambda d\psi d\pi \nonumber \\
=&\sum_{k=1}^{K} \mathbb{E}_{\pi} [\pi_{k}] \;\mathbb{E}_{\mu, \Lambda} [\mathcal{N}(\tilde{x}|\mu_{k}, \Lambda_{k}^{-1} )] \;\mathbb{E}_{\psi}[\prod_{j=1}^{p} \psi_{k,j,\tilde{c}_j}] \nonumber \\
=& \sum_{k=1}^{K} \bigl[ \frac{\hat{\alpha}_k}{\sum_{k'=1}^{K} \alpha_{k'}} \;t_{\hat{\nu}_{k}-q+1}\Big(\tilde{x} \Big\vert \hat{m}_{k}, \frac{\hat{\Phi}_{k}(\hat{\beta}_{k}+1)}{\hat{\beta}_{k}(\hat{\nu}_{k}-q+1)} \Big) \prod_{j=1}^{p} \frac{\hat{\eta}_{k,j,\tilde{c}_j}}{\sum_{g_j=1}^{d_j} \hat{\eta}_{k,j,g_j}} \bigl].
\end{align} 
where expectations such as $\mathbb{E}_{\pi}$ are with respect to the variational posterior distributions, $t_{\nu}(\tilde{x}|m,\Phi)$ denotes a multivariate student t density function with $\nu$ degrees of freedom, location $m$ and scale matrix $\Phi$.  

From the form of \eqref{eq: posteriorpred}, we can also obtain the marginals of the variational posterior predictive densities for each $j$th component of newly observed continuous data $\tilde{x}_j$ or categorical data $\tilde{c}_j$ given a particular mixture component $k$, i.e.  $q(\tilde{x}_j|\tilde{z}=k,\tilde{c},x,c)$ or $q(\tilde{c}_j|\tilde{z}=k,x,c)$. In particular, a component of a multivariate t distribution is an univariate t distribution with mean and variance equal to the corresponding components of the mean and covariance of the multivariate t distribution, so $q(\tilde{x}_j|\tilde{z}=k,\tilde{c},x,c)$ and $q(\tilde{c}_j|\tilde{z}=k,x,c)$ take the following forms:
\begin{align} \label{eq: posteriorpredmarginalx}
q(\tilde{x}_j|\tilde{z}=k,\tilde{c},x,c) = t_{\hat{\nu}_{k}-q+1}(\tilde{x}_j | \hat{m}_{k,j}, \frac{\hat{\Phi}_{k,jj}(\hat{\beta}_{k}+1)}{\hat{\beta}_{k}(\hat{\nu}_{k}-q+1)} ), \quad
q(\tilde{c}_j|\tilde{z}=k,x,c) = \frac{\hat{\eta}_{k,j,\tilde{c}_j}}{\sum_{g_j=1}^{d_j} \hat{\eta}_{k,j,g_j}}.
\end{align}
These marginal distributions of the variational posterior predictive densities enable useful visualisations of the variational posterior predictive density in high dimensions that will be presented in \Cref{sec: results}. 

\section{Theory} \label{sec: theory}

This section is dedicated to examining the convergence of the proposed CAVI algorithm to the true, data generating parameters $\{\pi_k^{*}, \mu_k^{*}, \Lambda_k^{*}, \{\psi_{k,g}^{*}\}_g\}_k$ as the sample size $n$ tends to infinity, assuming the data have been generated according to~\Cref{eq: likelihoodVIzintegratedout} and the component parameters $\{\mu_k^{*}, \Lambda_k^{*}, \{\psi_{k,g}^{*}\}_g\}_k$ are pairwise distinct. For ease of presentation we assume there is $p=1$ categorical variable with $d$ categorical values. We also assume $d$ and $K$, the number of components $K$ is correctly specified, so $\pi_k^{*} > 0, \sum_{k=1}^{K} \pi_k^{*}=1$ and $\psi_{k,g}^{*} > 0, \sum_{g=1}^{d} \psi_{k,g}^{*}=1$ for $1 \leq k \leq K$ and $1 \leq g \leq d$. In recent years, a number of works have developed general theory to study the convergence of the optimal variational posterior from a frequentist perspective. These include studies on consistency of point estimators including \citep{Pierre2020, Pati2018, Yang2020, Zhang2020}, as well as Bernstein-von-Mises type theorems including \cite{Zhang2024} and \cite{Blei2019}. While these provide powerful general frameworks, they often require conditions that are difficult to verify in our setting. 

Instead, we build upon the approach of \cite{BoWang2006}, which allows for a direct analysis of the actual CAVI output. We extend Theorem 1 of \cite{BoWang2006} on consistency of the posterior mean to the case of mixed type data with likelihood \Cref{eq: likelihoodVIzintegratedout}, and further establish a contraction rate for the limiting CAVI variational posterior. Our approach relaxes a requirement in \cite{BoWang2006} that the responsibilities $r_{ik}$ and its derivatives converge uniformly as functions of $x_i$. Instead we use a dominated convergence condition (Lemma D.4 in the supplement) which is much easier to verify.

The CAVI algorithm derived in the previous section iteratively updates the variational hyperparameters $\hat{\alpha}_k$, $\hat{m}_{k}$, $\hat{\beta}_{k}$, $\hat{\Phi}_{k}$, $\hat{\nu}_{k}$, $\hat{\eta}_{k,g}$. Denoting with an upper-script $t$ the value of these hyperparameters at the $t$-th iteration of the CAVI algorithm, $\hat{\alpha}_k^t$, $\hat{m}_{k}^t$, $\hat{\beta}_{k}^t$, $\hat{\Phi}_{k}^t$, $\hat{\nu}_{k}^t$, $\hat{\eta}_{k,g}^t$ are deterministic functions of $\hat{\alpha}_k^{t-1}$, $\hat{m}_{k}^{t-1}$, $\hat{\beta}_{k}^{t-1}$, $\hat{\Phi}_{k}^{t-1}$, $\hat{\nu}_{k}^{t-1}$, $\hat{\eta}_{k,g}^{t-1}$ where the transformation is computed by evaluating the intermediate quantities $r_{ik}^{(t-1)}$. Instead of directly working with the variational hyper-parameters, we will focus on related quantities defined as follows:
\begin{align} \label{eq: simplifiedestimates} 
\hat{\pi}_k^{t}&=\frac{1}{n}\sum_{i=1}^{n} r_{ik}^{(t-1)}, \quad \hat{\mu}_k^{t}=\frac{\sum_{i=1}^{n} r_{ik}^{(t-1)}x_i}{\sum_{i=1}^{n} r_{ik}^{(t-1)}}, \quad \hat{\psi}_{k,g}^{t}=\frac{ \sum_{i=1}^{n} I_{\{c_i=g\}} r_{ik}^{(t-1)} }{ \sum_{i=1}^{n} r_{ik}^{(t-1)} } \nonumber \\
\hat{\Lambda}_k^{t}&=(\sum_{i=1}^{n} r_{ik}^{(t-1)} )(\sum_{i=1}^{n} r_{ik}^{(t-1)} (x_i-\hat{\mu}_k^{t})(x_i-\hat{\mu}_k^{t})^T )^{-1}.
\end{align}  
The newly introduced quantities $\hat{\pi}_k^{t}, \hat{\mu}_k^{t}, \hat{\Lambda}_k^{t}, \hat{\psi}_{k,g}^{t}$ are closely related to the CAVI posterior means $\mathbb{E}_{\pi} [\pi_{k}]$, $\mathbb{E}_{\mu} [\mu_{k}]$, $\mathbb{E}_{\Lambda} [\Lambda_{k}]$, $\mathbb{E}_{\psi} [\psi_{k}]$ respectively. More precisely, $\hat{\pi}_k^{t}$ is equal to the CAVI posterior mean of $q(\pi_k)$ parametrised by $\hat\alpha_k^t$ in the case where the prior hyperparameter $\alpha=0$. Similar connection can be done for each new quantity and the associated variational posterior mean in the case where the prior hyperparameters $\beta, \nu, \Phi, \eta$ are set to $0$. It turns out that analysing these somewhat simplified quantities are sufficient in the sense that if they converge as $n \rightarrow \infty$, then the corresponding variational posterior means will asymptotically converge to the same limits.

Like the variational hyperparameters, the newly introduced quantities $\hat{\pi}_k^{t}$,  $\hat{\mu}_k^{t}$,  $\hat{\Lambda}_k^{t}$,  $\hat{\psi}_{k,g}^{t}$ can be iteratively computed at each iteration by evaluating functions of the intermediate terms $r_{ik}^{(t-1)}$. Denoting by $\Theta^{t}$ the set of all quantities $\{\hat{\pi}_k^{t}, \hat{\mu}_k^{t}, \hat{\Lambda}_k^{t}, \{\hat{\psi}_{k,g}^{t}\}_g\}_k$ at iteration $t$, then as $r_{ik}^{(t-1)}$ are functions of the variational hyperparameters at $t-1$ and therefore $\Theta^{t-1}$, we can write $\Theta^t=T(\Theta^{(t-1)})$ for some deterministic transformation $T$ which encapsulates our proposed CAVI algorithm. In the following we define this function $T$ more formally. To do so we introduce some vectorized notation and define the required vector space for the domain of our function. We drop the upper script $t$ so that the quantities such as $\hat{\pi}_k$ represent a generic input to the function $T$. 

Let $\hat{\pi}_k \in \mathbb{R}$, $\hat{\mu}_k \in \mathbb{R}^q$, $\hat{\Lambda}_k \in S^n(\mathbb{R})$, the set of real, symmetric $q$ by $q$ matrices, and $\hat{\psi}_k \in \mathbb{R}^d$. Next, let $\mathbb{S}_{\pi}, \mathbb{S}_{\mu}, \mathbb{S}_{\Lambda},  \mathbb{S}_{\psi}$ be the K-fold direct sums of $\mathbb{R}$, $\mathbb{R}^q$, $S^n(\mathbb{R})$, $\mathbb{R}^d$ respectively. Then we let $\bm{\hat{\pi}}=(\hat{\pi}_1, \dots, \hat{\pi}_K) \in \mathbb{S}_{\pi}$, $\bm{\hat{\mu}}=(\hat{\mu}_1, \dots, \hat{\mu}_K) \in \mathbb{S}_{\mu}$, $\bm{\hat{\Lambda}} = (\hat{\Lambda}_1, \dots, \hat{\Lambda}_K) \in \mathbb{S}_{\Lambda}$, $\bm{\hat{\psi}} = (\hat{\psi}_1, \dots, \hat{\psi}_K)  \in \mathbb{S}_{\psi}$. Let $\Theta=(\bm{\hat{\pi}}, \bm{\hat{\mu}}, \bm{\hat{\Lambda}}, \bm{\hat{\psi}})$ belongs to direct sum space $\mathbb{S}_{\Theta}=\mathbb{S}_{\pi} \bigoplus \mathbb{S}_{\mu} \bigoplus \mathbb{S}_{\Lambda} \bigoplus \mathbb{S}_{\psi}$. We also let $\Theta^{*}=(\bm{\pi}^{*},\bm{\mu}^{*},\bm{\Lambda}^{*},\bm{\psi}^{*})$ be the true parameter, and the true data generating distribution be $P_{\Theta^*}$. As the notation implies, $\mathbb{S}_{\pi}, \mathbb{S}_{\mu}, \mathbb{S}_{\Lambda},  \mathbb{S}_{\psi}$ are vector spaces which contain possible values of the corresponding parameters $\bm{\hat{\pi}}, \bm{\hat{\mu}}, \bm{\hat{\Lambda}}, \bm{\hat{\psi}}$ and $\mathbb{S}_{\Theta}$ a vector space which contains the possible values of the overall parameter $\Theta$.


As the $r_{ik}^{(t-1)}$ are functions of the variational hyperparameters at $t-1$, they are also functions of $\Theta^{t-1}$, as the variational hyperparameters can be expressed in terms of $\Theta^{t-1}$. So from here on we will use notation such as $r_{ik}(\Theta^{t-1})=r_{ik}^{(t-1)}$ or $r_{ik}(\Theta)$ when highlighting the dependence of $r_{ik}$ on $\Theta$ instead\footnote{Note $r_{ik}(\Theta)$ is not well defined for all $\Theta$ as some $\Theta$ outside the support of the prior results in variational hyperparameters outside the hyperparameter space, but for those $\Theta$ we can trivially map $r_{ik}(\Theta)$ to a value like $1/K$.}. Let:
\begin{align} \label{eq: simplifiedestimatesfun}
&\Pi_k(\Theta)=\frac{1}{n}\sum_{i=1}^{n} r_{ik}(\Theta), \quad  M_k(\Theta) =\frac{\sum_{i=1}^{n} r_{ik}(\Theta) x_i}{\sum_{i=1}^{n} r_{ik}(\Theta) } , \quad \Psi_k(\Theta)_g  =\frac{ \sum_{i=1}^{n} I_{\{c_i=g\}} r_{ik}(\Theta) }{ \sum_{i=1}^{n} r_{ik}(\Theta) } \\  &S_k(\Theta) = (\sum_{i=1}^{n} r_{ik}(\Theta) ) (\sum_{i=1}^{n} r_{ik}(\Theta) (x_i-M_k(\Theta))(x_i-M_k(\Theta))^T )^{-1}.
\end{align}
These functions denote the maps which takes the simplified estimates $\Theta^{(t)}$ from one iteration to the next. Vectorising these maps:
\begin{align}
\Pi(\Theta) &= \begin{bmatrix} \Pi_1(\Theta) \\  \vdots \\ \Pi_K(\Theta) \end{bmatrix},
M(\Theta) = \begin{bmatrix} M_1(\Theta) \\  \vdots \\ M_K(\Theta) \end{bmatrix}, 
S(\Theta) = \begin{bmatrix} S_1(\Theta) \\  \vdots \\ S_K(\Theta) \end{bmatrix},
\Psi(\Theta) = \begin{bmatrix} \Psi_1(\Theta) \\  \vdots \\ \Psi_K(\Theta) \end{bmatrix}
\end{align}
we can define the iterative procedure $\Theta^{(t)}=T(\Theta^{(t-1)})$, where:
\begin{align}
T(\Theta)=\begin{bmatrix} \Pi(\Theta) \\ M(\Theta) \\ S(\Theta) \\ \Psi(\Theta) \end{bmatrix}
\end{align}
For theoretical purposes we consider a generalised iterative procedure, the map $T^{\epsilon}$ defined for $0 < \epsilon < 2$ as: 
\begin{align} \label{eq: iterativeprocedure} 
\Theta^{(t)}= T^{\epsilon}(\Theta^{(t-1)})=(1-\epsilon)\Theta^{(t-1)}+\epsilon T(\Theta^{(t-1)})
\end{align}
Note $T^{1}(\Theta^{(t-1)})=T(\Theta^{(t-1)})$, so that $\epsilon=1$ is our CAVI algorithm. It is suggested in \cite{BoWang2006} that different choices of $\epsilon$ could improve convergence rate relative to the base case of $\epsilon=1$, but we do not investigate this in this paper. 

For our convergence analysis, we need to define a suitable norm on $\mathbb{S}_{\Theta}$. To define this norm, consider the following inner products:
\begin{align}
& \langle u_1,u_2 \rangle_{\pi_k}=u_1(\frac{1}{\pi_{k}^*})u_2  \qquad u_1,u_2 \in \mathbb{R} \nonumber \\
& \langle v_1,v_2 \rangle_{\mu_k}=v_1^T(\pi_k^* \Lambda_k^*)v_2 \qquad v_1,v_2 \in \mathbb{R}^q \nonumber \\
& \langle W_1,W_2 \rangle_{\Lambda_k}=\frac{\pi_k^*}{2}Tr(W_1\Lambda_k^{*-1}W_2\Lambda_k^{*-1}) \quad W_1,W_2 \in S^n(\mathbb{R}) \nonumber \\
& \langle a_2,a_2 \rangle_{\psi_k} = a_1^T diag(\frac{\pi_k^*}{\psi_k^*}) a_2 \quad a_1, a_2 \in \mathbb{R}^d
\end{align}
where $diag(\frac{\pi_k^*}{\psi_k^*})$ is a $d$ by $d$ diagonal matrix with entries $\frac{\pi_k^*}{\psi_{k,g}^*}$, $g=1,\dots,d$. We can easily show that $\langle W_1,W_2 \rangle_{\Lambda_k}$ is indeed an inner product.\footnote{$\langle W_1,W_2 \rangle_{\Lambda_k}$ is an inner product. Indeed, $Tr(W_2\Lambda_k^{*-1}W_1\Lambda_k^{*-1})=Tr(W_1\Lambda_k^{*-1}W_2\Lambda_k^{*-1})$ by the cyclic property of trace so symmetry is satisfied. For $\alpha, \beta \in \mathbb{R}, W_1,W_2,W_3 \in S^n(\mathbb{R})$, 
\begin{align*}
&Tr((\alpha W_1+\beta W_3)\Lambda_k^{*-1}W_2\Lambda_k^{*-1}) \\
&=\alpha Tr(W_1\Lambda_k^{*-1}W_2\Lambda_k^{*-1}) + \beta Tr(W_3\Lambda_k^{*-1}W_2\Lambda_k^{*-1})
\end{align*}so linearity in the first argument is satisfied. Lastly, $$Tr(W\Lambda_k^{*-1}W\Lambda_k^{*-1})= Tr( (\Lambda_k^{*-\frac{1}{2}} W \Lambda_k^{*-\frac{1}{2}} )(\Lambda_k^{*-\frac{1}{2}} W \Lambda_k^{*-\frac{1}{2}})^T ) \geq 0 $$ so we have positive definiteness. }
Using the above inner products, we define $\langle B_1,B_2 \rangle$, on $B_1,B_2 \in \mathbb{S}_{\Theta}$ by simply adding the inner products of the constituent spaces: 
\begin{align}\label{eq: innerproductdef}
\langle .,. \rangle=\sum_{k=1}^{K}( \langle .,. \rangle_{\pi_k}+ \langle .,. \rangle_{\mu_k}+ \langle .,. \rangle_{\Lambda_k}+ \langle .,. \rangle_{\psi_k})
\end{align}
This inner product induces a vector norm on $\mathbb{S}_{\Theta}$ via  $\| B \|^2=\langle B,B \rangle$. \\

We can now state the main result:

\begin{theorem}\label{thm: maintheorem}
Almost surely  as the sample size $ n \rightarrow \infty$, the iterative procedure defined in \Cref{eq: iterativeprocedure} converges locally to the true data generating parameter $\Theta^{*}$ whenever $0 < \epsilon <2$. Locally here means whenever the starting values are sufficiently near $\Theta^{*}$.  
\end{theorem}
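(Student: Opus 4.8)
The plan is to view $T^{\epsilon}$ as a smooth self-map of a neighbourhood of $\Theta^{*}$ and to apply the standard local-convergence criterion for fixed-point iterations (Ostrowski's theorem): if $\Theta^{*}$ is a fixed point of $T^{\epsilon}$ and the spectral radius of the Jacobian $D T^{\epsilon}(\Theta^{*})$ is strictly below $1$, then the iteration converges to $\Theta^{*}$ from every starting point in a sufficiently small neighbourhood. Since $T^{\epsilon}=(1-\epsilon)\,\mathrm{Id}+\epsilon T$ we have $D T^{\epsilon}(\Theta^{*})=(1-\epsilon)I+\epsilon\,DT(\Theta^{*})$, so it suffices to (i) show that $\Theta^{*}$ is, with probability one as $n\to\infty$, a fixed point of $T$, and (ii) show that $DT(\Theta^{*})$ is, in the $n\to\infty$ limit, self-adjoint for the inner product of \Cref{eq: innerproductdef} with all its eigenvalues in $[0,1)$. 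Indeed, if $DT(\Theta^{*})$ has an eigenvalue $\lambda\in[0,1)$ then $DT^{\epsilon}(\Theta^{*})$ has the eigenvalue $1-\epsilon(1-\lambda)$, which lies in $(-1,1)$ for every $0<\epsilon<2$, so the spectral radius is below $1$. Because $T$ and $DT$ are random (they depend on the data through the responsibilities $r_{ik}$), I would first show that, almost surely, $T$ and $DT$ converge uniformly on a fixed compact neighbourhood of $\Theta^{*}$ to deterministic limits $T_{\infty}$, $DT_{\infty}$, carry out (i)--(ii) for $T_{\infty}$, and transfer the conclusion back to $T$ by a routine perturbation-of-fixed-point argument.

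For (i) I would analyse the responsibilities at $\Theta=\Theta^{*}$. By \Cref{eq: simplifiedtohyperpara} the variational hyperparameters built from $\Theta^{*}$ grow linearly in $n$, so the expectations appearing in $\rho_{ik}$ --- namely $\mathbb{E}_{\pi}[\ln\pi_{k}]$, $\mathbb{E}_{\mu,\Lambda}[\ln\mathcal{N}(x_{i}|\mu_{k},\Lambda_{k}^{-1})]$ and $\mathbb{E}_{\psi}[\ln\psi_{k,c_{i}}]$ --- converge, using digamma and log-determinant asymptotics, to $\ln\pi_{k}^{*}$, $\ln\mathcal{N}(x_{i}|\mu_{k}^{*},\Lambda_{k}^{*-1})$ and $\ln\psi_{k,c_{i}}^{*}$, whence $r_{ik}(\Theta^{*})$ converges to the true membership probability $\gamma_{k}(x_{i},c_{i}):=p(z_{i}=k\mid x_{i},c_{i},\Theta^{*})$. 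The strong law of large numbers then yields $\Pi_{k}(\Theta^{*})\to\mathbb{E}[\gamma_{k}(x,c)]=\pi_{k}^{*}$, $M_{k}(\Theta^{*})\to\mathbb{E}[\gamma_{k}(x,c)x]/\pi_{k}^{*}=\mu_{k}^{*}$, $S_{k}(\Theta^{*})\to\Lambda_{k}^{*}$, and $\Psi_{k}(\Theta^{*})_{g}\to\psi_{k,g}^{*}$ almost surely, i.e. $T_{\infty}(\Theta^{*})=\Theta^{*}$. One also has to check the maps are well defined near $\Theta^{*}$, e.g. that the weighted scatter matrix inside $S_{k}$ stays invertible, which follows from $\pi_{k}^{*}>0$ and full-rank $\Lambda_{k}^{*}$.

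For (ii), the crux, I would differentiate \Cref{eq: simplifiedestimatesfun}. Using that the $r_{ik}$ are softmax-type functions of the variational expectations, the Jacobian factors, in the limit, as $DT_{\infty}(\Theta^{*})=A^{-1}I_{\mathrm{mis}}=I-A^{-1}B$, the variational analogue of the classical EM convergence rate $I_{\mathrm{com}}^{-1}I_{\mathrm{mis}}$: here $A=I_{\mathrm{com}}(\Theta^{*})$ is the block-diagonal complete-data Fisher information at $\Theta^{*}$ --- whose blocks are exactly the weights $1/\pi_{k}^{*}$, $\pi_{k}^{*}\Lambda_{k}^{*}$, $W\mapsto\tfrac{\pi_{k}^{*}}{2}\Lambda_{k}^{*-1}W\Lambda_{k}^{*-1}$, $\mathrm{diag}(\pi_{k}^{*}/\psi_{k}^{*})$ defining the inner product of \Cref{eq: innerproductdef} --- while $B=I_{\mathrm{obs}}(\Theta^{*})=A-I_{\mathrm{mis}}$ is the Fisher information of the marginal mixture likelihood \Cref{eq: likelihoodVIzintegratedout} and $I_{\mathrm{mis}}\succeq 0$. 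Self-adjointness of $DT_{\infty}(\Theta^{*})$ for $\langle\cdot,\cdot\rangle=\langle A\,\cdot,\cdot\rangle_{\mathrm{euc}}$ then follows from symmetry of $A$ and $B$; its eigenvalues are $1-\mu$ with $\mu$ the eigenvalues of $A^{-1}B$, which lie in $(0,1]$ because $0\prec B\preceq A$ --- the upper bound is $I_{\mathrm{mis}}\succeq0$, and the strict lower bound $B=I_{\mathrm{obs}}(\Theta^{*})\succ0$ is precisely the nonsingularity of the Fisher information of a finite mixture with correctly specified $K$ and $d$ and strictly positive $\pi_{k}^{*}$, $\psi_{k,g}^{*}$, i.e. local identifiability. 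Hence the eigenvalues of $DT_{\infty}(\Theta^{*})$ lie in $[0,1)$ and, by the first paragraph, those of $DT^{\epsilon}_{\infty}(\Theta^{*})$ in $(-1,1)$; Ostrowski's theorem concludes.

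The main obstacle is (ii): verifying the factorisation $DT_{\infty}(\Theta^{*})=I-A^{-1}B$ and the self-adjointness with respect to the weighted inner product. This is exactly where the a priori mysterious choice of weights in \Cref{eq: innerproductdef} is forced, and it requires careful differentiation of the matrix-valued map $S_{k}$ and of the digamma/log-determinant expectations, together with the CAVI-specific feature that the responsibilities are evaluated at the variational means rather than at sampled parameters --- this is what makes the cross-terms vanish and leaves a clean $A^{-1}B$. Relative to Theorem~1 of \cite{BoWang2006}, the new work is carrying the extra categorical block $\psi$ through the same computation and checking that it contributes an additional positive-definite summand to both $A$ and $B$ without disturbing the block structure; the SLLN arguments of (i), the uniform a.s. convergence of $T$ and $DT$ on a neighbourhood of $\Theta^{*}$, and the final appeal to Ostrowski's theorem are standard once the spectral bound is in place.
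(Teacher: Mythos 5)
Your strategy is essentially the paper's proof: the supplement likewise shows $r_{ik}(\Theta^*)\to \pi_k^*\phi_k^*/\phi^*$ and $T(\Theta^*)\to\Theta^*$ via the same digamma/SLLN asymptotics, computes the limiting Jacobian as $I-C^{\psi}-\mathbb{E}[LR]$ — exactly your $I-I_{\mathrm{com}}^{-1}I_{\mathrm{obs}}$, with $I_{\mathrm{obs}}\preceq I_{\mathrm{com}}$ obtained from the inequality $(\sum_k\eta_k\xi_k)^2\le\sum_k\eta_k\xi_k^2$, i.e.\ your missing-information principle — and concludes by the same self-adjointness/operator-norm argument in the weighted inner product (the explicit contraction estimate playing the role of Ostrowski's theorem). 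The one caveat to your literal factorisation $I-A^{-1}B$ is that the complete-data information on the unconstrained $\psi$-coordinates is singular (its kernel contains $\psi_k^*$), which is why the paper instead keeps the nondegenerate weight $\mathrm{diag}(\pi_k^*/\psi_{k,g}^*)$ in the inner product and carries the rank-one correction $C^{\psi}$ separately through the spectral bound.
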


The proof of \Cref{thm: maintheorem} is contained in the supplement. The proof strategy involves showing that almost surely  as $ n \rightarrow \infty$, the map $T^{\epsilon}(\Theta)$ is locally contractive near $\Theta^{*}$. Specifically, there exists $0 \leq \lambda<1$ such that:
\begin{align}\label{eq: contractiondef}
\| T^{\epsilon}(\Theta)-T^{\epsilon}(\Theta^*) \| \leq \lambda \| \Theta-\Theta^* \|
\end{align}
for $\Theta$ sufficiently near $\Theta^{*}$. To show \Cref{eq: contractiondef}, by Taylor's theorem (see P315 of \cite{Bhatia1997}), we have: 
\begin{align}\label{eq: contractiontaylor}
&\| T^{\epsilon}(\Theta)-T^{\epsilon}(\Theta^*) \| \leq \| \nabla T^{\epsilon}(\Theta^*) \|_{op} \| \Theta-\Theta^* \| + O(\| \Theta-\Theta^* \|^2)
\end{align}
It is therefore sufficient to show $\nabla T^{\epsilon}(\Theta^*)$ converges almost surely (as $n \rightarrow \infty$) to an operator that has operator norm less than 1.

The vector norm that we use in the context of \Cref{eq: contractiondef} and \Cref{{eq: contractiontaylor}} is induced by the inner product  defined in \Cref{eq: innerproductdef}. Similarly, the operator norm used in \Cref{{eq: contractiontaylor}} on $\| \nabla T^{\epsilon}(\Theta^*) \|_{op}$ is defined as the operator norm with respect to the vector norm, so:
\begin{align}\label{eq: operatornorm}
\| A \|_{op}=\mathrm{sup}\{ \|AB \|: \|B \| \leq 1,  B \in \mathbb{S}_{\Theta} \}
\end{align}

The following corollaries are consequences of \Cref{thm: maintheorem} and rely on the same assumptions. Proofs are given in the supplement. 

\begin{corollary}\label{thm: posteriormeancorollary}
 Almost surely as $n \rightarrow \infty$, the variational posterior mean obtained by running the CAVI algorithm converges as $t \rightarrow \infty$ to the true data generating parameter $\Theta^{*}$.
\end{corollary}

\begin{corollary}\label{thm: MLEcorollary}
Let $\Theta^{\infty}$ be the limit of $\Theta^{(t)}$ as $t \rightarrow \infty$, and $\tilde{\Theta}$ be the strongly consistent maximum likelihood estimate of the parameter ${\Theta}$. Then, for sufficiently large sample size $n$, $\Theta^{\infty}=\tilde{\Theta}+O(\frac{1}{n})$ almost surely.
\end{corollary}

\begin{corollary}\label{thm: posteriorpredcorollary}
Almost surely as $n \rightarrow \infty$, the variational posterior predictive density described in \Cref{eq: posteriorpred} for $p=1$ converges as $t \rightarrow \infty$ pointwise to the likelihood density function evaluated at $\Theta^{*}$.
\end{corollary}

\begin{corollary}\label{thm: posteriorconsistencycorollary}
Let $Q_{n}^{\infty}(\Theta)$ denote the limiting CAVI posterior distribution for sample size $n$, obtained from initial values sufficiently close to $\Theta^{*}$. Then $Q_{n}^{\infty}(\Theta)$ is strongly consistent at $\Theta^*$; that is:
$$Q_{n}^{\infty}(\| \Theta-\Theta^*\|_2>\epsilon)  \rightarrow 0$$ 
for all $\epsilon>0$ as $n \rightarrow \infty$, almost surely. In addition, given regularity conditions of Theorem 3.1 of \citet{Redner1984} hold, then $Q_n^\infty$ contracts around $\Theta^*$ at rate $n^{-1/2}$, that is for every deterministic sequence $M_n\to\infty$:
\begin{equation}\label{eq:vb_contraction_rate}
Q_n^\infty\!\left(\big\|\Theta-\Theta^*\big\|_2 > \frac{M_n}{\sqrt{n}}\right)
\ \xrightarrow[]{P_{\Theta^*}}\ 0.
\end{equation}
\end{corollary}

\section{Experimental Results} \label{sec: results}

In this section, we conduct simulation experiments to assess and validate our method. For all simulated data in this section, the continuous part of the data was standardised before fitting the model. Throughout this section we take the following values for the prior hyperparameters: $m_{k}=0$ for all $k$, $\beta=1$, $\Phi=0.25I$, $\nu=q+K+1$, $\alpha=1/K$, $\eta_j=1/d_j$, where $d_j$ is the number of possible values of the $j$th categorical variable. The choice for $\alpha$ is motivated by \cite{Rousseau2011}, which showed that for data in $\mathbb{R}^d$, $\alpha<d/2$ causes the posterior of an overfitted mixture distribution with more components specified than the true number of components to set the weights of the extra components to $0$ as the sample size tends to infinity. We also initialise the CAVI algorithm using the Kprototypes algorithm, by setting the initial $r_{ik}$ to be equal to $0.9$ at the $k$ that Kprototypes predicts for each data point $(x_i,c_i)$, and $0.1$ otherwise.\footnote{Note that the $r_{ik}$ does not sum to $1$ over $k$ with this initialisation, but the initial $r_{ik}$ is not required to be probability vectors.} In this section, we will use $\Sigma_{k}=\Lambda_{k}^{-1}$ to refer to the covariance matrix of component $k$ 
to more directly compare the size of the component covariances to how much the component means $\{\mu_{k}\}_k$ are separated. 

\subsection{An illustrative one-dimensional example} \label{subsec: 1dexample}

For ease of visualisation, we begin with an example where the continuous data is of dimension $q=1$. 
We simulate $n=5000$ data with the following model parameters: $p=q=1$,  $K^{*}=3$,   $\pi^{*}_{1}=0.4$,  $\pi^{*}_{2}=0.35$, $\pi^{*}_{3}=0.25$, $\mu^{*}_{1}=0$, $\mu^{*}_{2}=2$, $\mu^{*}_{3}=5$, $\Sigma^{*}_{1}=0.1$, $\Sigma^{*}_{2}=0.2$ and  $\Sigma^{*}_{3}=0.5$. In addition, $\psi^{*}_{k,k}=0.6$ for all $1 \leq k \leq K$ and $\psi^{*}_{k,g}=0.2$ for all $1 \leq g \leq K$ with $g \neq k$.  Figures \ref{fig: simpleexamplemuSigma}, \ref{fig: simpleexamplePsi12} and \ref{fig: simpleexamplePsi3pi} compare the marginal posterior distributions obtained using our variational inference approach to the ones using a Gibbs sampler for the same model to examine whether the variational distribution obtained using our proposed CAVI algorithm is a suitable approximation of the posterior distribution. 
We observe that, for each parameter, the marginal variational distribution matches the Gibbs marginal posterior well, particularly in terms of posterior mode, but has slightly less variance than the Gibbs posterior, which is most noticeably for $\Sigma$. Overall the variational posterior provides a suitable estimate for the true, data generating value and performs similarly to the Gibbs sampler. From \Cref{fig: simpleexampleposteriorpred}, the variational posterior predictive distribution is shown to be close to the true likelihood functions for both continuous and categorical data, indicating that the variational posterior is a good fit for the data.

\begin{figure*}[h!]{
\includegraphics[width = 0.48\textwidth]{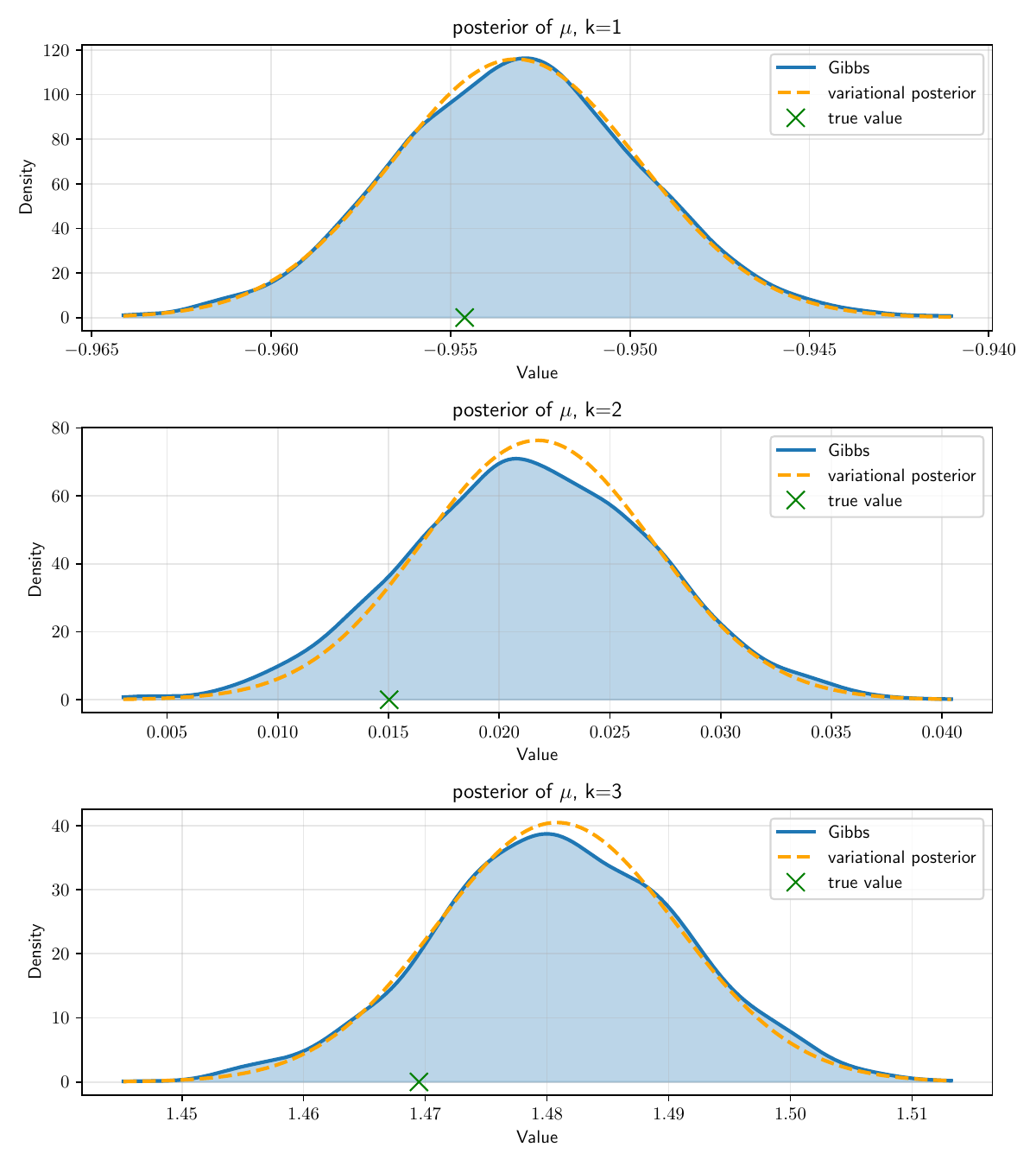}
\includegraphics[width = 0.48\textwidth]{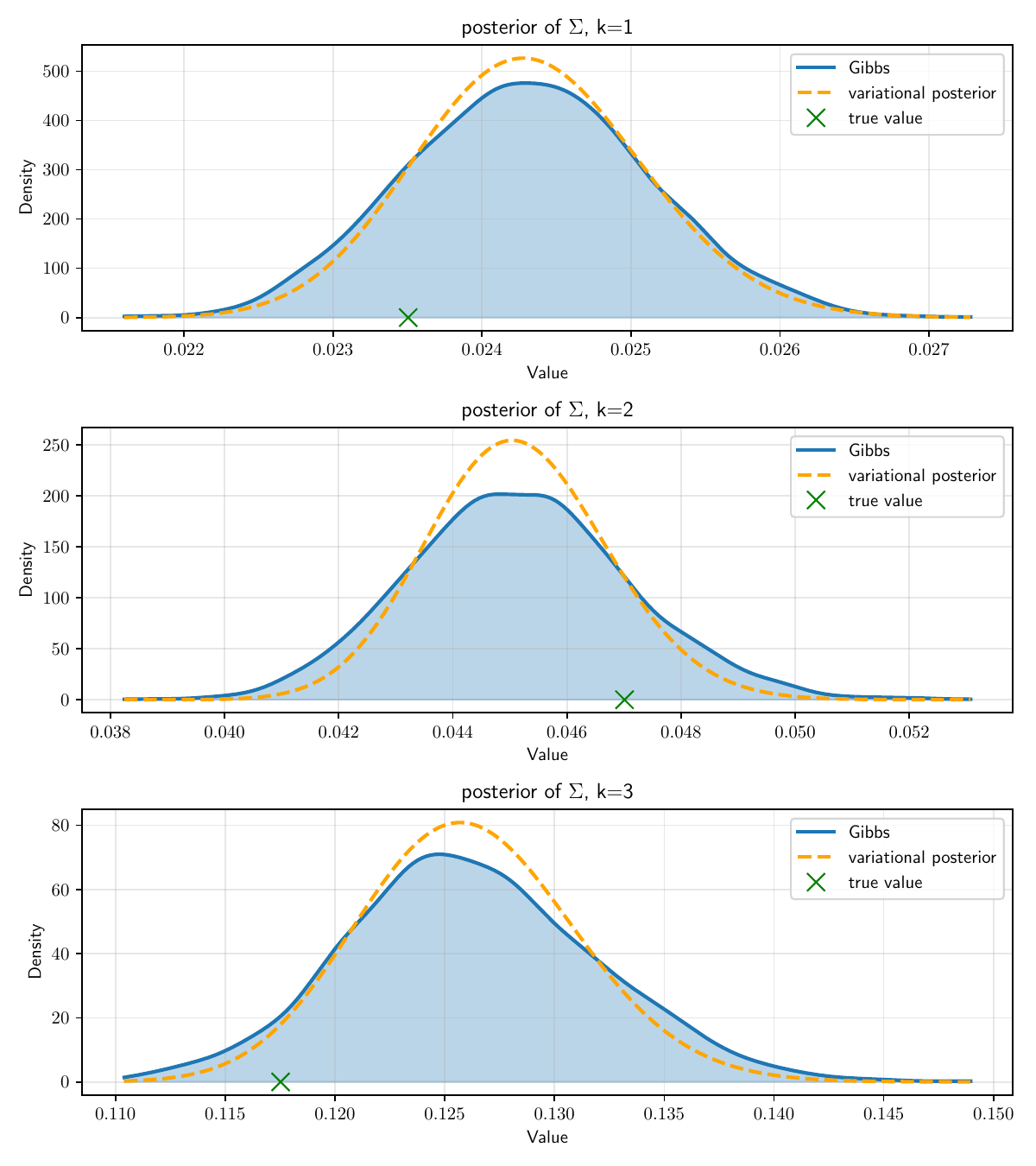}
}
\caption{Plot of marginal posterior distribution of $\mu$ and $\Sigma$ for both variational inference and Gibbs}
\label{fig: simpleexamplemuSigma}
\end{figure*}

\begin{figure*}[h!]{
\includegraphics[width = 0.48\textwidth]{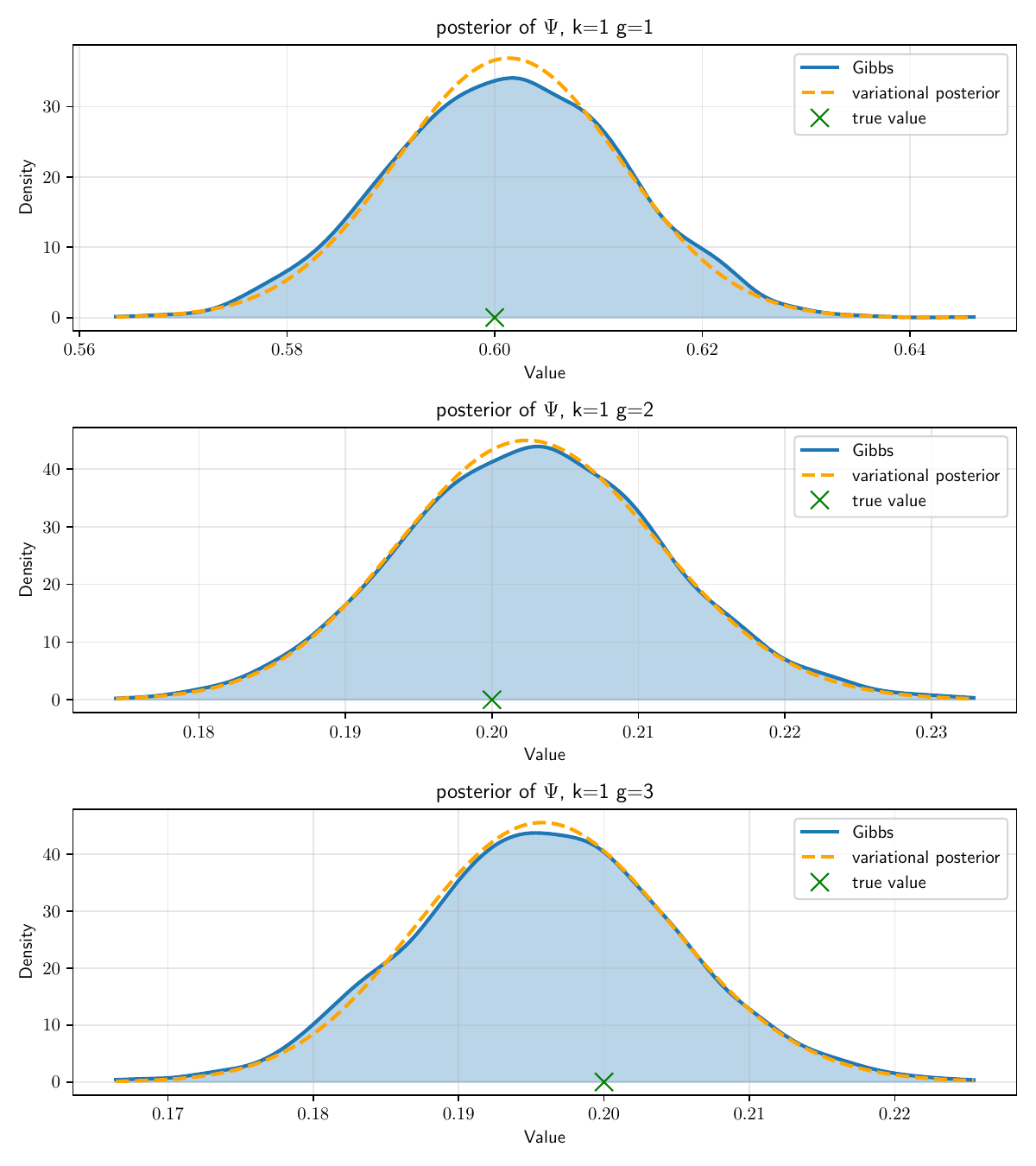}
\includegraphics[width = 0.48\textwidth]{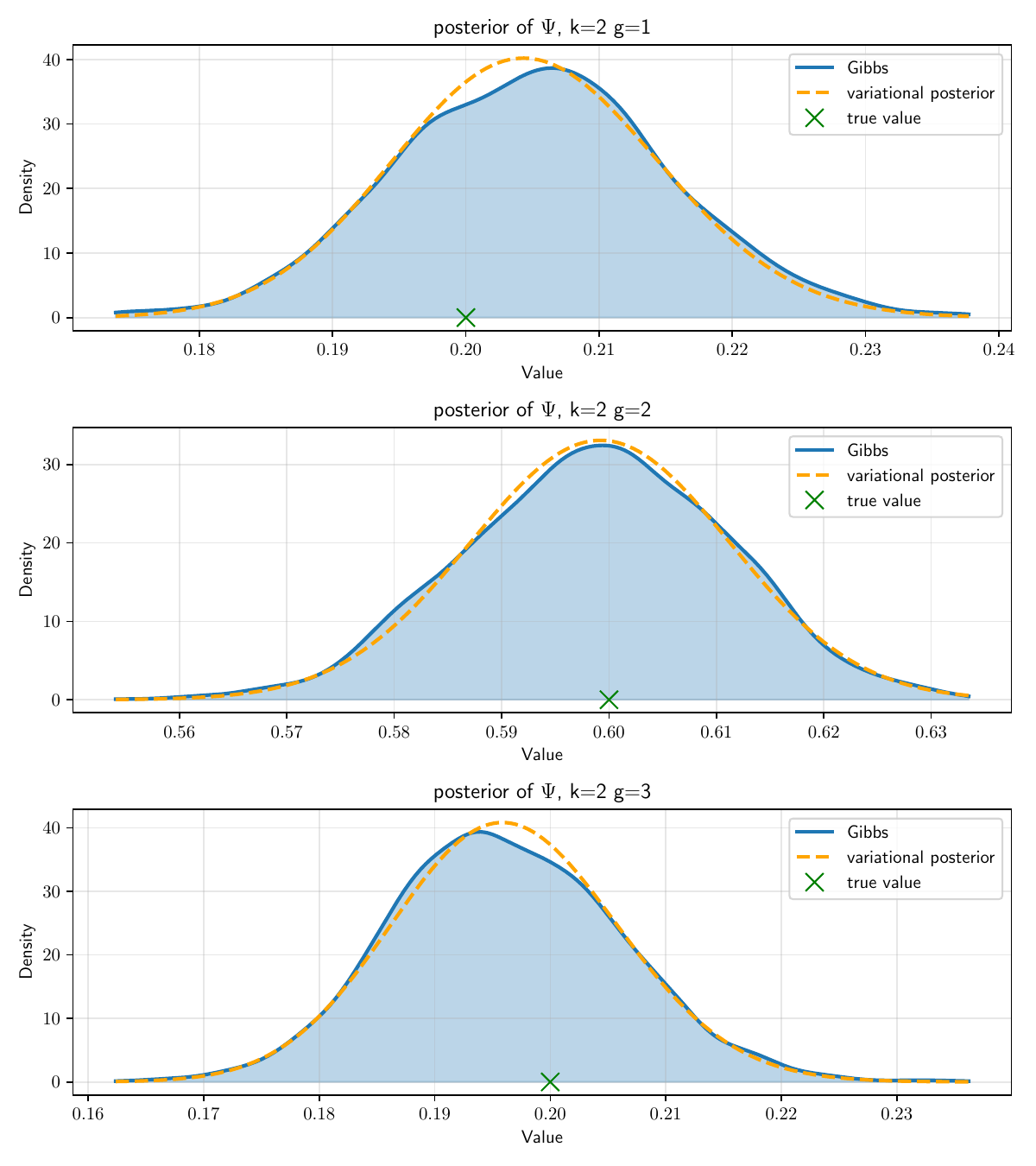}
}
\caption{Plot of marginal posterior distribution of $\psi_{1,g}, \psi_{2,g}$ for both variational inference and Gibbs}
\label{fig: simpleexamplePsi12}
\end{figure*}

\begin{figure*}[h!]{
\includegraphics[width = 0.48\textwidth]{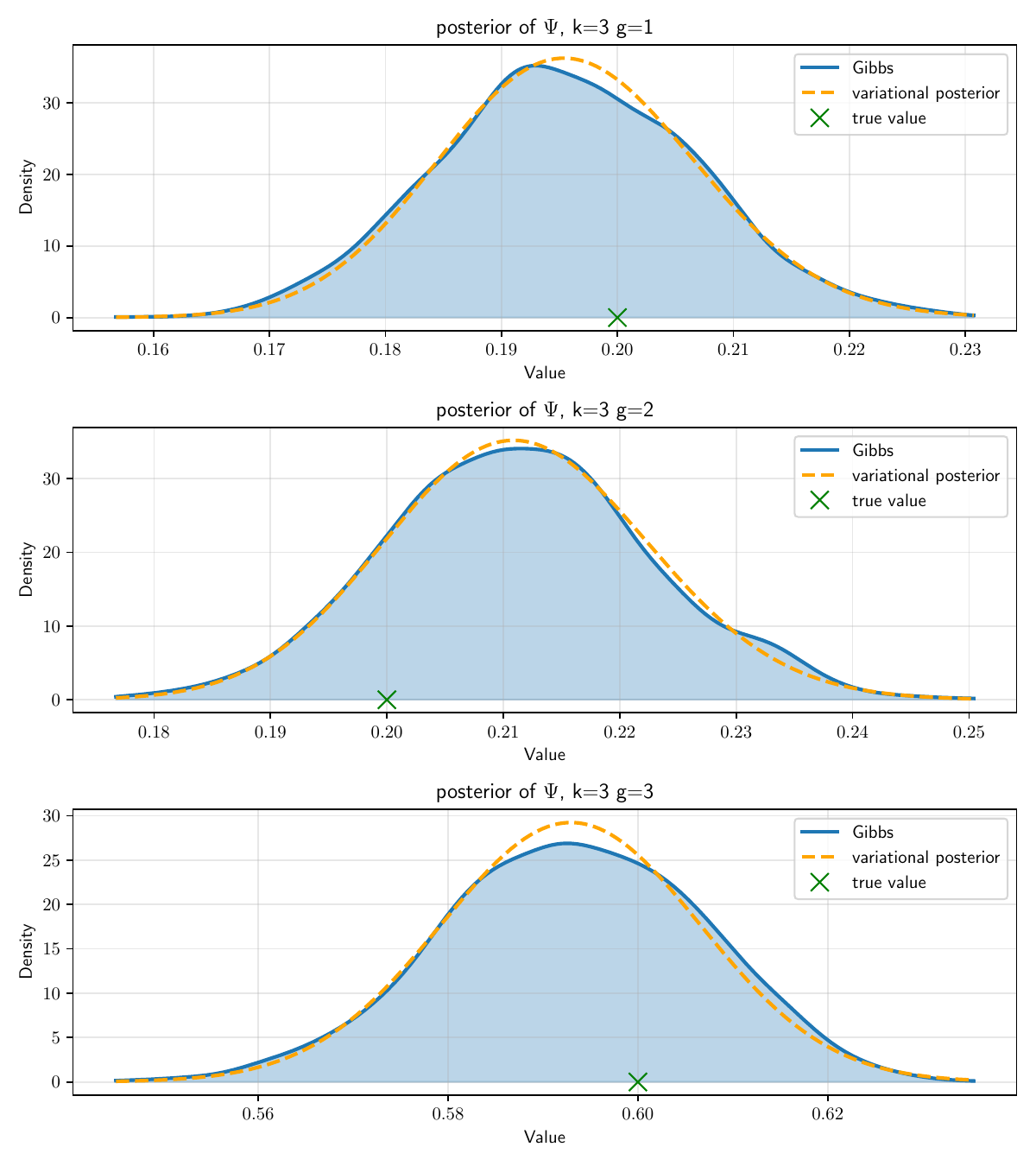}
\includegraphics[width = 0.48\textwidth]{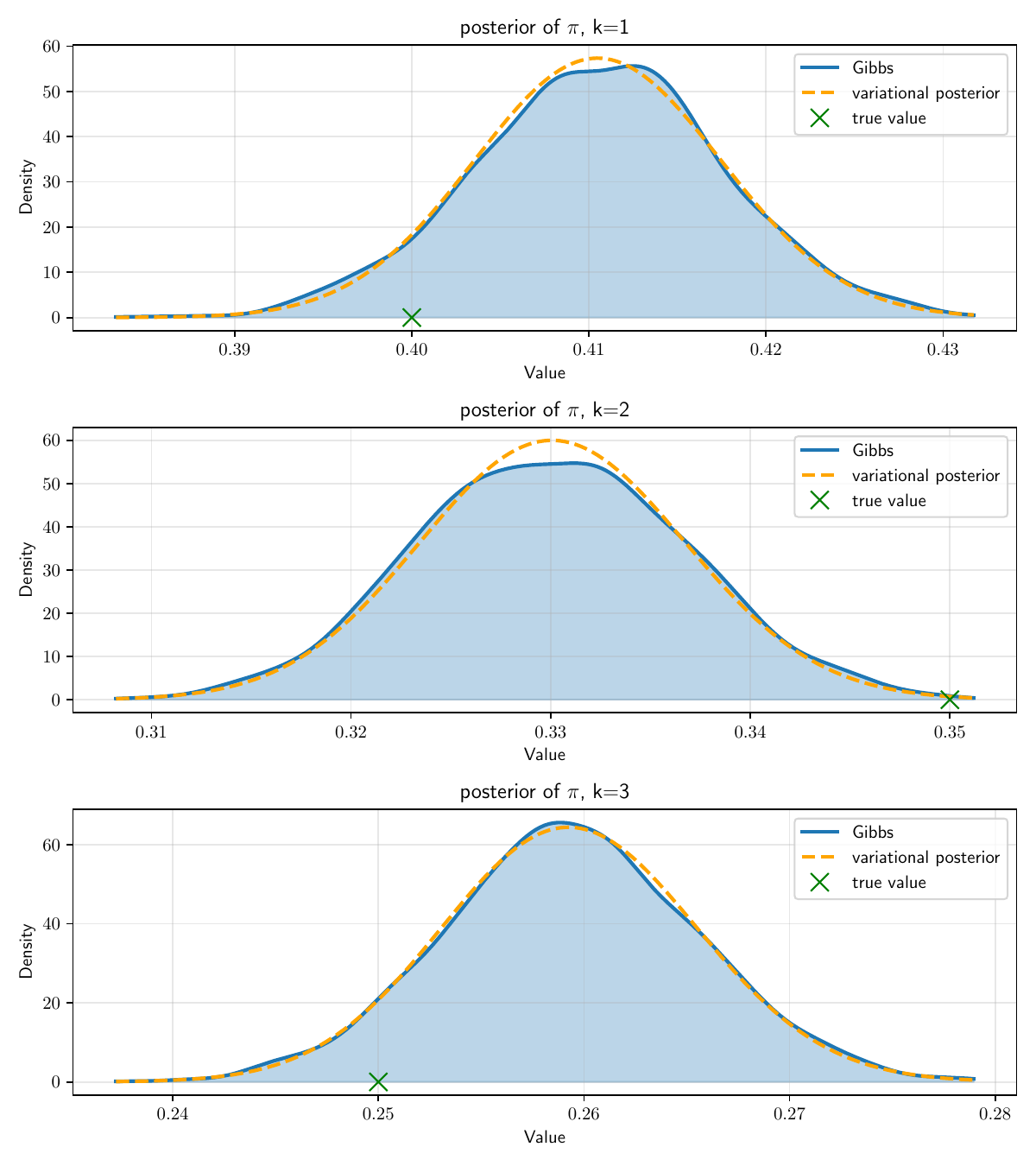}
}
\caption{Plot of marginal posterior distribution of $\psi_{3,g}, \pi$ for both variational inference and Gibbs}
\label{fig: simpleexamplePsi3pi}
\end{figure*}

\begin{figure*}[h!]{
\includegraphics[width = 0.55\textwidth]{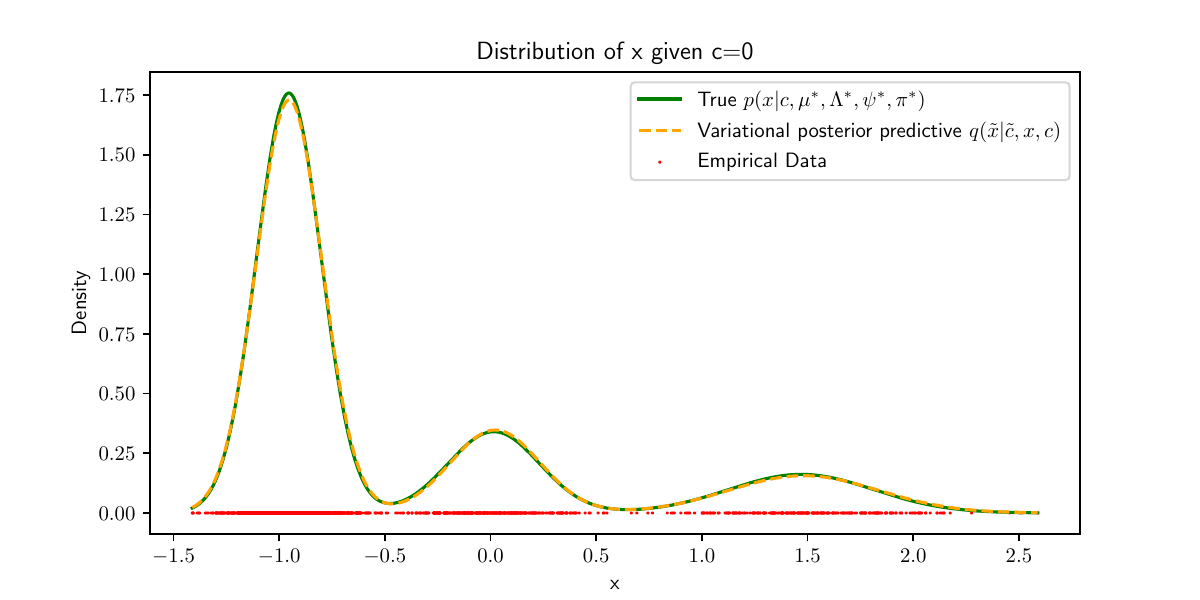}
\hspace{-1cm} 
\includegraphics[width = 0.55\textwidth]{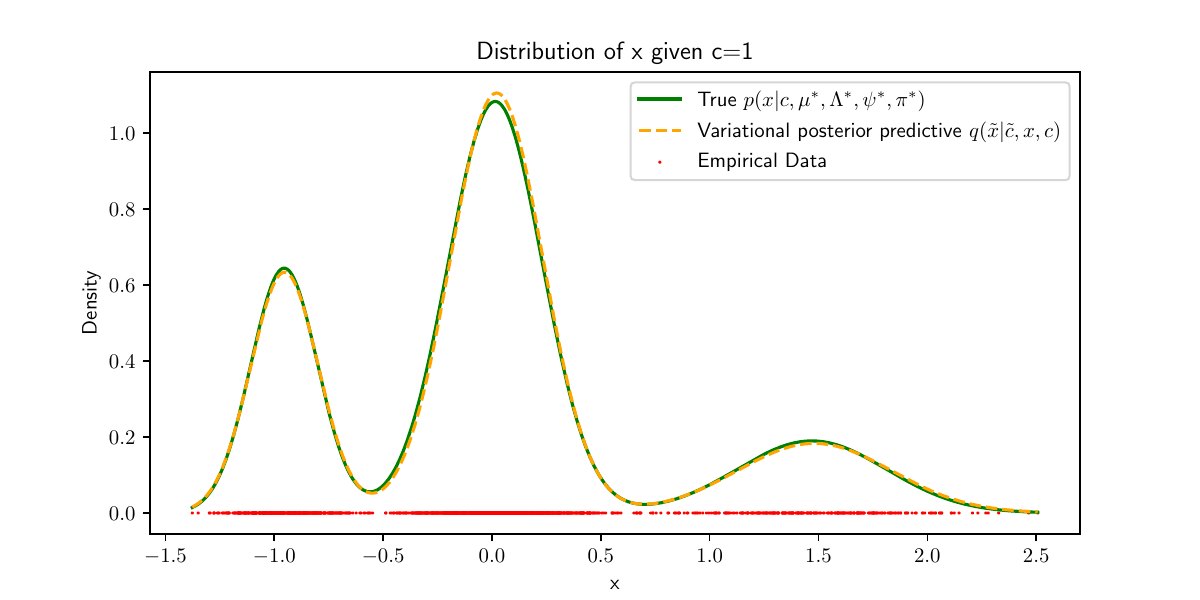}
\includegraphics[width = 0.55\textwidth]{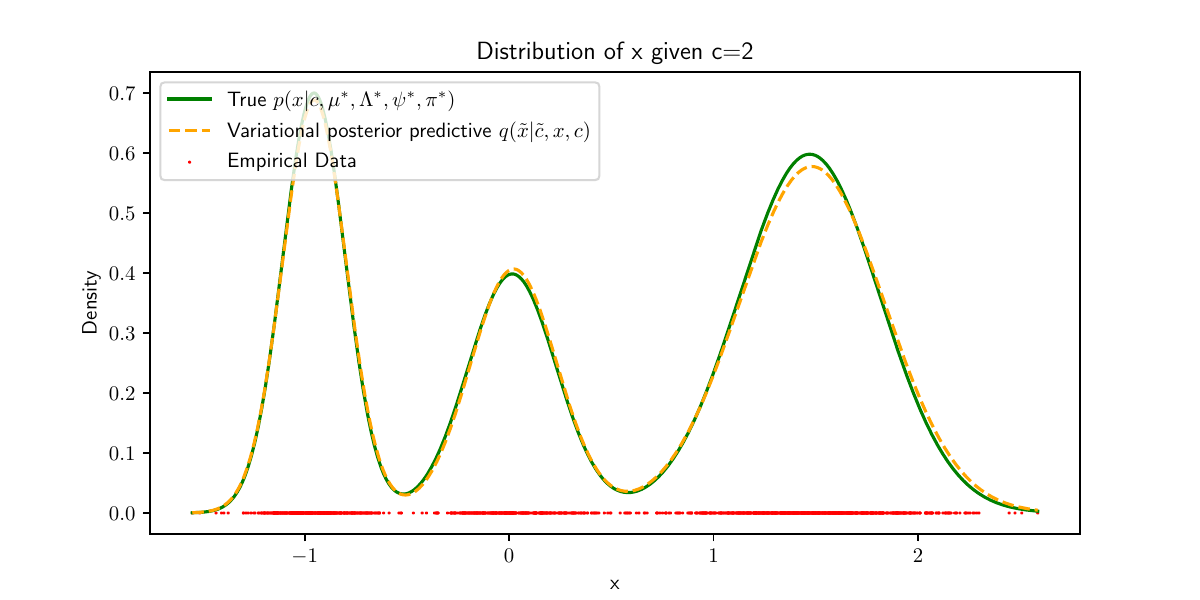}
\hspace{-1cm} 
\includegraphics[width = 0.4\textwidth]{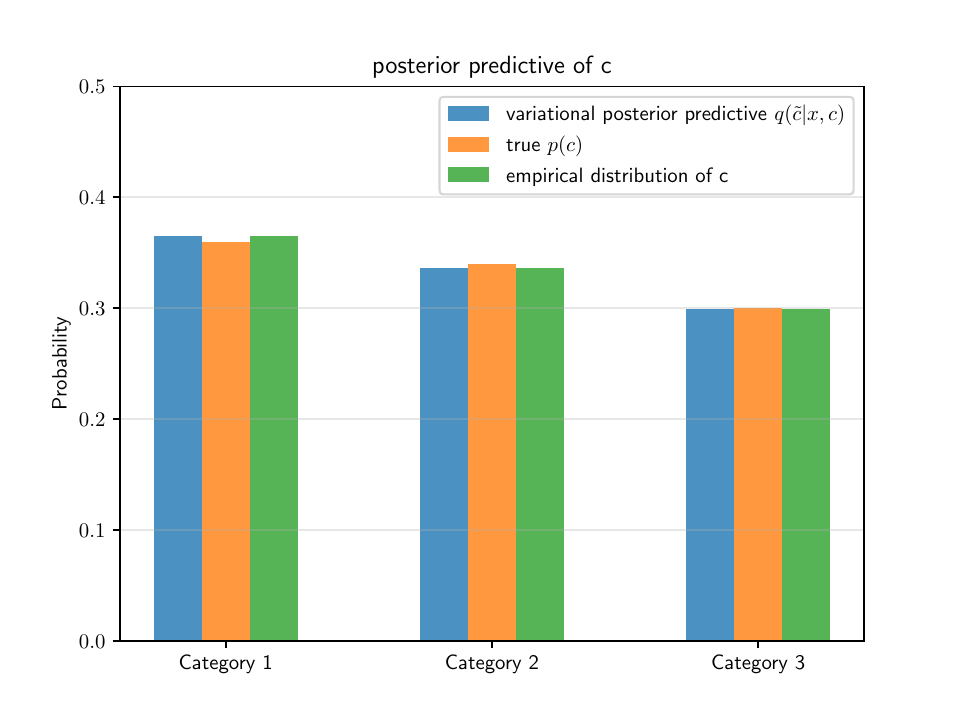}
}
\caption{Comparison of the variational posterior predictive distribution of $x$ given $c$, and $c$ to the empirical data and the likelihood function evaluated at the data generating parameters.}
\label{fig: simpleexampleposteriorpred}
\end{figure*}

\subsection{Simulation study}
We now conduct an extensive simulation study for three, higher dimensional scenarios, for data of various sample sizes $n$. In each scenario, we fix a set of true parameter values $(\mu^*, \Sigma^*, \psi^*, \pi^*)$ and $K^*$ components, and simulate both the component allocation $z_i^*$ as well as the observation $(x_i^*, c_i^*)$ for each data point $i=1,\dots n.$ Full detail of the data generating parameters for the three scenarios can be found in the supplement.

\subsubsection{Simulation set-ups and error metrics}
We consider the three following scenarios:
\begin{itemize}

\item In the first scenario, we simulate data with $\mu_{k}^{*}$ that are pairwise well separated, and $\Sigma_{k}^{*}$ with standard deviations that are small compared to the distance between the $\mu_{k}^{*}$. The categorical variable has a relatively weak effect on the mixture component. The idea of this scenario is to test the model in the case where the mixtures are largely driven by separation across different components in the continuous variables rather than categorical variables. 

\item For the second scenario, we instead use $\Sigma_{k}^{*}$ with standard deviations that are larger than the pairwise distance between $\mu_{k}^{*}$, so data generated from the mixture components substantially overlap. In contrast, the categorical variables are generated in a way to heavily correlate with the mixture component assignments. The idea of this scenario is to test the model in the case where the mixtures are largely driven by the categorical variables rather than the continuous variables. 

\item For the third scenario, we simulate data using parameters that is a middle ground between scenarios 1 and 2. Specifically, we use $\mu_{k}^{*}$ with separation that are similar in magnitude compared to the standard deviations of $\Sigma_{k}^{*}$, causing some continuous data overlap. We define a single categorical variable which is significantly correlated with the mixture component $z$, but not as strongly correlated compared to scenario 2. 

\end{itemize}

To assess the empirical performance of our algorithm, we employ several performance measures. First, to test parameter estimation, we calculate for $\pi, \mu, \Sigma, \psi$ the mean absolute error between the posterior mean of the variational posterior and the true parameter values:
\begin{align} \label{eq: paramerror} 
Error_{\mu}&=\frac{\sum_{k=1}^{K^{*}}  \left\Vert \mu^{*}_{k}-\hat{m}_{k} \right\Vert_1}{qK^{*} } \nonumber \\ Error_{\Sigma}&=\frac{\sum_{k=1}^{K^{*}} \left\Vert \Sigma^{*}_{k}-\frac{\hat{\Phi}_{k}}{\hat{\nu}_{k}-q-1} \right\Vert_1}{q^2K^{*} } \nonumber \\
Error_{\pi}&=\frac{\sum_{k=1}^{K^{*}} | \pi^{*}_{k}-\frac{\hat{\alpha}_k}{\sum_{k=1}^{K} \hat{\alpha}_k} | }{K^{*} }\\
Error_{\psi}&=\frac{\sum_{k=1}^{K^{*}} \sum_{j=1}^{p} \sum_{g_j=1}^{d_j} | \psi^{*}_{k,j,g_j}-\frac{\hat{\eta}_{k,j,g_j}}{\sum_{g_j'=1}^{d_j} \hat{\eta}_{k,j,g_j'} } | }{ (\sum_{j=1}^{p} d_j )K^{*} }\nonumber
\end{align}
Secondly, to test clustering capability, we compare for each data point the true component label $z^{*}_i$ with the highest variational posterior probability (`hard clustering'), given by $\argmax_{k} r_{ik}$. We use $Prop_z$ to denote the proportion of data points for which the component label matches the true component label. Third, to assess the density estimation, we sample additional sample size test data $n_{test}=min(0.4n,2000)$ and calculate for each data point $(\tilde{x}_i, \tilde{c}_i)$ in the test dataset, the mean absolute error between the loglikelihood for \Cref{eq: likelihoodVIzintegratedout} evaluated at the true data generating parameters $(\mu^*, \Lambda^*, \psi^*, \pi^*)$, and log of the posterior predictive distribution in \Cref{eq: posteriorpred} using the final parameter estimates obtained from running \Cref{alg:ourCAVI} in the training set. We calculate the average of this error over the test set $Error_{logppd}$:
\begin{align} \label{eq: logppderror} 
&Error_{logppd}=\frac{1}{n_{test}}\sum_{i=1}^{n_{test}}|\mathrm{ln}(p(\tilde{x}_i, \tilde{c}_i | \mu^{*}, \Lambda^{*}, \psi^{*}, \pi^{*})) -\mathrm{ln}(q(\tilde{x}_i, \tilde{c}_i|x,c)) | 
\end{align}

Lastly, to assess the uncertainty quantification over the parameters $(\mu, \Sigma, \psi, \pi)$, we examine the frequentist coverage of the posterior highest density intervals (HDI) of $\mu, \Sigma, \psi, \pi$ by counting the proportion of $95\%$ variational posterior HDI containing the data generating parameter values over $100$ runs of the variational inference procedure for independent datasets. 
Conveniently, the variational posterior density $q(\mu, \Lambda, \psi, \pi)$ as well as the marginal variational posterior densities $q(\mu)$, $q(\Lambda)$, $q(\psi)$, and $q(\pi)$ are all available in closed form, which allows us to easily calculate the HDI and check the frequentist coverage of  $\mu, \Sigma, \psi, \pi$ jointly as well as individually. 

Note that the model likelihood in \Cref{eq: likelihoodVIzintegratedout} is invariant under any permutation $\sigma$  on ${1,\dots,K}$ 
\begin{align} \label{eq: permutedlikelihood} 
&\sum_{k=1}^{K} \pi_{\sigma(k)} \mathcal{N}(x_i|\mu_{\sigma(k)}, \Lambda_{\sigma(k)}^{-1} ) \prod_{j=1}^{p} \psi_{\sigma(k),j,c_{ij}} =\sum_{k=1}^{K} \pi_{k} \mathcal{N}(x_i|\mu_{k}, \Lambda_{k}^{-1} ) \prod_{j=1}^{p} \psi_{k,j,c_{ij}}
\end{align}
When the model is correctly specified, this non identifiability of the likelihood usually causes the estimated model parameters from $\Cref{alg:ourCAVI}$ to correspond to the true parameters up to a permutation. Because of this, when reporting errors metrics listed or checking the frequentist coverage of posterior HDI on simulated data, we match the permuted parameters with the true parameters via linear sum assignment between the estimated $\hat{m}_{k}$ and the true  $\mu^{*}_{k}$, and we use the resulting permutation $\sigma$ to relabel $\Lambda$, $\psi$, and $\pi$.

We calculate the aforementioned error metrics and report the average over $100$ runs for the three scenarios, for sample sizes $n=2500,5000,10000,20000$ assuming that the number of component $K=K^*$ is known a priori. 
For $n=5000$, we also compare the results to the case where the inference procedure is run with $K=2K^{*}$ to see how the algorithm performs when the number of clusters is misspecified. 
To assess the competitiveness of our method, we compare the point estimation errors with the Expectation-Maximization (EM) algorithm of \cite{Hunt2002}, which unlike our method does not provide uncertainty quantification. We also compare both the point estimation error and coverage with MCMC, in this case a Gibbs sampler. MCMC is in a sense the gold standard in Bayesian inference as the empirical distribution of its samples converges to the true posterior with infinite computational power. For the Gibbs sampler, the posterior mean is used as the point estimate when calculating the errors, and for $Prop_z$, the last sample of $z$ is used. $Error_{logppd}$ is not calculated for the Gibbs sampler as the posterior predictive density is not available in closed form, and for coverage only the overall HDI is calculated, using the joint density instead of the posterior as it is proportional to the posterior, while the true posterior is unavailable in closed form.

\subsubsection{Summary of simulation study results}

\begin{table*}
\centering
\resizebox{\textwidth}{!}{%
\begin{tabular}{|c|c|c|c|c|c|c|c|}
\hline
& $Error_{\mu} ( \times 10^{-3})$ & $Error_{\Sigma} ( \times 10^{-3})$ & $Error_{\psi} ( \times 10^{-3})$ & $Error_{\pi} ( \times 10^{-3})$ & $Prop_z$ & $Error_{logppd}$  \\ 
\hline
$n=2500$, VI & 13.8, [9.79,18.5] & 7.59, [6.13,9.10] & 15.8, [10.5,21.5] & 6.36, [2.67,1.13] & 0.999, [0.99798,1] & 0.190, [0.167,0.219] \\ 
\hline
$n=5000$, VI & 9.77, [7.60,13.1] & 5.15, [4.17,6.22] & 11.0, [7.91,6.22] & 4.47, [1.74,7.24] & 0.999, [0.9984,1] & 0.127, [0.111,0.141] \\
\hline
$n=5000, K=2K^*$, VI & 9.88, [7.60,13.0] & 5.20, [4.23,6.31] & 11.0, [8.04,14.3] & 4.44, [1.96,7.24] & 0.998, [0.995,0.999] & 0.128, [0.111,0.142] \\  
\hline
$n=5000$, Gibbs & 9.78, [7.57,13.1] & 5.15, [4.18,6.24] & 11.0, [7.91,14.2] & 4.47, [1.72,7.17] & 0.999, [0.9978,0.9996] & \\  
\hline
$n=5000$, EM & 9.74, [7.59,12.8] & 4.96, [3.99,5.95] & 11.0, [7.92,14.3] & 4.47, [1.74,7.23] & 0.999, [0.9984,0.9998] & \\  
\hline
$n=10000$, VI & 7.22, [5.18,9.68] & 3.61, [2.88,4.42] & 7.92, [5.51,10.4] & 3.06, [1.48,5.05] & 0.999, [0.9987,0.9996] & 0.0873, [0.0790,0.0964] \\ 
\hline
$n=20000$, VI & 5.06, [3.71,6.66] & 2.52, [2.01,3.03] & 5.66, [3.99,7.57] & 2.18, [1.08,3.48] & 0.999, [0.9988,0.9994] & 0.0596, [0.0521,0.0656] \\ 
\hline
\end{tabular}
}
\caption{Scenario 1 -- error metrics averaged over $100$ runs, along with 5\% and 95\% quantiles}.
\label{table:scenario1error}
\end{table*}

\begin{table*}
\small
\centering
\begin{tabular}{|c|c|c|c|c|c|c|c|}
\hline
& overall & $\pi$ & $\Sigma$ & $\mu$ & $\psi$  \\ 
\hline
$n=2500$, VI & 0.00 & 0.92 & 0.00 & 0.97 & 0.92 \\ 
\hline
$n=5000$, VI & 0.08 & 0.97 & 0.02 & 0.96 & 0.98 \\ 
\hline
$n=5000$, Gibbs & 0.49 &  &  &  &  \\ 
\hline
$n=10000$, VI & 0.37 & 0.95 & 0.2 & 0.91 & 0.96 \\ 
\hline
$n=20000$, VI & 0.61 & 0.97 & 0.57 & 0.92 & 0.96  \\ 
\hline
\end{tabular}
\caption{Scenario 1 -- Frequentist coverage of 95\% HDI over $100$ runs for various sample sizes $n$.}
\label{table:scenario1coverage}
\end{table*}

For scenario 1, from \Cref{table:scenario1error}, it can be seen that our approach accurately estimates the parameters $(\mu, \Sigma, \psi, \pi)$, and the error decreases as the sample size $n$ increases for all parameters. Our approach manages to predict the true mixture components for almost all the data points, which is perhaps unsurprising as the data is well separated in continuous space. The true log-likelihood is reasonably well estimated by the log posterior predictive density, which, for comparison with the size of the error, has an average absolute value of $4.6$ in the test datasets. The errors for the case with the misspecified number of clusters $K$ are only marginally higher than the case with correctly specified $K$ with the same sample size, with $Prop_z$ still very close to $1$, indicating that the variational posterior is still able to converge to the true number of clusters.
In general, our approach performs very similar to the EM algorithm of \cite{Hunt2002} as well as the Gibbs sampler in terms of the point estimation error.
 From \Cref{table:scenario1coverage}, we observed that the frequentist coverage is close to $95\%$ for all parameters except $\Sigma$, which nevertheless achieves reasonable coverage as sample size is increased to $20000$. This could be due to $\Sigma$ having higher dimension compared to the other parameters, making it more difficult to estimate.  Similarly, the overall coverage appears to be bottlenecked by the coverage of $\Sigma$. The Gibbs sampler achieves considerably higher overall coverage for the case $N=5000$, which is perhaps not unexpected, as variational inference approaches are known to underestimate variance, although still considerably lower than $95$

\begin{table*}
\small
\centering
\resizebox{\textwidth}{!}{%
\begin{tabular}{|c|c|c|c|c|c|c|c|}
\hline
& $Error_{\mu} ( \times 10^{-2})$ & $Error_{\Sigma}  ( \times 10^{-2})$ & $Error_{\psi}  ( \times 10^{-3})$ & $Error_{\pi}  ( \times 10^{-3})$ & $Prop_z$ & $Error_{logppd}$  \\ 
\hline
$n=2500$, VI & 3.98, [2.99,5.12] & 4.09, [3.53,4.67] & 13.6, [11.1,16.2]
 & 6.75, [2.91,10.9]
 & 0.941, [0.932,0.95] & 0.205, [0.182,0.227]
 \\
\hline
$n=5000$, VI & 2.87, [2.18,3.53] & 2.89, [2.43,3.34] & 9.55, [8.04,11.1]
 & 5.04, [2.48,8.17] & 0.942, [0.937,0.948]
 & 0.142, [0.127,0.160] \\
\hline
$n=5000, K=2K^*$, VI & 2.86, [2.18,3.56] & 2.90, [2.52,3.35] & 9.59,[8.08,11.2] & 5.05, [2.66,8.29] & 0.941, [0.935,
0.947]
 & 0.144, [0.129,0.162]
 \\
\hline
$n=5000$, Gibbs & 2.87, [2.17,3.53]
 & 2.89, [2.47,3.36]
 & 9.53, [7.95,11.1]
 & 5.0, [2.49,8.14]
 & 0.920, [0.914,0.927]
 & \\
\hline
$n=5000$, EM & 2.87, [2.18,3.53]
 & 2.90, [2.44,3.34]
 & 9.55, [8.05,11.1]
 & 4.98, [2.46,8.22]
 & 0.942, [0.937,0.948]
 & \\
\hline
$n=10000$, VI & 1.97, [1.49,2.61]
 & 2.08, [1.81,2.39]
 & 6.77, [5.51,8.38]
 & 3.38, [1.60,5.75]
 & 0.943, [0.940,0.947]
 & 0.0995, [0.0891,0.109]
 \\
\hline
$n=20000$, VI & 1.35, [0.952,1.74]
 & 1.47, [1.23,1.69]
 & 4.67, [3.80,5.57]
 & 2.65, [1.31,4.16]
 & 0.944, [0.941,0.946] & 0.0694, [0.0625,0.0759]
 \\
\hline
\end{tabular}
}
\caption{Scenario 2 -- error metrics averaged over $100$ runs, along with 5\% and 95\% quantiles}
\label{table:scenario2error}
\end{table*}

\begin{table*}
\small
\centering
\begin{tabular}{|c|c|c|c|c|c|c|c|}
\hline
& overall & $\pi$ & $\Sigma$ & $\mu$ & $\psi$  \\ 
\hline
$n=2500$, VI & 0.23 & 0.90 & 0.57 & 0.85 & 0.47 \\ 
\hline
$n=5000$, VI & 0.38 & 0.84 & 0.73 & 0.83 & 0.54 \\ 
\hline
$n=5000$, Gibbs & 0.63 & & & & \\ 
\hline
$n=10000$, VI & 0.39 & 0.9 & 0.78 & 0.84 & 0.51 \\ 
\hline
$n=20000$, VI & 0.47 & 0.86 & 0.75 & 0.91 & 0.55  \\ 
\hline
\end{tabular}
\caption{Scenario 2 - Frequentist coverage of 95\% HDI over $100$ runs for various sample sizes $n$.}
\label{table:scenario2coverage}
\end{table*}

For scenario 2, from \Cref{table:scenario2error} we see that our approach again achieves good point estimation of the parameters $(\mu, \Lambda, \psi, \pi)$, with the error decreasing as the sample size $n$ increases for all parameters. The proportion of data points with the correctly predicted hard component is close to $1$ but somewhat lower than scenario 1, which may be due to the highly overlapping nature of the data across different components in continuous space. The log posterior predictive density has absolute value $11.6$ on the test dataset on average, so the log posterior predictive distribution is able to estimate the true log likelihood with lower error on average compared to scenario 1. Like in scenario 1, when $n=5000$, the errors for our VI approach is almost identical to the error for the EM algorithm and for the Gibbs sampler showing our approach performs very competitively in terms of point estimation. 
Again, the errors in the case when the number of clusters $K$ is misspecified are very similar to the correctly specified case.

In terms of frequentist coverage, from \Cref{table:scenario2coverage}, it can be seen that the coverage of $\pi$ and $\mu$ are close to $95\%$, albeit somewhat worse compared to scenario 1. Interestingly, $\Sigma$ has better coverage than scenario 1, but $\psi$ has considerably worse coverage, which could be due to $\psi$ having much higher dimension in scenario 2 compared to scenario 1. Again the Gibbs sampler achieves higher overall frequentist coverage as expected.

\begin{table*}
\small
\centering
\resizebox{\textwidth}{!}{%
\begin{tabular}{|c|c|c|c|c|c|c|c|}
\hline
& $Error_{\mu}  ( \times 10^{-2})$ & $Error_{\Sigma}  ( \times 10^{-2})$ & $Error_{\psi}  ( \times 10^{-3})$ & $Error_{\pi}  ( \times 10^{-3})$ & $Prop_z$ & $Error_{logppd}$  \\ 
\hline
$n=2500$, VI & 2.66, [2.06,3.44]
 & 1.935, [1.64,2.23]
 & 11.1, [7.80,14.3]
 & 6.60, [2.93,10.9]
 & 0.968, [0.961,0.973]
 & 0.167, [0.148,0.192]
 \\
\hline
$n=5000$, VI & 1.85, [1.47,2.39]
 & 1.37, [1.15,1.55]
 & 7.93, [5.71,10.5]
 & 4.87, [2.22,8.08]
 & 0.969, [0.965,0.974]
 & 0.116, [0.102,0.130]
 \\
\hline
$n=5000,  K=2K^*$, VI & 1.92, [1.46,2.53]
 & 1.38, [1.14,1.64]
 & 8.18, [5.68,12.1]
 & 5.66, [2.41,8.57]
 & 0.965, [0.957,0.972]
 & 0.119, [0.103,0.136]
 \\
\hline
$n=5000$,  Gibbs & 1.86, [1.47,2.38]
 & 1.37, [1.16,1.56]
 & 7.95, [5.74,10.5]
 & 4.86, [2.21,8.01]
 & 0.954, [0.949,0.960]
 & \\
\hline
$n=5000$,  EM & 1.86, [1.46,2.40]
 & 1.38, [1.16,1.58]
 & 7.95, [5.75,10.6]
 & 4.86, [2.17,8.09]
 & 0.969, [0.965,0.974]
 & \\
\hline
$n=10000$, VI & 1.32, [0.982,1.74]
 & 0.966, [0.827,1.16]
 & 5.53, [4.23,6.98]
 & 3.23, [1.38,5.41]
 & 0.970, [0.967,0.972]
& 0.0808, [0.0714,0.0901]
 \\
\hline
$n=20000$, VI & 0.950, [0.668,1.20]
 & 0.682, [0.562,0.789]
 & 3.96, [2.85,4.96]
 & 2.24, [1.01,3.71]
 & 0.969, [0.967,0.972]
 & 0.0574, [0.0507,0.0638]
 \\
\hline
\end{tabular}
}
\caption{Scenario 3 - error metrics averaged over $100$ runs, along with 5\% and 95\% quantiles}.
\label{table:scenario3error}
\end{table*}

\begin{table*}
\small
\centering
\begin{tabular}{|c|c|c|c|c|c|c|c|}
\hline
& overall & $\pi$ & $\Sigma$ & $\mu$ & $\psi$  \\ 
\hline
$n=2500$, VI & 0.48 & 0.90 & 0.49 & 0.87 & 0.85 \\ 
\hline
$n=5000$, VI & 0.52 & 0.92 & 0.58 & 0.88 & 0.86 \\ 
\hline
$n=5000$, Gibbs & 0.60 & & &  &  \\ 
\hline
$n=10000$, VI & 0.68 & 0.90 & 0.74 & 0.82 & 0.90 \\ 
\hline
$n=20000$, VI & 0.64 & 0.95 & 0.73 & 0.87 & 0.86  \\ 
\hline
\end{tabular}
\caption{Scenario 3  - Frequentist coverage of 95\% HDI over $100$ runs for various sample sizes $n$.}
\label{table:scenario3coverage}
\end{table*}

For scenario 3, from \Cref{table:scenario3error}, again we observe that good point estimation of the parameters $(\mu, \Lambda, \psi, \pi)$ is achieved with the error decreasing as the sample size $n$ increases for all parameters. The proportion of data points with the correctly predicted hard mixture component is close to $1$, and the error in log posterior predictive density is low compared to the average absolute log posterior predictive density on the test set of $7.3$. The errors for the misspecified $K$ case is somewhat higher but still close to the correctly specified $K$ case, EM and the Gibbs sampler, which agree strongly.

In terms of frequentist coverage, from \Cref{table:scenario3coverage} the posterior HDI is able to achieve good coverage for all parameters, though $\Sigma$ like in scenario 1 is the parameter with the lowest coverage. The difference in overall coverage between VI and Gibbs is also the smallest in this case. 

In \Cref{table:runtimecomparison}, we compare the runtimes for the CAVI algorithm and the Gibbs sampler across the three scenarios described above, which were conducted in Python 3.9.22. In terms of computational complexity, both the CAVI and Gibbs sampler are $O(nK(p^2+q))$ per iteration for large sample sizes $n$, but typically CAVI converges within a few hundred iterations, often less than $100$, whereas the Gibbs sampler is typically required to run for thousands of iterations before convergence and to reach comparable accuracy. In our experiments, we ran the Gibbs sampler for a conservative $3001$ iterations with $600$ iterations left as burn-in. It can be seen from \Cref{table:runtimecomparison} that the CAVI runtimes are orders of magnitudes faster than the Gibbs runtimes, even when the VI is ran on a larger dataset, suggesting it is much more computationally scalable. In the supplement, we present a more in-depth comparison of the runtimes between CAVI and Gibbs sampler for scenario 3, for more combinations of $n,p,q$ and $K$. 

\begin{table*}[h!]
\small
\centering
\begin{tabular}{|c|c|c|c|c|c|c|c|}
\hline
& Scenario 1 & Scenario 2 & Scenario 3  \\ 
\hline
$n=2500$, VI & 5.94, [5.53, 6.74] & 20.4, [17.4,23.3] & 15.1, [13.0,18.9]  \\ 
\hline
$n=5000$, VI & 19.8, [18.5,21.9] & 66.5, [59.6, 73.1] & 51.9, [46.1,59.8]  \\ 
\hline
$n=5000$, Gibbs & 14099, [12889,15819] & 22456, [20020,24796] & 13698, [12866,15712]   \\ 
\hline
$n=10000$, VI & 76.5, [69.5,86.1] & 245.1, [219.4,276.4] & 207.0, [189.2,225.2] \\ 
\hline
$n=20000$, VI & 274.9, [249.4,304.3] & 985.9, [885.3, 1075.6] & 863.4, [782.6,954.4]  \\ 
\hline
\end{tabular}
\caption{Runtimes (in seconds) for the proposed CAVI and Gibbs for the three scenarios}
\label{table:runtimecomparison}
\end{table*}

\subsection{Application: Identifying groups of individuals with comparable health phenotypes based on survey responses}

We apply our VI algorithm on the National Health and Nutrition Examination Survey (NHANES) dataset, an ongoing program conducted by the National Center for Health Statistics (NCHS) to monitor the health and nutritional status of the U.S. population. For this analysis we take a subset of the NHANES dataset from $1988$ to $2018$, for middle aged males aged between $40$ and $59$, as both sex and age group have strong epidemiological effects that can strongly drive cluster make up, resulting in less epidemiologically interesting clusters. We remove rows containing missing data, and details of the data cleaning procedure can be found in \cite{Lhoste2024}. Each row of the dataset represents a participant of the survey and contains a range of continuous and categorical risk factors including anthropometric measurements, blood pressure and heart rate, blood sugar level, cholesterol level and kidney function measurements. Specifically, for anthropometric measurements, we have: BMI, defined as a participant's weight divided by height squared ($\mathrm{kg} \mathrm{m}^{-2}$); height (cm) ; WHtR, defined as waist circumference divided by height, which is used as a measure of abdominal obesity. For blood pressure and heart rate, we have systolic blood pressure (SBP), diastolic blood pressures (DBP) and pulse rate, which is defined as the resting heart rate over one minute. For cholesterol level, we have both HDL and non-HDL cholesterol levels, where non-HDL is associated with higher risk of ischemic heart disease and stroke. Lastly, we have HbA1c which is a measurement of glucose levels in the blood in recent weeks and glomerular filtration rate (eGFR) as a measure of kidney function, where higher eGFR indicates better kidney function. For categorical variables, we include the participant's smoking status, classified as current, former, or never smoker. 


We investigate whether our proposed approach can produce epidemiologically heterogeneous and meaningful clusters that could then be used for planning prevention and clinical care, and to monitor population health over time. For the model likelihood, we use \Cref{eq: likelihoodVIzintegratedout}, and we make the same prior assumptions and hyperparameter choice as in \Cref{subsec: mixeddatamixturemodel}. We run the proposed variational inference procedure with $K=10$ as this was the number of clusters identified as most suitable by \cite{Lhoste2024} when analysing the continuous part of the same dataset using K-means.

To examine the features of the clusters produced by our method, we use the variational posterior predictive marginal distributions, as described earlier in \Cref{eq: posteriorpredmarginalx}, because they are suitable for visualisation in high-dimensional settings and effectively summarize the composition of each cluster while accounting for posterior uncertainty. For each cluster, the variational posterior predictive marginal distributions are represented using radar plots for the continuous data in \Cref{fig: radarplotUSdata}, bar plots for the categorical data in \Cref{fig: barplotUSdata}. 

\begin{figure*}[h]{
\centering
\includegraphics[width = 1.00\textwidth]{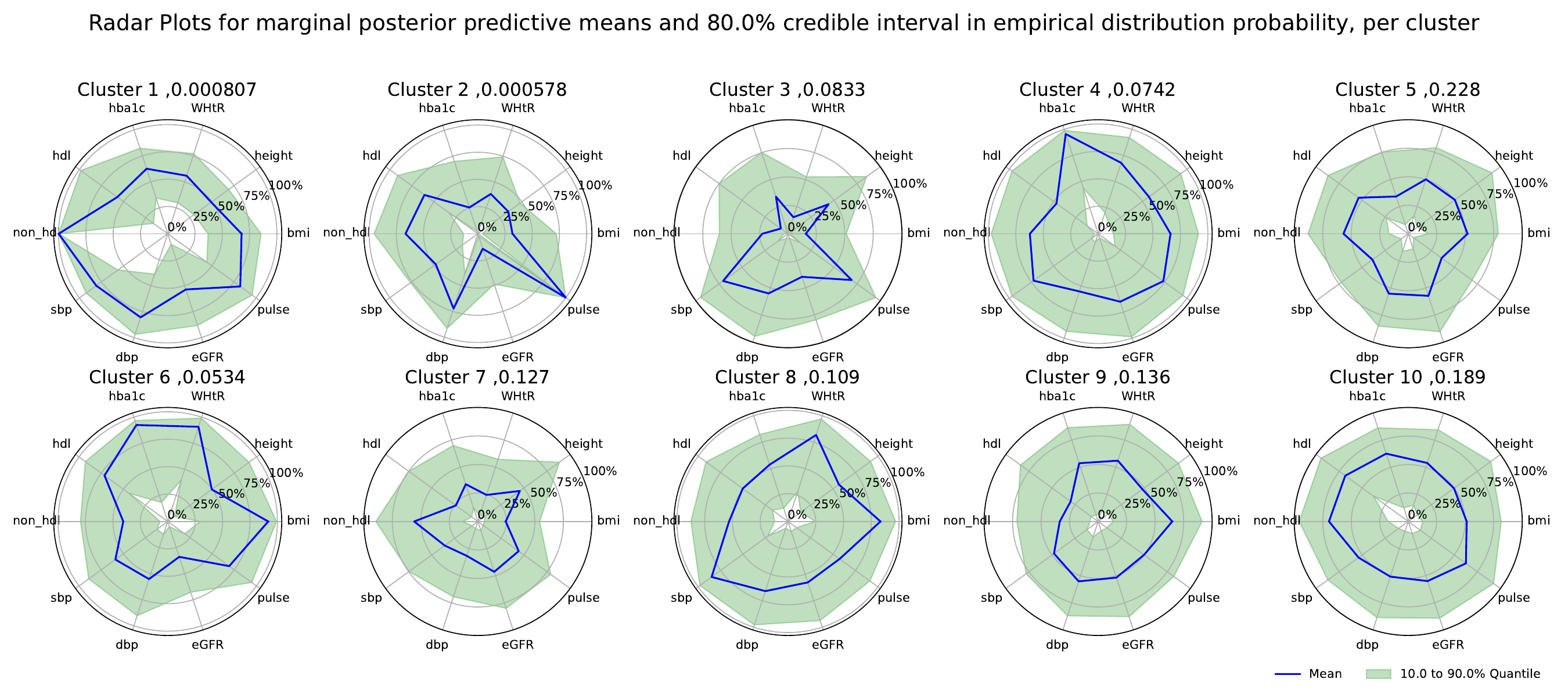}
}
\caption{Radar plots of the marginal posterior predictive distribution of each continuous variable, conditioned on cluster component. The posterior mean of each $\pi_k$ is also added atop each radar plot. The concentric circles and quantiles in each radar plot represents the quantiles of the empirical distribution of the whole data. For the sake of visual consistency, the distribution of height, eGFR and hdl are reversed so that higher values in the radar plot for all continuous variables correspond to higher risk of chronic diseases.}
\label{fig: radarplotUSdata}
\end{figure*}

\begin{figure*}[h]{
\centering
\includegraphics[width = 1.00\textwidth]{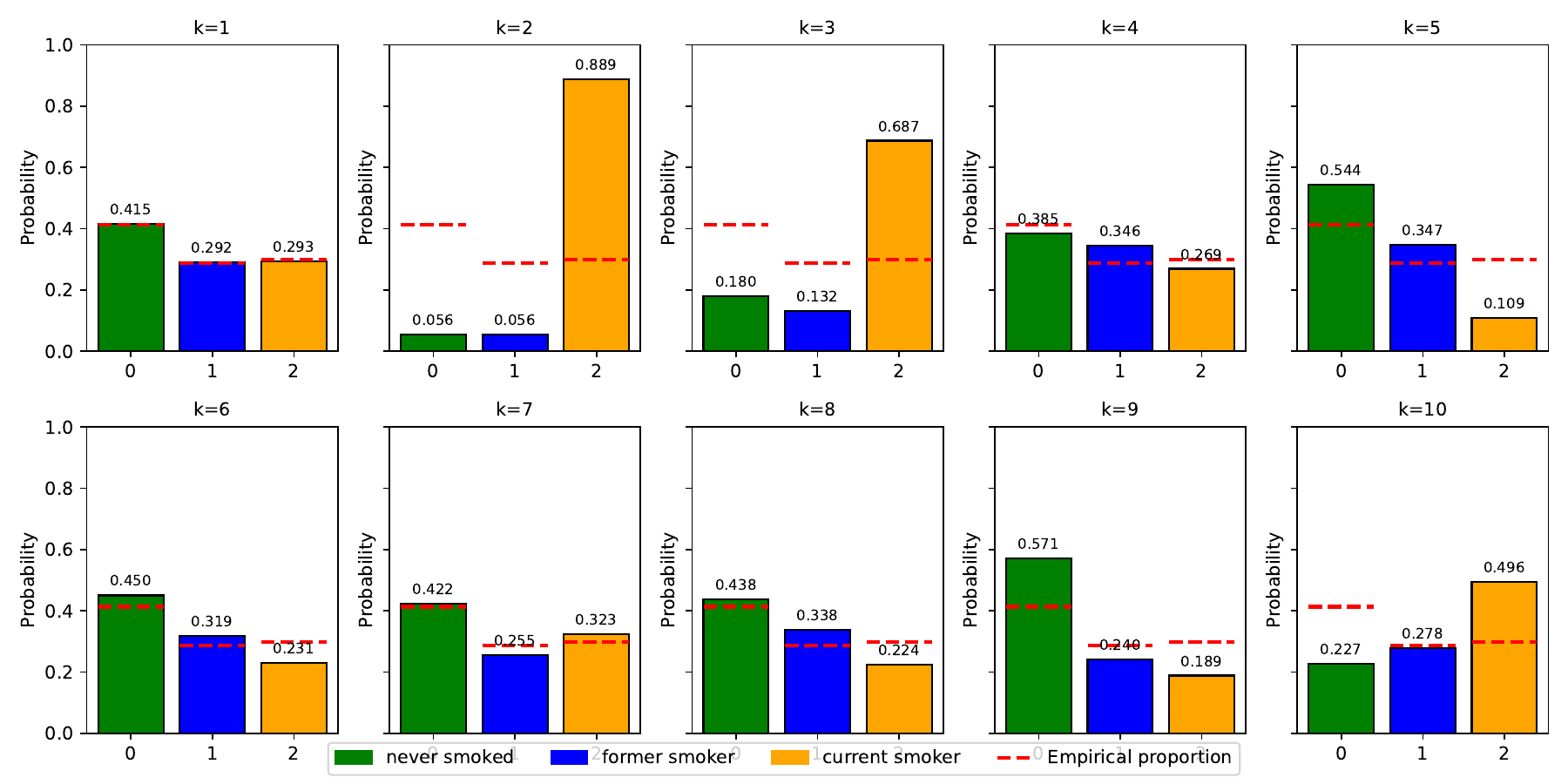}
}
\caption{Barplot of the marginal posterior predictive distribution of smoking status, conditioned on cluster component. }
\label{fig: barplotUSdata}
\end{figure*}

In \Cref{fig: radarplotUSdata}, one can see that clusters $1$ and $2$ have very small posterior mixture probabilities, suggesting that CAVI has effectively settled on less clusters than originally specified,  indicating some ability to correct for the over specification of $K$ even on real data. Cluster 5 is the largest cluster (in the sense of posterior mean of $\pi_k$), and is representative of a median adult, as the posterior predictive means of most continuous variables are close to the empirical median albeit with somewhat lower pulse, HbA1c and SBP, and its posterior predictive distribution for smoking status also does not differ greatly from the median. Clusters 9 and 10 also have posterior predictive means close to the empirical median for most continuous features, but distinguishes from cluster 5 and each other by cluster 9 having somewhat elevated BMI and healthier cholesterol levels (HDL and non-HDL) with mostly non smokers, and cluster 10 having elevated cholesterol levels and predominantly current smokers. 

Clusters 4, 6 and 8 have posterior predictive means at close to or exceeding 75\% quantile for multiple continuous features, and few if any below 50\% quantile. Clusters 4 and 6 are both characterized by very high posterior means for HbA1c, suggesting they are characterized by diabetic people. However, they differ in anthropometric measures and, to some extent, lipid profiles: Cluster 6 shows markedly elevated posterior means for WHtR and BMI, along with relatively high HDL, whereas cluster 4 exhibits posterior means for WHtR and BMI that are closer to the population median but still somewhat elevated, and close to median level of HDL. Cluster 8 is characterized by very high posterior means for systolic and diastolic blood pressure, BMI, and WHtR, but near-median levels of HbA1c and cholesterol, distinguishing it from clusters 4 and 6. Smoking distributions for all three clusters are similar. 

Clusters 7 and 3 are characterized by below-median values for most continuous variables, reflecting generally healthier profiles. Cluster 3 differs from cluster 7 by having higher posterior means for systolic and diastolic blood pressure and pulse. Individuals in cluster 3 are also more likely to be current smokers, whereas smoking status in cluster 7 is more evenly distributed, with no clear dominance of one category.

Several of our clusters share similarities with those reported by \cite{Lhoste2024} which conducted clustering analysis using K-means on the continuous part of the data only. Cluster 7, characterised by near-optimal levels across all risk factors, corresponds to the `low risk' phenotype and with similar prevalence (13\% in \cite{Lhoste2024} vs. a posterior mean weight of 0.127 here). Cluster 5 (posterior mean weight 0.228) represents individuals with near-median risk factors and appears analogous to the combined `mid-risk tall' and `mid-risk short' phenotypes in \cite{Lhoste2024} (jointly 28\%). Cluster 8 matches the `high blood pressure' phenotype, though with slightly less extreme blood pressure levels and somewhat higher prevalence. Cluster 4, defined by very high HbA1c, aligns with the `severe hyperglycemia' phenotype but with higher prevalence (0.07 vs. 3\%). Cluster 6 corresponds to the `severe obesity' phenotype with high BMI and WHtR and elevated HbA1c; its near-median blood pressure and non-HDL values likely reflect higher rates of antihypertensive, and statin use commonly observed in obese adults. Cluster 3 resembles the `low BMI, high HDL' phenotype, but with the additional insight of identifying high smoking prevalence.

\section{Discussion} \label{sec: discussion}


We have developed a scalable, coordinate ascent variational inference algorithm for mixed data mixture models that allows both the continuous and categorical variables to influence the make up of each component or cluster. We have conducted simulation study under scenarios where the heterogenenity in the mixture components are driven by the continuous data, the categorical data, or a combination of both respectively. Through these simulation studies, we demonstrated that our proposed approach is able to conduct good density estimation, and estimate the underlying parameters well and competitively with Gibbs sampler and an existing EM algorithm, even when the number of mixtures is incorrectly specified. While our method performs somewhat worse than the Gibbs sampler in terms of frequentist coverage, it is nevertheless able to produce adequate levels of overall coverage, and most of the marginal $95\%$ HDI are able to higher levels of coverage, often close to $95\%$. Runtimes on simulated datasets suggests our proposed CAVI approach is orders of magnitudes faster and more computationally scalable than the Gibbs sampler.

We demonstrated the practical utility of our approach by applying it on a challenging dataset of risk factor data of US participants in the NHANES study. We showed our method can produce epidemiologically interpretable clusters, which is crucial when clustering epidemiological datasets. We visualised the heterogeneity between the different clusters using marginal posterior predictive distributions, which takes into account posterior uncertainty, something that is often missed. However, while the medians of each continuous marginal posterior predictive distribution is distinguishable across clusters, the uncertainty quantification becomes very conservative for large probability credible intervals, causing the uncertainty quantification across different clusters to be less distinguishable.

Theoretically, we provided formal justification for our method. We showed as the sample size $n$ tends to infinity, the posterior mean from CAVI converge locally to the true parameter value, and differs from the MLE by $O(1/n)$. Furthermore, we showed the CAVI variational posterior contracts to the true parameter value at rate $O(n^{-1/2})$ under standard assumptions for MLE convergence. 

One future research direction could include developing VI approaches that enforces more separation between the clusters in continuous space, which could produce more distinguishable uncertainty quantification across different clusters. Another research direction could be to consider infinite mixtures using Dirichlet Processes, which is more suitable for data where the number of clusters is expected to increase with increasing sample size. In this paper, we focused on scenarios in which the number of categorical values in the data remains small relative to the sample size. In the case where the number of categorical values is large compared to the sample size, a possible future research direction is to investigate whether more parsimonious approaches for modelling the categorical component, such as the sparse log-linear formulation of \cite{Aliverti2022} and the Hamming-distance-based approach of \cite{Argiento2025}, can be incorporated into a variational inference mixed data framework.




\paragraph*{Acknowledgements}

This work was supported by the UK Medical Research Council (MRC), under the grant `Worldwide phenotypes and transitions in obesity-related multimorbidity', reference number MR/V034057/1. The authors are grateful to Majid Ezzati and Bin Zhou for useful discussions. For the purpose of Open Access, the author has applied a CC BY public copyright license to any Author Accepted Manuscript version arising from this submission.

\bibliographystyle{abbrvnat}
\bibliography{bibliography}

\appendix
\setcounter{theorem}{0}
\renewcommand{\thetheorem}{\thesection.\arabic{theorem}}
\renewcommand{\thelemma}{\thesection.\arabic{lemma}}
\renewcommand{\thecorollary}{\thesection.\arabic{corollary}}
\newpage

\section*{Supplement to `Bayesian Variational Inference for Mixed Data Mixture Models'}

This supplement to the paper by Wang, Bennett, Lhoste and Filippi consists of several parts. \Cref{subsec: cavi} contains the derivation for the variational posterior update equations of the CAVI algorithm in the main text of the paper. \Cref{subsec: ELBOequations} contains the ELBO calculations for the CAVI algorithm. \Cref{sec: gibbsupdate} contains the Gibbs sampler updates for the model considered in the paper. The proof of \thmmaintheorem and corollaries in the main text of paper are given in \Cref{sec: proofssection}, and auxiliary calculations needed for said proof in \Cref{subsec: gradientcalcs}. \Cref{subsec: simulatedparams} contains detail of the parameters used for the three scenarios in the Simulation study under the Experimental Results section of the main text of the paper. \Cref{section: furthersimulations} contains further simulation results comparing CAVI and Gibbs for scenario 3, on higher dimensional data.

\section{Coordinate Ascent Variational Inference update equations} \label{subsec: cavi}

In this section we provide the CAVI (Coordinate Ascent Variational Inference) update equations for our mixed data mixture model. 

Recall the prior distributions are as follows:

\medskip

$\mu_{k} \sim \mathcal{N}(m_{k}, \beta^{-1} \Lambda_{k}^{-1} )$.

$\Lambda_{k} \sim \mathcal{W}(\nu, \Phi^{-1})$ or equivalently $\Sigma_{k} \sim \mathcal{IW}(\nu, \Phi)$, where $\Sigma_{k}=(\Lambda_{k})^{-1}$

$\pi=(\pi_{1},\pi_{2},\dots,\pi_{K}) \sim Dir(\alpha,\dots,\alpha)$

$\psi_{k,j}=(\psi_{k,j,1},\psi_{k,j,2},\dots,\psi_{k,j,d_j}) \sim Dir(\eta_j,\dots,\eta_j)$.

\medskip

The model (likelihood) terms can be written as follows:

$x|c,z,\mu,\Lambda \sim \prod_{i=1}^{n} \mathcal{N}(x_i|\mu_{z_i}, \Lambda_{z_i}^{-1} )= \prod_{i=1}^{n}\prod_{k=1}^{K}\mathcal{N}(x_i|\mu_{k}, \Lambda_{k}^{-1} )^{z_{ik}}$

$z|\pi = \prod_{i=1}^{n}\pi_{z_i}=\prod_{i=1}^{n}\prod_{k=1}^{K}\pi_{k}^{z_{ik}}$

$c|\psi,z = \prod_{i=1}^{n}\prod_{j=1}^{p} \psi_{z_i,j,c_{ij}}=\prod_{i=1}^{n}\prod_{j=1}^{p}\prod_{k=1}^{K}\psi_{k,j,c_{ij}}^{z_{ik}}$

where $z_{ik}=1$ if $z_i=k$, $z_{ik}=0$ otherwise.

The likelihood with $z$ marginalised out (for a single datapoint) is:

\begin{align} \label{eq: likelihoodVIappendix} 
p(x_i, c_i | \mu, \Lambda, \psi, \pi) &= \sum_{k=1}^{K} \pi_{k} \mathcal{N}(x_i|\mu_{k}, \Lambda_{k}^{-1} ) \prod_{j=1}^{p} \psi_{k,j,c_{ij}}
\end{align}  

\medskip

This leads to a joint distribution of the form:

\begin{align} \label{eq: jointdistributionVI} 
p(x, c, z, \mu, \Lambda, \psi, \pi) &= p(x|c,z,\mu,\Lambda)p(c|\psi,z)p(z|\pi)p(\mu|\Lambda)p(\Lambda)p(\psi)p(\pi) \nonumber \\
&= p(x|c,z,\mu,\Lambda)p(\mu|\Lambda)p(\Lambda)p(c|\psi,z)p(\psi)p(z|\pi)p(\pi) 
\end{align}  

We assume the variational posterior is of the form $q(\mu, \Lambda, \psi, \pi, z)=q(\mu, \Lambda, \psi, \pi) q(z)$. The exact distributional families of the variational posterior need not to be specified and will come out during the derivation. To derive its form we use CAVI with two blocks, so we alternate between updating $q(\mu, \Lambda, \psi, \pi)$ and $q(z)$.

\begin{align} \label{eq: variationalposteriorappendix} 
& q(\mu, \Lambda, \psi, \pi) \propto \mathrm{exp}(\mathbb{E}_{-\mu, -\Lambda, -\psi, -\pi} \;\mathrm{ln} p(x, c, z, \mu, \Lambda, \psi, \pi)\; ) \\
& \propto \mathrm{exp}(\mathbb{E}_{z} [\mathrm{ln}( p(x|c,z,\mu,\Lambda)p(\mu|\Lambda)p(\Lambda)p(c|\psi,z)p(\psi)p(z|\pi)p(\pi)   ) ] ) \nonumber \\
& \propto \underbrace{\mathrm{exp}(\mathbb{E}_{z} [\mathrm{ln}(p(x|c,z,\mu,\Lambda)p(\mu|\Lambda)p(\Lambda))]) }_{\propto q(\mu,\Lambda)}\underbrace{\mathrm{exp}(\mathbb{E}_{z}[\mathrm{ln}(p(z|\pi)p(\pi))] ) }_{\propto q(\pi)}\underbrace{\mathrm{exp}(\mathbb{E}_{z} [\mathrm{ln}(p(c|\psi,z)p(\psi))]  ) }_{\propto q(\psi)}  \nonumber 
\end{align}  

From this it's clear $q(\mu, \Lambda, \psi, \pi)=q(\mu, \Lambda)q(\psi)q(\pi)$, each term in this product can therefore be derived separately as follows:

For $q(\pi)$:

\begin{align} \label{eq: VIupdateforpi} 
q(\pi) & \propto  \mathrm{exp}(\mathbb{E}_{z}[\mathrm{ln}p(z|\pi)+\mathrm{ln}p(\pi)]) \nonumber \\
& \propto \mathrm{exp}(\mathbb{E}_{z}[\mathrm{ln}(\prod_{i=1}^{n}\prod_{k=1}^{K}\pi_{k}^{z_{ik}} )+\mathrm{ln} (Dir(\pi| \alpha,\dots,\alpha) )]) \nonumber \\
& \propto \mathrm{exp}(\mathbb{E}_{z}[\sum_{i=1}^{n}\sum_{k=1}^{K} z_{ik} \mathrm{ln}\pi_{k}+\sum_{k=1}^{K} (\alpha-1)\mathrm{ln} ( \pi_k )]) \nonumber \\
& \propto \mathrm{exp}(\sum_{k=1}^{K} (\alpha-1+\sum_{i=1}^{n}r_{ik} ) \mathrm{ln}\pi_{k}) \nonumber \\
& \propto \prod_{k=1}^{K} \pi_{k}^{(\alpha-1+\sum_{i=1}^{n}r_{ik} )}
\end{align}  

which by observation is a Dirichlet distribution $Dir(\alpha+\sum_{i=1}^{n}r_{i1},\dots,\alpha+\sum_{i=1}^{n}r_{iK} )$, where $r_{ik}=\mathbb{E}_{z_i}[z_{ik}]=q(z_i=k)$, the posterior probability of the ith data being in cluster $k$ under the variational posterior probability, which will be obtained in the CAVI update for $q(z)$ later on. 


For $q(\mu, \Lambda)$:

\begin{align} \label{eq: VIupdateformulambdapart1}
q(\mu, \Lambda) & \propto \mathrm{exp}(\mathbb{E}_{z} [\mathrm{ln}p(x|c,z,\mu,\Lambda)+\mathrm{ln}p(\mu|\Lambda)+\mathrm{ln}p(\Lambda)]) \nonumber \\
& \propto \mathrm{exp}(\mathbb{E}_{z} [\mathrm{ln}(\prod_{i=1}^{n}\prod_{k=1}^{K}\mathcal{N}(x_i|\mu_{k}, \Lambda_{k}^{-1} )^{z_{ik}} )]+ \mathrm{ln}(\prod_{k=1}^{K}  \mathcal{N}(\mu_{k}|m_{k}, \beta^{-1} \Lambda_{k}^{-1} ))+ \mathrm{ln}(\prod_{k=1}^{K} \mathcal{W}(\Lambda_{k}|\nu, \Phi^{-1}) ) ) \nonumber \\
& \propto \mathrm{exp}(\sum_{i=1}^{n}\sum_{k=1}^{K} r_{ik} \mathrm{ln}(\mathcal{N}(x_i|\mu_{k}, \Lambda_{k}^{-1} ) )+ \sum_{k=1}^{K}\mathrm{ln}(\mathcal{N}(\mu_{k}| m_{k}, \beta^{-1} \Lambda_{k}^{-1} ))+ \sum_{k=1}^{K} \mathrm{ln}( \mathcal{W}(\Lambda_{k,g}|\nu, \Phi^{-1}) ) ) \nonumber \\
& \propto \prod_{k=1}^{K}  \mathrm{exp}(\sum_{i=1}^{n} r_{ik} \mathrm{ln}(\mathcal{N}(x_i|\mu_{k}, \Lambda_{k}^{-1} ) )+\mathrm{ln}( \mathcal{N}(\mu_{k}| m_{k}, \beta^{-1} \Lambda_{k}^{-1} ))+ \mathrm{ln}( \mathcal{W}(\Lambda_{k}|\nu, \Phi^{-1}) ) )
\end{align}

From here we first derive a conjugate Gaussian posterior for each $\mu_{k}|\Lambda_{k}$, and subsequently the leftover terms can be `absorbed' into the Wishart prior to form a conjugate Wishart posterior for $\Lambda_{k}$. For each $\mu_{k}$, the relevant terms are:

\begin{align} \label{eq: VIupdateformulambdapart2}
q(\mu_{k}|\Lambda_{k}) & \propto \mathrm{exp}\Bigl\{ \sum_{i=1}^{n} r_{ik} \mathrm{ln}(\mathcal{N}(x_i|\mu_{k}, \Lambda_{k}^{-1} ) )+ \mathrm{ln}( \mathcal{N}(\mu_{k}|m_{k}, \beta^{-1} \Lambda_{k}^{-1} )) \Bigl\} \nonumber \\
& \propto \mathrm{exp}\Bigl\{ \sum_{i=1}^{n} r_{ik} [\frac{1}{2} \mathrm{ln}(det(\Lambda_{k}))-\frac{1}{2}(x_i-\mu_{k})^T \Lambda_{k} (x_i-\mu_{k}) ]+ \frac{1}{2}\mathrm{ln}(det(\beta \Lambda_{k})) \nonumber \\ & -\frac{1}{2}(\mu_{k}-m_{k})^T \beta \Lambda_{k} (\mu_{k}-m_{k}) \Bigl\} \nonumber \\
& \propto \mathrm{exp}\Bigl\{ -\frac{1}{2} [\mu_{k}^T(\beta + \sum_{i=1}^{n} r_{ik} ) \Lambda_{k}  \mu_{k}+(-\beta m_{k}^T-\sum_{i=1}^{n} r_{ik}x_i^T) \Lambda_{k} \mu_{k}) \nonumber \\ & +\mu_{k}^T \Lambda_{k} (-\beta m_{k}-\sum_{i=1}^{n} r_{ik}x_i)+m_{k}^T\beta \Lambda_{k} m_{k} + x_i^T \sum_{i=1}^{n} r_{ik}  \Lambda_{k}  x_i]  \nonumber \\ & + \sum_{i=1}^{n} r_{ik} (\frac{1}{2} \mathrm{ln}(det(\Lambda_{k})) + \frac{1}{2}\mathrm{ln}(det(\beta \Lambda_{k})) \Bigl\} \nonumber  \\
& \propto \mathrm{exp}\Bigl\{( \frac{1}{2}\mathrm{ln}(det(\beta \Lambda_{k}))-\frac{1}{2}(\mu_{k}-\hat{m}_{k})^T (\beta + \sum_{i=1}^{n} r_{ik} ) \Lambda_{k} (\mu_{k}-\hat{m}_{k}) \nonumber \\ & -  \frac{1}{2}[-\hat{m}_{k}^T (\beta + \sum_{i=1}^{n} r_{ik} ) \Lambda_{k} \hat{m}_{k}+m_{k}^T\beta \Lambda_{k} m_{k}+x_i^T \sum_{i=1}^{n} r_{ik}  \Lambda_{k}  x_i  - \sum_{i=1}^{n} r_{ik} \mathrm{ln}(det(\Lambda_{k})) ]  \Bigl\}
\end{align}

Where $\hat{m}_{k}=\frac{\beta m_{k} + \sum_{j=1}^{n} r_{jk} x_j}{\beta + \sum_{j=1}^{n} r_{jk}}$. The first two terms in the last expression of the derivation above forms a conjugate posterior 
$\mu_{k}|\Lambda_{k} \sim \mathcal{N}(\mu_{k}|\hat{m}_{k}, \hat{\beta}_{k}^{-1} \Lambda_{k}^{-1} )$, where $\hat{\beta}_{k}^{-1}=(\beta + \sum_{j=1}^{n} r_{jk} )^{-1}$ (the determinant term is correct up to a constant that does not depend on $\Lambda_{k}$). The second line in the last expression contains the leftover terms that need to be `absorbed' by the Wishart prior, which we will see in the derivation for $q(\Lambda_{k})$ as follows:

For $q(\Lambda_{k})$:

\begin{align} \label{eq: VIupdateforlambda}
q(\Lambda_{k}) & \propto \mathrm{exp}\Bigl\{ \mathrm{ln}(\mathcal{W}(\Lambda_{k}|\nu, \Phi^{-1})) - \frac{1}{2}[-\hat{m}_{k}^T (\beta + \sum_{i=1}^{n} r_{ik} ) \Lambda_{k} \hat{m}_{k}+ m_{k}^T\beta \Lambda_{k} m_{k} \nonumber \\ &
+x_i^T \sum_{i=1}^{n} r_{ik}  \Lambda_{k}  x_i  - \sum_{i=1}^{n} r_{ik} \mathrm{ln}(det(\Lambda_{k})) ]   \Bigl\} \nonumber \\
& \propto \mathrm{exp}\Bigl\{ \frac{\nu-q-1}{2} \mathrm{ln}(det(\Lambda_{k}))-\frac{1}{2}Tr(\Phi \Lambda_{k}) - \frac{1}{2}[-\hat{m}_{k}^T (\beta + \sum_{i=1}^{n} r_{ik} ) \Lambda_{k} \hat{m}_{k}+ m_{k}^T\beta \Lambda_{k} m_{k} \nonumber \\ &
+x_i^T \sum_{i=1}^{n} r_{ik}  \Lambda_{k}  x_i  - \sum_{i=1}^{n} r_{ik} \mathrm{ln}(det(\Lambda_{k})) ]   \Bigl\} \nonumber  \\
& \propto \mathrm{exp}\Bigl\{ \frac{\nu-q-1+\sum_{i=1}^{n} r_{ik}}{2}  \mathrm{ln}(det(\Lambda_{k}))-\frac{1}{2} Tr\Bigl(\Phi \Lambda_{k}-  (\beta + \sum_{i=1}^{n} r_{ik} ) \hat{m}_{k}\hat{m}_{k}^T \Lambda_{k} + \beta m_{k} m_{k}^T \Lambda_{k} \nonumber \\ &
+ \sum_{i=1}^{n} r_{ik} x_i x_i^T \Lambda_{k} \Bigl)  \Bigl\}
\end{align}

where in the last line we used the cyclic property of trace. The last line is also in the form of a Wishart distribution with degree of freedom $\hat{\nu}_{k}=\nu+\sum_{i=1}^{n} r_{ik}$ and scale matrix $\hat{\Phi}_{k}^{-1}$, where

$$\hat{\Phi}_{k}=\Phi- (\beta + \sum_{i=1}^{n} r_{ik} ) \hat{m}_{k}\hat{m}_{k}^T+ \beta m_{k} m_{k}^T + \sum_{i=1}^{n} r_{ik} x_i x_i^T $$


Lastly we have the update for $q(\psi)$:

\begin{align} \label{eq: VIupdateforpsi}
q(\psi) & \propto  \mathrm{exp} \Bigl\{\mathbb{E}_{z} [\mathrm{ln}p(c|\psi,z)+\mathrm{ln}p(\psi) ] \Bigl\} \nonumber \\
& \propto \ \mathrm{exp} \Bigl\{ \sum_{i=1}^{n}\sum_{k=1}^{K} \sum_{j=1}^{p} r_{ik}\mathrm{ln}(\psi_{k,j,c_{ij}}) + \sum_{k=1}^{K} \sum_{j=1}^{p} \mathrm{ln}(Dir(\psi_{k,j}|\eta_j,\dots,\eta_j) )  \Bigl\}
\end{align}

so for each $q(\psi_{k,j})$ we have:

\begin{align} \label{eq: VIupdateforpsi2}
q(\psi_{k,j}) & \propto  \mathrm{exp} \Bigl\{ \sum_{i=1}^{n} r_{ik}\mathrm{ln}(\psi_{k,j,c_{ij}}) + \mathrm{ln}(Dir(\psi_{k,j}|\eta_j,\dots,\eta_j) )  \Bigl\} \nonumber \\
q(\psi_{k,j}) & \propto  \mathrm{exp} \Bigl\{ \sum_{i=1}^{n} r_{ik}\mathrm{ln}(\psi_{k,j,c_{ij}}) + \sum_{g=1}^{d_j} (\eta_j-1) \mathrm{ln}(\psi_{k,j,g})  \Bigl\} \nonumber \\
q(\psi_{k,j}) & \propto  \mathrm{exp} \Bigl\{ \sum_{i=1}^{n} \sum_{g=1}^{d_j} I_{\{c_{ij}=g\}}r_{ik}\mathrm{ln}(\psi_{k,j,g}) + \sum_{g=1}^{d_j} (\eta_j-1) \mathrm{ln}(\psi_{k,j,g})  \Bigl\} \nonumber \\
q(\psi_{k,j}) & \propto  \mathrm{exp} \Bigl\{ \sum_{g=1}^{d_j} (\eta_j-1+\sum_{i=1}^{n} I_{\{c_{ij}=g\}}r_{ik} ) \mathrm{ln}(\psi_{k,j,g})  \Bigl\}
\end{align}

where $I_{\{c_{ij}=g\}}=1$ if $c_{ij}=g$, $I_{\{c_{ij}=g\}}=0$ otherwise. 

By observation, $q(\psi_{k,j}) \sim Dir(\eta_j+\sum_{i=1}^{n} I_{\{c_{ij}=1\}}r_{ik},\dots,\eta_j+\sum_{i=1}^{n} I_{\{c_{ij}=d_j\}}r_{ik} )$

\bigskip

Now that we have the update equations for $q(\mu, \Lambda, \psi, \pi)$, we can derive the update for $q(z)$. This will require the forms of the variational posterior $q(\mu, \Lambda, \psi, \pi)$ we just derived above in order to evaluate expectations with respect to $q(\mu, \Lambda, \psi, \pi)$, in order to obtain an expression for $r_{ik}$:

\begin{align} \label{eq: variationalposteriorqz} 
q(z) & \propto \mathrm{exp} \Bigl\{ \mathbb{E}_{-z} \mathrm{ln} p(x, c, z, \mu, \Lambda, \psi, \pi) \Bigl\} \nonumber \\
& \propto \mathrm{exp} \Bigl\{ \mathbb{E}_{\mu, \Lambda, \psi, \pi} [\mathrm{ln}( p(x|c,z,\mu,\Lambda)p(\mu|\Lambda)p(\Lambda)p(c|\psi,z)p(\psi)p(z|\pi)p(\pi)   ) ] \Bigl\} \nonumber \\
& \propto \mathrm{exp} \Bigl\{ \mathbb{E}_{\mu, \Lambda} [\mathrm{ln}( p(x|c,z,\mu,\Lambda)p(\mu|\Lambda)p(\Lambda) ) ] + \mathbb{E}_{\psi} [ \mathrm{ln}( p(c|\psi,z)p(\psi) )] + \mathbb{E}_{\pi} [  \mathrm{ln}(p(z|\pi)p(\pi) ) ] \Bigl\} \nonumber \\
& \propto \mathrm{exp} \Bigl\{ \mathbb{E}_{\mu, \Lambda} [\mathrm{ln}( p(x|c,z,\mu,\Lambda) ) ] + \mathbb{E}_{\psi} [ \mathrm{ln}( p(c|\psi,z) )] + \mathbb{E}_{\pi} [  \mathrm{ln}(p(z|\pi) ) ] \Bigl\} \nonumber \\
& \propto \mathrm{exp} \Bigl\{ \sum_{i=1}^{n}\sum_{k=1}^{K} z_{ik}\mathbb{E}_{\mu, \Lambda} [\mathrm{ln}(\mathcal{N}(x_i|\mu_{k}, \Lambda_{k}^{-1} ) )] + \sum_{i=1}^{n}\sum_{k=1}^{K} \sum_{j=1}^{p} z_{ik}\mathbb{E}_{\psi} [\mathrm{ln}(\psi_{k,j,c_{ij}}) ] + \sum_{i=1}^{n}\sum_{k=1}^{K} z_{ik}\mathbb{E}_{\pi} [\mathrm{ln}(\pi_{k} ) ] \Bigl\} 
\end{align}  

where in the penultimate expression we dropped terms which do not depend on $z$. At this point the expression of $q(z)$ essentially boils down to evaluating the three expectations $\mathbb{E}_{\mu, \Lambda} [\mathrm{ln}(\mathcal{N}(x_i|\mu_{k}, \Lambda_{k}^{-1} ) )]$, $\mathbb{E}_{\psi} [\mathrm{ln}(\psi_{k,j,c_{ij}}) ]$ and $\mathbb{E}_{\pi} [\mathrm{ln}(\pi_{k} ) ]$. 

\begin{align} \label{eq: variationalposterioremulambda} 
& \mathbb{E}_{\mu, \Lambda} [\mathrm{ln}(\mathcal{N}(x_i|\mu_{k}, \Lambda_{k}^{-1} ) )] = \mathbb{E}_{\mu, \Lambda} [\mathrm{ln}\Bigl\{ (2\pi)^{-q/2} det (\Lambda_{k})^{1/2} \mathrm{exp}(-\frac{1}{2} (x_i-\mu_{k})^T  \Lambda_{k} (x_i-\mu_{k})) \Bigl\}] \nonumber  \\
&= -\frac{q}{2}\mathrm{ln}(2\pi)+\frac{1}{2}\mathbb{E}_{\Lambda_{k}} [\mathrm{ln}(det (\Lambda_{k})) ]-\mathbb{E}_{\mu, \Lambda} [\frac{1}{2} (x_i-\mu_{k})^T  \Lambda_{k} (x_i-\mu_{k})] 
\end{align}  

It is well known that if $\Lambda \sim \mathcal{W}(\nu, \Phi^{-1})$, then $\mathbb{E}_{\Lambda}[\mathrm{ln}(det (\Lambda))]=q\mathrm{ln}(2)+\mathrm{ln}(det(\Phi^{-1}))+\sum_{i=1}^{q} \digamma(\frac{\nu+1-i}{2})$, where $\digamma(x)$ is the digamma function. Since $\Lambda_{k} \sim \mathcal{W}(\hat{\nu}_{k}, \hat{\Phi}_{k}^{-1})$, we have

$\mathbb{E}_{\Lambda_{k}} [\mathrm{ln}(det (\Lambda_{k})) ]=q\mathrm{ln}(2)+\mathrm{ln}(det(\hat{\Phi}_{k}^{-1}))+\sum_{i=1}^{q} \digamma(\frac{\hat{\nu}_{k}+1-i}{2})$

As for $\mathbb{E}_{\mu, \Lambda} [\frac{1}{2} (x_i-\mu_{k})^T  \Lambda_{k} (x_i-\mu_{k})]$, we have:

\begin{align} \label{eq: variationalposterioremulambda2} 
& \mathbb{E}_{\mu, \Lambda} [\frac{1}{2} (x_i-\mu_{k})^T  \Lambda_{k} (x_i-\mu_{k})] \nonumber  \\
&= \frac{1}{2}\iint_{}^{} (x_i-\mu_{k})^T  \Lambda_{k} (x_i-\mu_{k}) q(\mu_{k},\Lambda_{k})d\mu_{k} d\Lambda_{k} \nonumber \\
&= \frac{1}{2} \iint_{}^{} (x_i-\hat{m}_{k}+\hat{m}_{k}-\mu_{k})^T  \Lambda_{k} (x_i-\hat{m}_{k} \nonumber \\
&+\hat{m}_{k}-\mu_{k}) q(\mu_{k},\Lambda_{k})d\mu_{k} d\Lambda_{k} \nonumber \\
&= \frac{1}{2} \iint_{}^{} [(x_i-\hat{m}_{k})^T \Lambda_{k} (x_i-\hat{m}_{k})+2(x_i-\hat{m}_{k})^T \Lambda_{k} (\hat{m}_{k}-\mu_{k}) + \\ & (\hat{m}_{k}-\mu_{k})^T\Lambda_{k} (\hat{m}_{k}-\mu_{k}) ] q(\mu_{k}|\Lambda_{k})q(\Lambda_{k})d\mu_{k} d\Lambda_{k} \nonumber \\
&= \frac{1}{2} \iint_{}^{} [(x_i-\hat{m}_{k})^T \Lambda_{k} (x_i-\hat{m}_{k}) \nonumber \\
&+ Tr((\mu_{k}-\hat{m}_{k})(\mu_{k}-\hat{m}_{k})^T\Lambda_{k} ) ] q(\mu_{k}|\Lambda_{k})q(\Lambda_{k})d\mu_{k} d\Lambda_{k} \nonumber \\
&= \frac{1}{2} \iint_{}^{} [(x_i-\hat{m}_{k})^T \Lambda_{k} (x_i-\hat{m}_{k}) + Tr(\hat{\beta}_{k}^{-1}\Lambda_{k}^{-1}\Lambda_{k} ) ] q(\Lambda_{k}) d\Lambda_{k} \nonumber \\
&= \frac{1}{2} [(x_i-\hat{m}_{k})^T \hat{\nu}_{k} \hat{\Phi}_{k}^{-1} (x_i-\hat{m}_{k}) + q\hat{\beta}_{k}^{-1} ] 
\end{align}  

Next we evaluate $\mathbb{E}_{\pi} [\mathrm{ln}(\pi_{k} ) ]$ and $\mathbb{E}_{\psi} [\mathrm{ln}(\psi_{k,j,c_{ij}}) ]$. The expectation of the natural logarithm of a Dirichlet is well known, specifically if $X \sim Dir(\alpha_1,\dots,\alpha_K)$, then $\mathbb{E}(\mathrm{ln}(X_i) )=\digamma(\alpha_i)-\digamma(\sum_{j=1}^{K}\alpha_j)$. With this, we have:

\begin{align} \label{eq: Epilnpik} 
\mathbb{E}_{\pi} [\mathrm{ln}(\pi_{k} ) ]=\digamma(\alpha+\sum_{i=1}^{n}r_{ik})-\digamma(K\alpha+\sum_{j=1}^{K}\sum_{i=1}^{n}r_{ij})
\end{align}

\begin{align} \label{eq: Epsilnpsi} 
\mathbb{E}_{\psi} [\mathrm{ln}(\psi_{k,j,c_{ij}}) ]=\digamma(\eta_j+\sum_{s=1}^{n} I_{\{c_{sj}=c_{ij}\}}r_{sk} )-\digamma(d_j \eta_j+\sum_{g_j=1}^{d_j} \sum_{s=1}^{n} I_{\{c_{sj}=g_j\}}r_{sk})
\end{align}

Recalling that $q(\pi) \sim Dir(\alpha+\sum_{i=1}^{n}r_{i1},\dots,\alpha+\sum_{i=1}^{n}r_{iK} )$ 

and $q(\psi_{k,j}) \sim Dir(\eta_j+\sum_{i=1}^{n} I_{\{c_{ij}=1\}}r_{ik},\dots,\eta_j+\sum_{i=1}^{n} I_{\{c_{ij}=d_j\}}r_{ik} )$

We return to the expression for $q(z)$ now that we have all the required expectations:

\begin{align} \label{eq: variationalposteriorqzcontinued} 
q(z) & \propto \mathrm{exp} \Bigl\{ \sum_{i=1}^{n}\sum_{k=1}^{K} z_{ik}\mathbb{E}_{\mu, \Lambda} [\mathrm{ln}(\mathcal{N}(x_i|\mu_{k}, \Lambda_{k}^{-1} ) )] + \sum_{i=1}^{n}\sum_{k=1}^{K} \sum_{j=1}^{p} z_{ik}\mathbb{E}_{\psi} [\mathrm{ln}(\psi_{k,j,c_{ij}}) ] + \sum_{i=1}^{n}\sum_{k=1}^{K} z_{ik}\mathbb{E}_{\pi} [\mathrm{ln}(\pi_{k} ) ] \Bigl\} \nonumber \\
q(z) & \propto \prod_{i=1}^{n}\prod_{k=1}^{K} \mathrm{exp} \Bigl\{ z_{ik}\mathbb{E}_{\mu, \Lambda} [\mathrm{ln}(\mathcal{N}(x_i|\mu_{k}, \Lambda_{k}^{-1} ) )] + \sum_{j=1}^{p} z_{ik}\mathbb{E}_{\psi} [\mathrm{ln}(\psi_{k,j,c_{ij}}) ] + z_{ik}\mathbb{E}_{\pi} [\mathrm{ln}(\pi_{k} ) ] \Bigl\} \nonumber \\
q(z) & \propto \prod_{i=1}^{n}\prod_{k=1}^{K}  \mathrm{exp} (z_{ik}\mathrm{ln}(\rho_{ik}) ) = \prod_{i=1}^{n}\prod_{k=1}^{K} \rho_{ik}^{z_{ik}} 
\end{align}  

where 
\begin{align} \label{eq: rhoik} 
\rho_{ik} &=\mathrm{exp}\Bigl\{ \mathbb{E}_{\mu, \Lambda} [\mathrm{ln}(\mathcal{N}(x_i|\mu_{k}, \Lambda_{k}^{-1} ) )]+\sum_{j=1}^{p}\mathbb{E}_{\psi} [\mathrm{ln}(\psi_{k,j,c_{ij}}) ]+\mathbb{E}_{\pi} [\mathrm{ln}(\pi_{k} ) ] \Bigl\} 
\end{align}

By observation, we have $r_{ik}=\frac{\rho_{ik}}{\sum_{j=1}^{K} \rho_{ij} }$. With this, we can alternate between updating $q(\mu, \Lambda, \psi, \pi)$, which depend on $r_{ik}$ and the $r_{ik}$ which depend on $q(\mu, \Lambda, \psi, \pi)$. 

\section{Evaluating the ELBO} \label{subsec: ELBOequations}

Though the CAVI algorithm update equations do not require the ELBO, the termination condition of CAVI is based on checking whether the ELBO has converged (in practice, whether the difference in ELBO between subsequent iterations has reduced beneath a pre-specified threshold), which require us to evaluate it. This involves another long calculation though some of the terms have been calculated from deriving the CAVI update equations. Also, unlike the CAVI updates, one needs to be careful not to drop any `constant' terms here as we are no longer deriving probability distributions up to a constant. 

\begin{align} \label{eq: ELBO} 
ELBO(q)& =\mathbb{E}_{q}[\mathrm{ln}(p(x, c, z, \mu, \Lambda, \psi, \pi))-\mathrm{ln}(q(\mu, \Lambda, \psi, \pi, z))] \\
& = \mathbb{E}_{q}[\mathrm{ln}(p(x|c,z,\mu,\Lambda)p(\mu|\Lambda)p(\Lambda)p(c|\psi,z)p(\psi)p(z|\pi)p(\pi)) - \mathrm{ln}(q(\mu, \Lambda)q(\psi)q(\pi)q(z)) ] \nonumber \\
& = \mathbb{E}_{\mu, \Lambda,z} [\mathrm{ln}( p(x|c,z,\mu,\Lambda))]+\mathbb{E}_{\mu, \Lambda} [\mathrm{ln}(p(\mu|\Lambda))]+\mathbb{E}_{\Lambda} [\mathrm{ln}(p(\Lambda))] \nonumber \\
&+ \mathbb{E}_{\psi,z} [ \mathrm{ln}( p(c|\psi,z))]+ \mathbb{E}_{\psi}[\mathrm{ln}(p(\psi))] + \mathbb{E}_{\pi,z} [  \mathrm{ln}(p(z|\pi))]+ \mathbb{E}_{\pi} [\mathrm{ln}(p(\pi))] \nonumber \\
& - \mathbb{E}_{\mu, \Lambda}[\mathrm{ln}(q(\mu, \Lambda))]-\mathbb{E}_{\psi}[\mathrm{ln}(q(\psi))]-\mathbb{E}_{\pi}[\mathrm{ln}(q(\pi))]-\mathbb{E}_{z}[\mathrm{ln}(q(z))]
\end{align}  

There are 11 expectations here, which we evaluate in order:

\begin{align} \label{eq: ELBOterm1} 
& \mathbb{E}_{\mu, \Lambda,z} [\mathrm{ln}( p(x|c,z,\mu,\Lambda))]=\sum_{i=1}^{n}\sum_{k=1}^{K} r_{ik}\mathbb{E}_{\mu, \Lambda} [\mathrm{ln}(\mathcal{N}(x_i|\mu_{k}, \Lambda_{k}^{-1} ) )] \nonumber \\
&= \sum_{i=1}^{n}\sum_{k=1}^{K} r_{ik} \Bigl\{ -\frac{q}{2}\mathrm{ln}(2\pi)+\frac{1}{2}[q\mathrm{ln}(2)+\mathrm{ln}(det(\hat{\Phi}_{k}^{-1}))+\sum_{i=1}^{q} \digamma(\frac{\hat{\nu}_{k}+1-i}{2})] \nonumber \\
&-\frac{1}{2} [(x_i-\hat{m}_{k})^T \hat{\nu}_{k} \hat{\Phi}_{k}^{-1} (x_i-\hat{m}_{k}) + q\hat{\beta}_{k}^{-1} ] \Bigl\}
\end{align}  

\begin{align} \label{eq: ELBOterm2} 
& \mathbb{E}_{\mu, \Lambda} [\mathrm{ln}(p(\mu|\Lambda))]=\mathbb{E}_{\mu, \Lambda} [\mathrm{ln}(\prod_{k=1}^{K} \mathcal{N}(\mu_{k}|m_{k}, \beta^{-1} \Lambda_{k}^{-1} ))]=\sum_{k=1}^{K}\mathbb{E}_{\mu, \Lambda} [\mathrm{ln}(  \mathcal{N}(\mu_{k}|m_{k}, \beta^{-1} \Lambda_{k}^{-1}  ))] \nonumber \\
& = \sum_{k=1}^{K}\mathbb{E}_{\mu, \Lambda} [\mathrm{ln} \Bigl\{ (2\pi)^{-q/2} det(\beta \Lambda_{k})^{1/2} \mathrm{exp}(-\frac{1}{2} (\mu_{k}-m_{k})^T \beta \Lambda_{k} (\mu_{k}-m_{k}) ) \Bigl\} ] \nonumber \\
& = \sum_{k=1}^{K}\Bigl\{ -\frac{q}{2}\mathrm{ln}(2\pi)+ \frac{1}{2}\mathbb{E}_{\Lambda} [\mathrm{ln}(det(\beta \Lambda_{k}))]+\mathbb{E}_{\mu, \Lambda}[-\frac{1}{2} (\mu_{k}-m_{k})^T \beta \Lambda_{k} (\mu_{k}-m_{k}) ] \Bigl\} \nonumber \\
& = \sum_{k=1}^{K}\Bigl\{ -\frac{q}{2}\mathrm{ln}(2\pi)+ \frac{q}{2}\mathrm{ln}(\beta) + \frac{1}{2}\mathbb{E}_{\Lambda} [\mathrm{ln}(det(\Lambda_{k}))] \nonumber \\
&+\mathbb{E}_{\mu, \Lambda}[-\frac{1}{2} (\mu_{k}-\hat{m}_{k}+\hat{m}_{k}-m_{k})^T \beta \Lambda_{k} (\mu_{k}-\hat{m}_{k}+\hat{m}_{k}-m_{k}) ] \Bigl\} \nonumber \\
& = \sum_{k=1}^{K}\Bigl\{ -\frac{q}{2}\mathrm{ln}(2\pi)+ \frac{q}{2}\mathrm{ln}(\beta) + \frac{1}{2}\mathbb{E}_{\Lambda} [\mathrm{ln}(det(\Lambda_{k}))] +\mathbb{E}_{\mu, \Lambda}[-\frac{1}{2} (\mu_{k}-\hat{m}_{k})^T \beta \Lambda_{k} (\mu_{k}-\hat{m}_{k})] \nonumber \\
& -\mathbb{E}_{\mu, \Lambda}[(\mu_{k}-\hat{m}_{k})^T \beta \Lambda_{k} (\hat{m}_{k}-m_{k} ) ] -\mathbb{E}_{\Lambda}[\frac{1}{2}(\hat{m}_{k}-m_{k})^T\beta \Lambda_{k}(\hat{m}_{k}-m_{k})] \Bigl\} \nonumber \\
& = \sum_{k=1}^{K}\Bigl\{ -\frac{q}{2}\mathrm{ln}(2\pi)+ \frac{q}{2}\mathrm{ln}(\beta) + \frac{1}{2}\mathbb{E}_{\Lambda} [\mathrm{ln}(det(\Lambda_{k}))] -\frac{1}{2} (Tr(\mathbb{E}_{\mu, \Lambda}[(\mu_{k}-\hat{m}_{k})(\mu_{k}-\hat{m}_{k})^T \beta \Lambda_{k}])) \nonumber \\
& -\mathbb{E}_{\Lambda}[\frac{1}{2}(\hat{m}_{k}-m_{k})^T\beta \Lambda_{k}(\hat{m}_{k}-m_{k})] \Bigl\} \nonumber \\
& = \sum_{k=1}^{K}\Bigl\{ -\frac{q}{2}\mathrm{ln}(2\pi)+ \frac{q}{2}\mathrm{ln}(\beta) + \frac{1}{2}[q\mathrm{ln}(2)+\mathrm{ln}(det(\hat{\Phi}_{k}^{-1}))+\sum_{i=1}^{q} \digamma(\frac{\hat{\nu}_{k}+1-i}{2})] -\frac{q}{2} \hat{\beta}_{k}^{-1}\beta \nonumber \\
& -\frac{1}{2}(\hat{m}_{k}-m_{k})^T\beta \hat{\nu}_{k} \hat{\Phi}_{k}^{-1}(\hat{m}_{k}-m_{k}) \Bigl\}
\end{align}  

\begin{align} \label{eq: ELBOterm3}
& \mathbb{E}_{\Lambda} [\mathrm{ln}(p(\Lambda))]=\mathbb{E}_{\Lambda} [\mathrm{ln}(\prod_{i=1}^{K} \mathcal{W}(\Lambda_{k}|\nu, \Phi^{-1}))]=\sum_{i=1}^{K}\mathbb{E}_{\Lambda_{k}} [\mathrm{ln}(\mathcal{W}(\Lambda_{k}|\nu, \Phi^{-1}))] \nonumber \\
& = \sum_{i=1}^{K}\mathbb{E}_{\Lambda_{k}} [\mathrm{ln}( \frac{det(\Lambda_{k})^{(\nu-q-1)/2} \mathrm{exp}(-Tr(\Phi \Lambda_{k})/2)}{2^{\nu q/2} det(\Phi^{-1})^{\nu/2} \Gamma_q(\nu/2)} ) ] \nonumber \\
& = \sum_{i=1}^{K}\mathbb{E}_{\Lambda_{k}} [\frac{\nu-q-1}{2} \mathrm{ln}(det(\Lambda_{k}))-\frac{1}{2}Tr(\Phi \Lambda_{k})- \frac{\nu q}{2}\mathrm{ln}(2)-\frac{\nu}{2}\mathrm{ln}(det(\Phi^{-1}))-\mathrm{ln}(\Gamma_q(\nu/2))   ] \nonumber \\
& = \sum_{i=1}^{K}\Bigl\{\frac{\nu-q-1}{2} [q\mathrm{ln}(2)+\mathrm{ln}(det(\hat{\Phi}_{k}^{-1}))+\sum_{i=1}^{q} \digamma(\frac{\hat{\nu}_{k}+1-i}{2})]-\frac{1}{2}Tr(\Phi \hat{\nu}_{k} \hat{\Phi}_{k}^{-1}) \nonumber \\
&- \frac{\nu q}{2}\mathrm{ln}(2)-\frac{\nu}{2}\mathrm{ln}(det(\Phi^{-1}))-\mathrm{ln}(\Gamma_q(\nu/2)) \Bigl\}
\end{align}  

where $\Gamma_q(a)=\pi^{q(q-1)/4} \prod_{j=1}^{q} \Gamma(a+(1-j)/2)$

\begin{align} \label{eq: ELBOterm4}
\mathbb{E}_{\psi,z} [ \mathrm{ln}( p(c|\psi,z))] & = \mathbb{E}_{\psi,z} [\mathrm{ln}(\prod_{i=1}^{n}\prod_{k=1}^{K} \prod_{j=1}^{p}\psi_{k,j,c_{ij}}^{z_{ik}})] = \sum_{i=1}^{n}\sum_{k=1}^{K} \sum_{j=1}^{p} r_{ik}\mathbb{E}_{\psi} [\mathrm{ln}(\psi_{k,j,c_{ij}})] \nonumber \\
& = \sum_{i=1}^{n}\sum_{k=1}^{K} \sum_{j=1}^{p} r_{ik} \Bigl\{ \digamma(\eta_j+\sum_{s=1}^{n} I_{\{c_{sj}=c_{ij}\}}r_{sk} )-\digamma(d_j \eta_j+\sum_{g_j=1}^{d_j} \sum_{s=1}^{n} I_{\{c_{sj}=g_j\}}r_{sk}) \Bigl\}
\end{align}  

\begin{align} \label{eq: ELBOterm5}
&\mathbb{E}_{\psi}[\mathrm{ln}(p(\psi))]= \mathbb{E}_{\psi} [ \mathrm{ln}(\prod_{k=1}^{K} \prod_{j=1}^{p} Dir(\psi_{k,j}|\eta_j,\dots,\eta_j) )] = \sum_{k=1}^{K} \sum_{j=1}^{p} \mathbb{E}_{\psi} [\mathrm{ln}(Dir(\psi_{k,j}|\eta_j,\dots,\eta_j) )] \nonumber \\
&=\sum_{k=1}^{K} \sum_{j=1}^{p} \mathbb{E}_{\psi} [\mathrm{ln}(\frac{1}{B(\eta_j)}\prod_{g=1}^{d_j} \psi_{k,j,g}^{\eta_j-1})]=\sum_{k=1}^{K} \sum_{j=1}^{p} \mathbb{E}_{\psi} [\mathrm{ln}(\frac{1}{B(\eta_j)})+(\eta_j-1)\sum_{g=1}^{d_j} \mathrm{ln}(\psi_{k,j,g})] \nonumber \\
&=\sum_{k=1}^{K} \sum_{j=1}^{p} [\mathrm{ln}(\frac{1}{B(\eta_j)})+(\eta_j-1)\sum_{g=1}^{d_j} \Bigl\{ \digamma(\eta_j+\sum_{s=1}^{n} I_{\{c_{sj}=g\}}r_{sk} )-\digamma(d_j \eta_j+\sum_{g_j=1}^{d_j} \sum_{s=1}^{n} I_{\{c_{sj}=g_j\}}r_{sk}) \Bigl\}]
\end{align}  

where $B(\bm{\alpha})=\frac{\prod_{k=1}^{K}{\Gamma(\alpha_k)}}{\Gamma(\sum_{k=1}^{K}\alpha_k)}$, where with a slight abuse of notation we define $B(\eta_j)=\frac{\prod_{g=1}^{d_j}{\Gamma(\eta_j)}}{\Gamma(\sum_{g=1}^{d_j}\eta_j)}$ and $B(\alpha)=\frac{\prod_{k=1}^{K}{\Gamma(\alpha)}}{\Gamma(\sum_{k=1}^{K}\alpha)} $.

\begin{align} \label{eq: ELBOterm6}
\mathbb{E}_{\pi,z} [  \mathrm{ln}(p(z|\pi))]&=\mathbb{E}_{\pi,z} [\mathrm{ln}(\prod_{i=1}^{n}\prod_{k=1}^{K}\pi_{k}^{z_{ik}}) ]=\sum_{i=1}^{n}\sum_{k=1}^{K}r_{ik}\mathbb{E}_{\pi} [\mathrm{ln}(\pi_{k}) ] \nonumber \\
&=\sum_{i=1}^{n}\sum_{k=1}^{K}r_{ik}[\digamma(\alpha+\sum_{i=1}^{n}r_{ik})-\digamma(K \alpha+\sum_{j=1}^{K}\sum_{i=1}^{n}r_{ij})]
\end{align} 
 
\begin{align} \label{eq: ELBOterm7}
\mathbb{E}_{\pi} [\mathrm{ln}(p(\pi))]&=\mathbb{E}_{\pi} [\mathrm{ln}(Dir(\pi|\alpha,\dots,\alpha)) ]=\mathbb{E}_{\pi} [\mathrm{ln}(\frac{1}{B(\alpha)}\prod_{k=1}^{K} \pi_k^{\alpha-1} )) ] \nonumber \\
&= \mathrm{ln}(\frac{1}{B(\alpha)})+\sum_{k=1}^{K}(\alpha-1)\mathbb{E}_{\pi} [ \mathrm{ln}(\pi_k) ] \nonumber \\
&= \mathrm{ln}(\frac{1}{B(\alpha)})+\sum_{k=1}^{K}(\alpha-1)[\digamma(\alpha+\sum_{i=1}^{n}r_{ik})-\digamma(K \alpha+\sum_{j=1}^{K}\sum_{i=1}^{n}r_{ij})]
\end{align} 

\begin{align} \label{eq: ELBOterm8}
 \mathbb{E}_{\mu, \Lambda}[\mathrm{ln}(q(\mu, \Lambda))]&=\mathbb{E}_{\mu, \Lambda}[\mathrm{ln}(q(\mu|\Lambda)q(\Lambda) )]=\mathbb{E}_{\mu, \Lambda}[\mathrm{ln}(q(\mu|\Lambda))]+\mathbb{E}_{\Lambda}[\mathrm{ln}(q(\Lambda))] 
\end{align} 
We derive the two terms here separately for easier presentation.

\begin{align} \label{eq: ELBOterm8part1}
&\mathbb{E}_{\mu, \Lambda}[\mathrm{ln}(q(\mu|\Lambda))]=\mathbb{E}_{\mu, \Lambda}[\mathrm{ln}(\prod_{k=1}^{K} \mathcal{N}(\mu_{k}|\hat{m}_{k}, \hat{\beta}_{k}^{-1} \Lambda_{k}^{-1} )]=\sum_{k=1}^{K} \mathbb{E}_{\mu, \Lambda}[\mathrm{ln}( \mathcal{N}(\mu_{k}|\hat{m}_{k}, \hat{\beta}_{k}^{-1} \Lambda_{k}^{-1} )] \nonumber \\
&=\sum_{k=1}^{K} \mathbb{E}_{\mu, \Lambda}[\mathrm{ln} \Bigl\{ (2\pi)^{-q/2} det(\hat{\beta}_{k} \Lambda_{k})^{1/2} \mathrm{exp}(-\frac{1}{2} (\mu_{k}-\hat{m}_{k})^T \hat{\beta}_{k} \Lambda_{k} (\mu_{k}-\hat{m}_{k}) ) \Bigl\} ] \nonumber \\
& = \sum_{k=1}^{K}\Bigl\{ -\frac{q}{2}\mathrm{ln}(2\pi)+ \frac{1}{2}\mathbb{E}_{\Lambda} [\mathrm{ln}(det(\hat{\beta}_{k} \Lambda_{k}))]+\mathbb{E}_{\mu, \Lambda}[-\frac{1}{2} (\mu_{k}-\hat{m}_{k})^T \hat{\beta}_{k} \Lambda_{k} (\mu_{k}-\hat{m}_{k}) ] \Bigl\} \nonumber \\
& = \sum_{k=1}^{K}\Bigl\{ -\frac{q}{2}\mathrm{ln}(2\pi)+\frac{q}{2}\mathrm{ln}(\hat{\beta}_{k})+\frac{1}{2}\mathbb{E}_{\Lambda}[\mathrm{ln}(det(\Lambda_{k}))]+\mathbb{E}_{\mu, \Lambda}[-\frac{1}{2} Tr((\mu_{k}-\hat{m}_{k})(\mu_{k}-\hat{m}_{k})^T \hat{\beta}_{k} \Lambda_{k}) ] \Bigl\} \nonumber \\
& = \sum_{k=1}^{K}\Bigl\{ -\frac{q}{2}\mathrm{ln}(2\pi)+\frac{q}{2}\mathrm{ln}(\hat{\beta}_{k})+\frac{1}{2}\mathbb{E}_{\Lambda}[\mathrm{ln}(det(\Lambda_{k}))]-\frac{q}{2} \Bigl\} \nonumber \\
& = \sum_{k=1}^{K}\Bigl\{ -\frac{q}{2}\mathrm{ln}(2\pi)+\frac{q}{2}\mathrm{ln}(\hat{\beta}_{k})+\frac{1}{2}[q\mathrm{ln}(2)+\mathrm{ln}(det(\hat{\Phi}_{k}^{-1}))+\sum_{i=1}^{q} \digamma(\frac{\hat{\nu}_{k}+1-i}{2})]-\frac{q}{2} \Bigl\}
\end{align}

\begin{align} \label{eq: ELBOterm8part2}
& \mathbb{E}_{\Lambda}[\mathrm{ln}(q(\Lambda))]=\mathbb{E}_{\Lambda}[\prod_{k=1}^{K}\mathrm{ln}(\mathcal{W}(\Lambda_{k}|\hat{\nu}_{k}, \hat{\Phi}_{k}^{-1}))]=\sum_{k=1}^{K} \mathbb{E}_{\Lambda_{k}}[ \mathrm{ln}(\mathcal{W}(\Lambda_{k}|\hat{\nu}_{k}, \hat{\Phi}_{k}^{-1}))] \nonumber \\
& = \sum_{i=1}^{K}\mathbb{E}_{\Lambda_{k}} [\mathrm{ln}( \frac{det(\Lambda_{k})^{(\hat{\nu}_{k}-q-1)/2} \mathrm{exp}(-Tr(\hat{\Phi}_{k} \Lambda_{k})/2)}{2^{\hat{\nu}_{k} q/2} det(\hat{\Phi}_{k}^{-1})^{\hat{\nu}_{k}/2} \Gamma_q(\hat{\nu}_{k}/2)} ) ] \nonumber \\
& = \sum_{i=1}^{K}\mathbb{E}_{\Lambda_{k}} [\frac{\hat{\nu}_{k}-q-1}{2} \mathrm{ln}(det(\Lambda_{k}))-\frac{1}{2}Tr(\hat{\Phi}_{k} \Lambda_{k})- \frac{\hat{\nu}_{k} q}{2}\mathrm{ln}(2)-\frac{\hat{\nu}_{k}}{2}\mathrm{ln}(det(\hat{\Phi}_{k}^{-1}))-\mathrm{ln}(\Gamma_q(\hat{\nu}_{k}/2))   ] \nonumber \\
& = \sum_{i=1}^{K} \Bigl\{ \frac{\hat{\nu}_{k}-q-1}{2} [q\mathrm{ln}(2)+\mathrm{ln}(det(\hat{\Phi}_{k}^{-1}))+\sum_{i=1}^{q} \digamma(\frac{\hat{\nu}_{k}+1-i}{2})]-\frac{\hat{\nu}_{k} q}{2} \nonumber \\
&- \frac{\hat{\nu}_{k} q}{2}\mathrm{ln}(2)-\frac{\hat{\nu}_{k}}{2}\mathrm{ln}(det(\hat{\Phi}_{k}^{-1}))-\mathrm{ln}(\Gamma_q(\hat{\nu}_{k}/2)) \Bigl\}
\end{align} 

\begin{align} \label{eq: ELBOterm9}
&\mathbb{E}_{\psi}[\mathrm{ln}(q(\psi))] = \mathbb{E}_{\psi} [ \mathrm{ln}(\prod_{k=1}^{K} \prod_{j=1}^{p} Dir(\psi_{k,j}|\hat{\eta}_{k,j,1},\dots,\hat{\eta}_{k,j,d_j}) )] = \sum_{k=1}^{K} \sum_{j=1}^{p} \mathbb{E}_{\psi} [\mathrm{ln}(Dir(\psi_{k,j}|\hat{\eta}_{k,j,1},\dots,\hat{\eta}_{k,j,d_j}) )] \nonumber \\
&=\sum_{k=1}^{K} \sum_{j=1}^{p} \mathbb{E}_{\psi} [\mathrm{ln}(\frac{1}{B(\bm{\eta}_{k,j})}\prod_{g=1}^{d_j} \psi_{k,j,g}^{\hat{\eta}_{k,j,g}-1})]=\sum_{k=1}^{K} \sum_{j=1}^{p} \mathbb{E}_{\psi} [\mathrm{ln}(\frac{1}{B(\bm{\eta}_{k,j})})+(\hat{\eta}_{k,j,g}-1)\sum_{g=1}^{d_j} \mathrm{ln}(\psi_{k,j,g})] \nonumber \\
&=\sum_{k=1}^{K} \sum_{j=1}^{p} [\mathrm{ln}(\frac{1}{B(\bm{\eta}_{k,j})})+(\hat{\eta}_{k,j,g}-1)\sum_{g=1}^{d_j} \Bigl\{ \digamma(\eta_j+\sum_{s=1}^{n} I_{\{c_{sj}=g\}}r_{sk} )-\digamma(d_j \eta_j+\sum_{g_j=1}^{d_j} \sum_{s=1}^{n} I_{\{c_{sj}=g_j\}}r_{sk}) \Bigl\}]
\end{align} 

where $\hat{\eta}_{k,j,g}=\eta_j+\sum_{i=1}^{n} I_{\{c_{ij}=g\}}r_{ik}$

\begin{align} \label{eq: ELBOterm10}
\mathbb{E}_{\pi}[\mathrm{ln}(q(\pi))]&=\mathbb{E}_{\pi} [\mathrm{ln}(Dir(\pi|\alpha+\sum_{i=1}^{n}r_{i1},\dots,\alpha+\sum_{i=1}^{n}r_{iK} )) ]=\mathbb{E}_{\pi} [\mathrm{ln}(\frac{1}{B(\bm{\alpha})}\prod_{k=1}^{K} \pi_k^{\alpha_k-1} )) ] \nonumber  \\
&= \mathrm{ln}(\frac{1}{B(\bm{\alpha})})+\sum_{k=1}^{K}(\alpha_k-1)\mathbb{E}_{\pi} [ \mathrm{ln}(\pi_k) ] \nonumber \\
&= \mathrm{ln}(\frac{1}{B(\bm{\alpha})})+\sum_{k=1}^{K}(\alpha_k-1)[\digamma(\alpha+\sum_{i=1}^{n}r_{ik})-\digamma(K\alpha+\sum_{j=1}^{K}\sum_{i=1}^{n}r_{ij})]
\end{align} 

where $\hat{\alpha}_k=\alpha+\sum_{i=1}^{n}r_{ik}$

\begin{align} \label{eq: ELBOterm11}
\mathbb{E}_{z}[\mathrm{ln}(q(z))]=\mathbb{E}_{z}[\mathrm{ln}(\prod_{i=1}^{n}\prod_{k=1}^{K} r_{ik}^{z_{ik}} )]=\sum_{i=1}^{n}\sum_{k=1}^{K}\mathbb{E}_{z}[z_{ik}\mathrm{ln}( r_{ik} )]=\sum_{i=1}^{n}\sum_{k=1}^{K}r_{ik}\mathrm{ln}( r_{ik} )
\end{align} 

One can alternatively calculate some of the ELBO terms in a somewhat simplified way by pairing them up and computing the KL divergence instead. Specifically: 

\begin{align} \label{eq: ELBOKLversion} 
ELBO(q)& =\mathbb{E}_{q}[\mathrm{ln}(p(x, c, z, \mu, \Lambda, \psi, \pi))-\mathrm{ln}(q(\mu, \Lambda, \psi, \pi, z))] \\
& = \mathbb{E}_{q}[\mathrm{ln}(p(x|c,z,\mu,\Lambda)p(\mu|\Lambda)p(\Lambda)p(c|\psi,z)p(\psi)p(z|\pi)p(\pi)) - \mathrm{ln}(q(\mu, \Lambda)q(\psi)q(\pi)q(z)) ] \nonumber \\
& = \mathbb{E}_{\mu, \Lambda,z} [\mathrm{ln}( p(x|c,z,\mu,\Lambda))]+\mathbb{E}_{\mu, \Lambda} [\mathrm{ln}(p(\mu|\Lambda))]+\mathbb{E}_{\Lambda} [\mathrm{ln}(p(\Lambda))] \nonumber \\
&+ \mathbb{E}_{\psi,z} [ \mathrm{ln}( p(c|\psi,z))]+ \mathbb{E}_{\psi}[\mathrm{ln}(p(\psi))] + \mathbb{E}_{\pi,z} [  \mathrm{ln}(p(z|\pi))]+ \mathbb{E}_{\pi} [\mathrm{ln}(p(\pi))] \nonumber \\
& - \mathbb{E}_{\mu, \Lambda}[\mathrm{ln}(q(\mu, \Lambda))]-\mathbb{E}_{\psi}[\mathrm{ln}(q(\psi))]-\mathbb{E}_{\pi}[\mathrm{ln}(q(\pi))]-\mathbb{E}_{z}[\mathrm{ln}(q(z))] \nonumber \\
& = \mathbb{E}_{\mu, \Lambda,z} [\mathrm{ln}( p(x|c,z,\mu,\Lambda))]+ \mathbb{E}_{\psi,z} [ \mathrm{ln}( p(c|\psi,z))]+ \mathbb{E}_{\pi,z} [  \mathrm{ln}(p(z|\pi))]-\mathbb{E}_{z}[\mathrm{ln}(q(z))] \nonumber \\
&+ \mathbb{E}_{\mu, \Lambda} [\mathrm{ln}(p(\mu|\Lambda))] - \mathbb{E}_{\mu, \Lambda}[\mathrm{ln}(q(\mu|\Lambda))]+\mathbb{E}_{\Lambda} [\mathrm{ln}(p(\Lambda))]-\mathbb{E}_{\Lambda} [\mathrm{ln}(q(\Lambda))] \nonumber \\
&+\mathbb{E}_{\pi} [\mathrm{ln}(p(\pi))]-\mathbb{E}_{\pi}[\mathrm{ln}(q(\pi))]+ \mathbb{E}_{\psi}[\mathrm{ln}(p(\psi))]-\mathbb{E}_{\psi}[\mathrm{ln}(q(\psi))] \nonumber \\
& = \mathbb{E}_{\mu, \Lambda,z} [\mathrm{ln}( p(x|c,z,\mu,\Lambda))]+ \mathbb{E}_{\psi,z} [ \mathrm{ln}( p(c|\psi,z))]+ \mathbb{E}_{\pi,z} [  \mathrm{ln}(p(z|\pi))]-\mathbb{E}_{z}[\mathrm{ln}(q(z))] \nonumber \\
&- \mathbb{E}_{\Lambda} [\infdiv{q(\mu|\Lambda)}{p(\mu|\Lambda)}]-\infdiv{q(\Lambda)}{p(\Lambda)}-\infdiv{q(\pi)}{p(\pi)}-\infdiv{q(\psi)}{p(\psi)}
\end{align} 

The KL terms are all KL divergences of well known distributions with analytical expressions, which can be evaluated as follows:

\begin{align} \label{eq: ELBOKLterm1} 
& \mathbb{E}_{\Lambda} [\infdiv{q(\mu|\Lambda)}{p(\mu|\Lambda)}] =\mathbb{E}_{\Lambda} [\infdiv{ \prod_{k=1}^{K} \mathcal{N}(\mu_{k}|\hat{m}_{k}, \hat{\beta}_{k}^{-1} \Lambda_{k}^{-1} ) }{ \prod_{k=1}^{K}  \mathcal{N}(\mu_{k}|m_{k}, \beta^{-1} \Lambda_{k}^{-1} ) } ] \nonumber \\
&=\sum_{k=1}^{K} \mathbb{E}_{\Lambda} [\infdiv{ \mathcal{N}(\mu_{k}|\hat{m}_{k}, \hat{\beta}_{k}^{-1} \Lambda_{k}^{-1} ) }{ \mathcal{N}(\mu_{k}|m_{k}, \beta^{-1} \Lambda_{k}^{-1} ) } ] \nonumber \\
&=\sum_{k=1}^{K} \frac{1}{2} \Bigl\{ \mathbb{E}_{\Lambda} [\mathrm{ln}(\frac{det(\beta^{-1} \Lambda_{k}^{-1})}{det(\hat{\beta}_{k}^{-1} \Lambda_{k}^{-1})}) -q + (\hat{m}_{k}-m_{k})^T \beta \Lambda_{k} (\hat{m}_{k}-m_{k})+Tr(\beta \Lambda_{k} \hat{\beta}_{k}^{-1} \Lambda_{k}^{-1}) ] \Bigl\} \nonumber \\
&=\sum_{k=1}^{K} \frac{1}{2} \Bigl\{ \mathbb{E}_{\Lambda} [\mathrm{ln}(\frac{\beta^{-q} det(\Lambda_{k}^{-1})}{\hat{\beta}_{k}^{-q} det( \Lambda_{k}^{-1})}) -q + (\hat{m}_{k}-m_{k})^T \beta \Lambda_{k} (\hat{m}_{k}-m_{k})+q\beta \hat{\beta}_{k}^{-1} ] \Bigl\} \nonumber \\
&=\sum_{k=1}^{K} \frac{1}{2} \Bigl\{-q\mathrm{ln}(\beta)+q\mathrm{ln}(\hat{\beta}_{k}) -q + (\hat{m}_{k}-m_{k})^T \beta \hat{\nu}_{k} \hat{\Phi}_{k}^{-1} (\hat{m}_{k}-m_{k})+q\beta \hat{\beta}_{k}^{-1} \Bigl\}
\end{align} 

\begin{align} \label{eq: ELBOKLterm2}
&\infdiv{q(\Lambda)}{p(\Lambda)} =\infdiv{\prod_{k=1}^{K} \mathcal{W}(\Lambda_{k}|\hat{\nu}_{k}, \hat{\Phi}_{k}^{-1})}{\prod_{i=1}^{K}\mathcal{W}(\Lambda_k|\nu, \Phi^{-1})} \nonumber \\
&=\sum_{k=1}^{K}\infdiv{ \mathcal{W}(\Lambda_{k}|\hat{\nu}_{k}, \hat{\Phi}_{k}^{-1})}{ \mathcal{W}(\Lambda_{k}|\nu, \Phi^{-1})} \nonumber \\
&=\sum_{k=1}^{K} \frac{1}{2}\Bigl\{\nu\mathrm{ln}(det(\Phi^{-1}))-\nu\mathrm{ln}(det(\hat{\Phi}_{k}^{-1}))+\hat{\nu}_{k} Tr(\Phi \hat{\Phi}_{k}^{-1} )+2\mathrm{ln}(\Gamma_q(\nu/2))-2\mathrm{ln}(\Gamma_q(\hat{\nu}_{k}/2)) \nonumber \\
&+(\hat{\nu}_{k}-\nu)\sum_{i=1}^{q} \digamma(\frac{\hat{\nu}_{k}+1-i}{2})-\hat{\nu}_{k} q  \Bigl\}
\end{align} 

\begin{align} \label{eq: ELBOKLterm3}
&\infdiv{q(\pi)}{p(\pi)}=\infdiv{Dir(\pi|\alpha+\sum_{i=1}^{n}r_{i1},\dots,\alpha+\sum_{i=1}^{n}r_{iK})}{Dir(\pi|\alpha,\dots,\alpha)} \nonumber \\
&=\mathrm{ln}(\Gamma(\sum_{k=1}^{K} (\alpha+\sum_{i=1}^{n}r_{ik} )) )-\mathrm{ln}(\Gamma(\sum_{k=1}^{K} \alpha))+\sum_{k=1}^{K} \mathrm{ln}(\Gamma(\alpha))- \sum_{k=1}^{K} \mathrm{ln}(\Gamma(\alpha+\sum_{i=1}^{n}r_{ik}) ) \nonumber \\
&+\sum_{k=1}^{K} (\alpha+\sum_{i=1}^{n}r_{ik} -\alpha )\Bigl\{ \digamma(\alpha+\sum_{i=1}^{n}r_{ik}))-\digamma(\sum_{k=1}^{K} (\alpha+\sum_{i=1}^{n}r_{ik})) \Bigl\} \nonumber \\
&=\mathrm{ln}(\Gamma(K \alpha+\sum_{k=1}^{K} \sum_{i=1}^{n}r_{ik} )) -\mathrm{ln}(\Gamma(K \alpha))+\sum_{k=1}^{K} \mathrm{ln}(\Gamma(\alpha))- \sum_{k=1}^{K} \mathrm{ln}(\Gamma(\alpha+\sum_{i=1}^{n}r_{ik}) ) \nonumber \\
&+\sum_{k=1}^{K} \sum_{i=1}^{n}r_{ik}\Bigl\{ \digamma(\alpha+\sum_{i=1}^{n}r_{ik})-\digamma(K \alpha+\sum_{k=1}^{K}\sum_{i=1}^{n}r_{ik}) \Bigl\}
\end{align} 


\begin{align} \label{eq: ELBOKLterm4}
& \infdiv{q(\psi)}{p(\psi)}=\infdiv{\prod_{k=1}^{K} \prod_{j=1}^{p} Dir(\psi_{k,j}|\hat{\eta}_{k,j,1},\dots,\hat{\eta}_{k,j,d_j})}{\prod_{k=1}^{K} \prod_{j=1}^{p} Dir(\psi_{k,j}|\eta_j,\dots,\eta_j)} \nonumber \\
&=\sum_{k=1}^{K} \sum_{j=1}^{p} \infdiv{ Dir(\psi_{k,j}|\hat{\eta}_{k,j,1},\dots,\hat{\eta}_{k,j,d_j})}{ Dir(\psi_{k,j}|\eta_j,\dots,\eta_j)} \nonumber \\
&=\sum_{k=1}^{K} \sum_{j=1}^{p} [\mathrm{ln}(\Gamma(\sum_{g_j=1}^{d_j} \hat{\eta}_{k,j,g_j}) )-\mathrm{ln}(\Gamma(\sum_{g_j=1}^{d_j} \eta_j))+\sum_{g_j=1}^{d_j} \mathrm{ln}(\Gamma(\eta_j))- \sum_{g_j=1}^{d_j} \mathrm{ln}(\Gamma(\hat{\eta}_{k,j,g_j}) ) \nonumber \\
&+\sum_{g_j=1}^{d_j} (\hat{\eta}_{k,j,g_j}-\eta_j)\Bigl\{ \digamma(\hat{\eta}_{k,j,g_j})-\digamma(\sum_{g_j=1}^{d_j}\hat{\eta}_{k,j,g_j}) \Bigl\} ] \nonumber \\
&=\sum_{k=1}^{K} \sum_{j=1}^{p} [\mathrm{ln}(\Gamma(\sum_{g_j=1}^{d_j} (\eta_j+\sum_{s=1}^{n} I_{\{c_{sj}=g_j\}}r_{sk}) ))-\mathrm{ln}(\Gamma(d_j \eta_j))+\sum_{g_j=1}^{d_j} \mathrm{ln}(\Gamma(\eta_j))- \sum_{g_j=1}^{d_j} \mathrm{ln}(\Gamma( \eta_j+\sum_{s=1}^{n} I_{\{c_{sj}=g_j\}}r_{sk}) ) \nonumber \\
&+\sum_{g_j=1}^{d_j} ( \eta_j+\sum_{s=1}^{n} I_{\{c_{sj}=g_j\}}r_{sk}-\eta_j)\Bigl\{ \digamma(\eta_j+\sum_{s=1}^{n} I_{\{c_{sj}=g_j\}}r_{sk})-\digamma(\sum_{g_j=1}^{d_j} (\eta_j+\sum_{s=1}^{n} I_{\{c_{sj}=g_j\}}r_{sk}) ) \Bigl\} ] \nonumber \\
&=\sum_{k=1}^{K} \sum_{j=1}^{p} [\mathrm{ln}(\Gamma(d_j \eta_j+\sum_{g_j=1}^{d_j}\sum_{s=1}^{n} I_{\{c_{sj}=g_j\}}r_{sk} ))-\mathrm{ln}(\Gamma(d_j \eta_j))+\sum_{g_j=1}^{d_j} \mathrm{ln}(\Gamma(\eta_j))- \sum_{g_j=1}^{d_j} \mathrm{ln}(\Gamma( \eta_j+\sum_{s=1}^{n} I_{\{c_{sj}=g_j\}}r_{sk}) ) \nonumber \\
&+\sum_{g_j=1}^{d_j} \sum_{s=1}^{n} I_{\{c_{sj}=g_j\}}r_{sk}\Bigl\{ \digamma(\eta_j+\sum_{s=1}^{n} I_{\{c_{sj}=g_j\}}r_{sk})-\digamma(d_j \eta_j+\sum_{g_j=1}^{d_j}\sum_{s=1}^{n} I_{\{c_{sj}=g_j\}}r_{sk})  \Bigl\} ]
\end{align}

where $\hat{\eta}_{k,j,g_j}=\eta_j+\sum_{s=1}^{n} I_{\{c_{sj}=g_j\}}r_{sk}$

\section{Gibbs sampler for the model} \label{sec: gibbsupdate}

With a Gibbs sampler we want to calculate the conditional distribution of every parameter conditioned on every other parameter. Recall the parameters are $\mu_{k}$, $\Sigma_{k}$, $\pi_{k}$, $\psi_{k,j}$, $z_i$.

First we update $z_i|x,c,\mu,\Sigma,\psi, \pi$, for each $1 \leq i \leq n$, which has conditional distribution:

\begin{align} \label{eq: zupdate} 
p(z_i=k|x,c,\mu,\Sigma,\psi, \pi) & \propto p(z_i=k|\pi)p(x_i|c_i, z_i, \mu, \Sigma)p(c_i|\psi,z_i=k) \nonumber \\
& \propto \pi_{k}N(x_i|\mu_{k},\Sigma_{k})\prod_{j=1}^{p} \psi_{k,j,c_{ij}}
\end{align} 

This is just a discrete distribution so you normalise this probability by dividing by the sum of the probabilities over $k$. 

Next we update $\psi_{k,j}$ for each $1 \leq k \leq K, 1 \leq j \leq p$

\begin{align} \label{eq: Psiupdate} 
p(\psi_{k,j}|x,c,z,\mu,\Sigma, \pi) & \propto p(\psi_{k,j})\prod_{i: z_i=k}^{}p(c_{ij}|z_i=k,\psi) \nonumber  \\
& \propto Dir(\psi_{k,j}|\eta_j,\dots,\eta_j)\prod_{i: z_i=k}^{} \psi_{k,j,c_{ij}}
\end{align} 

This is a Dirichlet prior with a categorical data likelihood, which gives a conjugate Dirichlet posterior. In particular, the posterior is of the form:

\begin{align} \label{eq: Psiposterior} 
p(\psi_{k,j}|x,c,z,\mu,\Sigma, \pi) \sim Dir(\eta_j+\sum_{i=1}^{n}I(c_{ij}=1,z_i=k),\dots,\eta_j+\sum_{i=1}^{n}I(c_{ij}=d_j,z_i=k))
\end{align}

Update $\mu_{k}, \Sigma_{k}$ in a block:

\begin{align} \label{eq: muSigmaupdate} 
p(\mu_{k},\Sigma_{k}|x, c, z, \pi, \psi) & \propto p(\mu_{k}|\Sigma_{k})p(\Sigma_{k})\prod_{i: z_i=k}^{}p(x_i|z_i=k, c_i, \mu, \Sigma) \nonumber  \\
& \propto \mathcal{N}(\mu_{k}|m_{k}, \beta_{0}^{-1}\Sigma_{k})\mathcal{IW}(\Sigma_{k}|\nu, \Phi)\prod_{i: z_i=k}^{} \mathcal{N}(x_i|\mu_{k}, \Sigma_{k})
\end{align} 

This is a Normal-inverse Wishart prior with a Gaussian likelihood, which is conjugate and gives another Normal-inverse Wishart posterior of the form:

\begin{align} \label{eq: muSigmaposterior} 
& p(\mu_{k}|\Sigma_{k},x, c, z, \pi, \psi) \sim \mathcal{N}(\mu_{k}|\frac{\beta_{0}m_{k}+\sum_{i: z_i=k}x_i}{\beta_{0}+n_{k}}, \frac{\Sigma_{k}}{\beta_{0}+n_{k}}) \nonumber \\
& p(\Sigma_{k}|x, c, z, \pi, \psi) \sim \mathcal{IW}(\Sigma_{k}|\nu+n_{k}, \Phi+\sum_{i: z_i=k} (x_i-\bar{x}_{k})(x_i-\bar{x}_{k})^T+\frac{\beta_{0}n_{k}}{\beta_{0}+n_{k}}(\bar{x}_{k}-m_{k})(\bar{x}_{k}-m_{k})^T )
\end{align} 

where $\bar{x}_{k}=\frac{1}{n_{k}}\sum_{i: z_i=k}x_i$

\medskip

Lastly, we update $\pi$:

\begin{align} \label{eq: piupdate} 
p(\pi|x, c, z, \mu, \gamma, \Sigma, \psi, s) & \propto p(\pi) \prod_{i=1}^{n}p(z_i|\pi) \nonumber \\
& \propto Dir(\alpha,\dots,\alpha)\prod_{i=1}^{n} \pi_{z_i}
\end{align} 

Again like the $\psi$ update this is a Dirichlet prior with categorical likelihood so we have conjugate posterior:

\begin{align} \label{eq: piposterior}
p(\pi|x, c, z, \mu, \gamma, \Sigma, \psi, s) \sim Dir(\alpha+\sum_{i=1}^{n}I(z_i=1),\dots,\alpha+\sum_{i=1}^{n}I(z_i=K))
\end{align}

\section{Proof of Theoretical Results} \label{sec: proofssection}

\subsection{Proof of \thmmaintheorem} \label{subsec: theorem1proof}

We have
\begin{align} \label{eq: Tgradient} 
\nabla T^{\epsilon}(\Theta)=(1-\epsilon)I+\epsilon
\begin{bmatrix}
\nabla_{\bm{\hat{\pi}}} \Pi & \nabla_{\bm{\hat{\mu}}} \Pi & \nabla_{\bm{\hat{\Lambda}}} \Pi & \nabla_{\bm{\hat{\psi}}} \Pi \\
\nabla_{\bm{\hat{\pi}}} M & \nabla_{\bm{\hat{\mu}}} M & \nabla_{\bm{\hat{\Lambda}}} M & \nabla_{\bm{\hat{\psi}}} M \\
\nabla_{\bm{\hat{\pi}}} S & \nabla_{\bm{\hat{\mu}}} S & \nabla_{\bm{\hat{\Lambda}}} S & \nabla_{\bm{\hat{\psi}}} S \\
\nabla_{\bm{\hat{\pi}}} \Psi & \nabla_{\bm{\hat{\mu}}} \Psi & \nabla_{\bm{\hat{\Lambda}}} \Psi & \nabla_{\bm{\hat{\psi}}} \Psi 
\end{bmatrix}
\end{align}


Because the gradient terms in $\nabla T^{\epsilon}(\Theta)$ involves $r_{ik}^{(t-1)}$ or the derivatives of $r_{ik}^{(t-1)}$, we need to calculate the limit of $r_{ik}^{(t-1)}$ and the derivatives of $r_{ik}^{(t-1)}$ as $n \rightarrow \infty$. We start with the limit of $r_{ik}^{(t-1)}$, recall that $r_{ik}^{(t-1)}=r_{ik}(\Theta^{(t-1)})= \frac{\rho_{ik}(\Theta^{(t-1)}) }{\sum_{j=1}^{K} \rho_{ij}(\Theta^{(t-1)})} $, where

\begin{align} \label{eq: rho_ikTheta}
\rho_{ik}(\Theta)=\mathrm{exp}\Bigl\{ \mathbb{E}_{\mu, \Lambda} [\mathrm{ln}(\mathcal{N}(x_i|\mu_{k}, \Lambda_{k}^{-1} ) )]+\mathbb{E}_{\psi} [\mathrm{ln}(\psi_{k,c_{i}}) ]+\mathbb{E}_{\pi} [\mathrm{ln}(\pi_{k} ) ] \Bigl\} 
\end{align}

Where the expectations $\mathbb{E}_{\mu, \Lambda}, \mathbb{E}_{\psi}, \mathbb{E}_{\pi}$ are with respect to the variational posterior with hyperparameter values computed at $\Theta$, which recall is related to $\Theta$ through the following relationships:

\begin{align} \label{eq: simplifiedtohyperparaappendix}
\hat{\alpha}_k&=n \hat{\pi}_k + \alpha \nonumber \\
\hat{m}_k&=(n\hat{\mu}_k \hat{\pi}_k+\beta m_k)/(n\hat{\pi}_k+\beta) \nonumber \\
\hat{\beta}_k&=n \hat{\pi}_k+\beta \nonumber \\
\hat{\nu}_k&=n \hat{\pi}_k+\nu \nonumber \\
\hat{\Phi}_k &=n \hat{\pi}_k (\hat{\Lambda}_k)^{-1}+ \frac{n \hat{\pi}_k }{ n \hat{\pi}_k+\beta} \beta (\hat{\mu}_k-m_k)(\hat{\mu}_k-m_k)^T+\Phi  \nonumber \\
\hat{\eta}_{k,g}&=n \hat{\pi}_k\hat{\psi}_{k,g}+\eta
\end{align}


We will make use of the following lemma:

\begin{lemma}[Lemma 1 of \cite{BoWang2006}]\label{lemma: BoWanglemma1}
Suppose that $p_n(x)$ is the probability density function of the $\mathbb{R}^m$ valued random vector $X_n=(x_n^1, \dots, x_n^m)^T$, that $\mathbb{E}(X_n) = \mu_n \rightarrow \mu$ and $Cov_{ij} (X_n)=O(\frac{1}{n})$ as $n \rightarrow \infty$. Then, for any function $f(.)$ with continuous second order derivative near $\mu$, it holds that:
$$\mathbb{E}(f(X_n))=f(\mu_n)+O(\frac{1}{n})$$
\end{lemma}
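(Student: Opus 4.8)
The plan is to prove the statement by a second-order Taylor expansion of $f$ about the mean $\mu_n$, taking expectations, and then bounding the quadratic remainder using the hypothesis on the covariances. Throughout, recall that $m$ is fixed, so $\mathbb{E}\|X_n-\mu_n\|^2=\sum_{i=1}^m\mathrm{Var}(x_n^i)=\sum_{i=1}^m\mathrm{Cov}_{ii}(X_n)=O(1/n)$, and that $\mu_n\to\mu$, so $\mu_n$ eventually lies in any fixed neighbourhood of $\mu$.

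First I would fix $r>0$ with $f$ of class $C^2$ on the closed ball $\overline{B(\mu,r)}$, put $M=\sup_{x\in\overline{B(\mu,r)}}\|\nabla^2 f(x)\|_{op}<\infty$ by continuity and compactness, and take $n$ large enough that $\|\mu_n-\mu\|<r/2$. Since $f$ is only assumed smooth near $\mu$, I would reduce to the case of a globally defined $C^2$ function with bounded Hessian: replace $f$ outside $B(\mu,r/2)$ by a bounded $C^2$ extension $\tilde f$ agreeing with $f$ on $B(\mu,r/2)$. Then $f(X_n)$ and $\tilde f(X_n)$ differ only on the event $\{\|X_n-\mu\|\ge r/2\}$, which by Chebyshev's inequality has probability at most $(r/2)^{-2}\,\mathbb{E}\|X_n-\mu_n\|^2 = O(1/n)$; as $f-\tilde f$ is bounded, swapping $f$ for $\tilde f$ changes $\mathbb{E}(f(X_n))$ by only $O(1/n)$, which is harmless for the claimed conclusion. (In the uses of this lemma in the present paper the relevant quantities concentrate as $n\to\infty$, so this tail is in fact negligible a fortiori.)

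With $f$ now taken to have a bounded continuous Hessian everywhere, I would write the exact integral form of Taylor's theorem,
\begin{align}
f(X_n)=f(\mu_n)+\nabla f(\mu_n)^\top(X_n-\mu_n)+\int_0^1(1-t)\,(X_n-\mu_n)^\top\nabla^2 f\big(\mu_n+t(X_n-\mu_n)\big)(X_n-\mu_n)\,dt .
\end{align}
Taking expectations, the linear term vanishes because $\mathbb{E}(X_n-\mu_n)=0$. The remainder is bounded in absolute value by $\tfrac12\big(\sup_x\|\nabla^2 f(x)\|_{op}\big)\|X_n-\mu_n\|^2$, whose expectation is $\tfrac12\big(\sup_x\|\nabla^2 f(x)\|_{op}\big)\sum_{i=1}^m\mathrm{Cov}_{ii}(X_n)=O(1/n)$. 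Hence $\mathbb{E}(f(X_n))=f(\mu_n)+O(1/n)$, as required.

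The main obstacle is not analytical depth but care at the point where $X_n$ can leave the region on which $f$ is assumed twice differentiable: one must argue that this event contributes only $O(1/n)$, which I would handle via the truncation-and-Chebyshev step above (or, in the paper's applications, by the concentration of the arguments of $f$ around their limits). Everything else is the routine delta-method computation.
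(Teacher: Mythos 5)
The paper offers no proof of this lemma: it is quoted from \cite{BoWang2006} and the proof is deferred to that reference, so there is nothing in-paper to compare against. Your Taylor-expansion argument is the standard one and is essentially what the cited source does. The core computation is correct: once $f$ has a bounded continuous Hessian, the integral-form remainder is dominated by $\tfrac{M}{2}\|X_n-\mu_n\|^2$, whose expectation equals $\tfrac{M}{2}\sum_{i=1}^m \mathrm{Cov}_{ii}(X_n)=O(1/n)$ since $m$ is fixed, and the linear term vanishes because the expansion is taken about the exact mean $\mu_n$.

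The one step that does not follow from the stated hypotheses is the assertion that $f-\tilde f$ is bounded. The lemma only assumes $f$ has a continuous second derivative \emph{near} $\mu$; away from that neighbourhood $f$ may be unbounded or undefined --- and in the paper's actual applications $f$ is $\ln(\cdot)$ or $\ln\det(\cdot)$ applied to Dirichlet or Wishart variates, which do blow up on the boundary of the support. Consequently $P(\|X_n-\mu\|\ge r/2)=O(1/n)$ does not by itself give $\mathbb{E}\bigl[\,|f-\tilde f|(X_n)\,\mathbf{1}_{\{\|X_n-\mu\|\ge r/2\}}\bigr]=O(1/n)$; one needs a uniform integrability or moment bound on $f(X_n)$ over the tail event. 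This imprecision is already present in the lemma as stated (and in the original source), and you rightly remark that in the concrete uses here the tail is controllable, but as written the truncation step is asserted rather than proved --- the clean fix is to add the hypothesis $\mathbb{E}\bigl[|f(X_n)|\,\mathbf{1}_{\{\|X_n-\mu\|\ge r/2\}}\bigr]=O(1/n)$ and verify it in each application. A minor bookkeeping point: the Chebyshev bound should be applied to the event $\{\|X_n-\mu_n\|\ge r/2-\|\mu_n-\mu\|\}$, since the truncation event is defined relative to $\mu$ while the second-moment bound is relative to $\mu_n$; this only changes the constant.
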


\begin{proof}
See \cite{BoWang2006}
\end{proof}

\begin{lemma}\label{lemma: rho_iklimit}
As $n \rightarrow \infty, \rho_{ik}(\Theta^*) = \pi_{k}^*p(x_i,c_i|z_i=k,\Theta^*)+O(\frac{1}{n})$. Consequently, $r_{ik}(\Theta^*) \rightarrow \frac{\pi_{k}^*p(x_i,c_i|z_i=k,\Theta^*) }{\sum_{k'=1}^{K}\hat{\pi}_{k'}^*p(x_i,c_i|z_i=k',\Theta^*)}$. Here $p(x_i,c_i|z_i=k,\Theta^*)=N(x_i,c_i|\mu_k^*, \Lambda_k^{*-1} ) \psi_{k,c_i}^*$
\end{lemma}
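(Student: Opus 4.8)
The plan is to expand the three expectations appearing in the exponent of $\rho_{ik}(\Theta^*)$ and show that, after substituting the hyperparameter relationships in \Cref{eq: simplifiedtohyperparaappendix} evaluated at $\Theta^*$, each converges to the corresponding term in $\ln\bigl(\pi_k^* \mathcal{N}(x_i|\mu_k^*,\Lambda_k^{*-1})\prod_j \psi_{k,j,c_{ij}}^*\bigr)$ up to an $O(1/n)$ error. First I would record that at $\Theta = \Theta^*$ we have $\hat{\beta}_k = n\pi_k^* + \beta$, $\hat{\nu}_k = n\pi_k^* + \nu$, $\hat{m}_k = (n\mu_k^*\pi_k^* + \beta m_k)/(n\pi_k^* + \beta) \to \mu_k^*$, $\hat{\Phi}_k = n\pi_k^*(\Lambda_k^*)^{-1} + \frac{n\pi_k^*\beta}{n\pi_k^*+\beta}(\mu_k^*-m_k)(\mu_k^*-m_k)^T + \Phi$, so that $\hat{\Phi}_k/n \to \pi_k^*(\Lambda_k^*)^{-1}$ and $\hat{\nu}_k \hat{\Phi}_k^{-1} \to \Lambda_k^*$, and $\hat{\eta}_{k,g} = n\pi_k^*\psi_{k,g}^* + \eta$, $\hat{\alpha}_k = n\pi_k^* + \alpha$.

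Next I would handle the three expectations in turn, using the closed forms already derived in the supplement. For $\mathbb{E}_{\mu,\Lambda}[\ln\mathcal{N}(x_i|\mu_k,\Lambda_k^{-1})]$, equation~\eqref{eq: variationalposterioremulambda} together with \eqref{eq: variationalposterioremulambda2} gives it as $-\frac{q}{2}\ln(2\pi) + \frac12\bigl(q\ln 2 + \ln\det(\hat{\Phi}_k^{-1}) + \sum_{i=1}^q \digamma(\frac{\hat{\nu}_k+1-i}{2})\bigr) - \frac12\bigl((x_i-\hat{m}_k)^T\hat{\nu}_k\hat{\Phi}_k^{-1}(x_i-\hat{m}_k) + q\hat{\beta}_k^{-1}\bigr)$; I would show the $q\ln 2 + \ln\det\hat{\Phi}_k^{-1} + \sum_i\digamma(\cdots)$ combination converges to $\ln\det(\Lambda_k^*)$ (using $\digamma(x) = \ln x + O(1/x)$ and $\ln\det(\hat{\Phi}_k^{-1}) = \ln\det((\hat{\Phi}_k/n)^{-1}) - q\ln n$, with the $q\ln n$ cancelling against $\sum_i \digamma(\frac{\hat{\nu}_k+1-i}{2}) = q\ln(n\pi_k^*/2) + O(1/n)$), that $q\hat{\beta}_k^{-1} = O(1/n)$, and that $(x_i-\hat{m}_k)^T\hat{\nu}_k\hat{\Phi}_k^{-1}(x_i-\hat{m}_k) \to (x_i-\mu_k^*)^T\Lambda_k^*(x_i-\mu_k^*)$; the total is $\ln\mathcal{N}(x_i|\mu_k^*,\Lambda_k^{*-1}) + O(1/n)$. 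For the Dirichlet expectations $\mathbb{E}_\psi[\ln\psi_{k,c_i}] = \digamma(\hat{\eta}_{k,c_i}) - \digamma(\sum_g \hat{\eta}_{k,g})$ and $\mathbb{E}_\pi[\ln\pi_k] = \digamma(\hat{\alpha}_k) - \digamma(\sum_{k'}\hat{\alpha}_{k'})$, I would again apply $\digamma(x) = \ln x + O(1/x)$: $\digamma(n\pi_k^*\psi_{k,c_i}^* + \eta) - \digamma(n\pi_k^* + d\eta) = \ln(n\pi_k^*\psi_{k,c_i}^*) - \ln(n\pi_k^*) + O(1/n) = \ln\psi_{k,c_i}^* + O(1/n)$, and similarly $\mathbb{E}_\pi[\ln\pi_k] = \ln\pi_k^* + O(1/n)$. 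Exponentiating the sum of these three terms and using $e^{x+O(1/n)} = e^x(1 + O(1/n))$ gives $\rho_{ik}(\Theta^*) = \pi_k^* \mathcal{N}(x_i|\mu_k^*,\Lambda_k^{*-1})\psi_{k,c_i}^* + O(1/n) = \pi_k^* p(x_i,c_i|z_i=k,\Theta^*) + O(1/n)$, and the claim for $r_{ik}(\Theta^*)$ follows by dividing numerator and denominator and taking $n\to\infty$ (the denominator limit is bounded away from zero since each $\pi_{k'}^* > 0$).

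The main obstacle is bookkeeping the $\ln n$ terms so they cancel exactly: the $\ln\det$ of the Wishart scale matrix contributes $-q\ln n$ which must be matched against the $q\ln n$ coming from the sum of digammas of half-integer-shifted degrees of freedom, and one must be careful that the $O(1/n)$ remainders from $\digamma(x) = \ln x + \frac{1}{2x} + O(1/x^2)$ are uniform enough (they are, since $\hat{\nu}_k, \hat{\beta}_k, \hat{\eta}_{k,g}, \hat{\alpha}_k$ all grow linearly in $n$ with positive rate $\pi_k^*$ or $\pi_k^*\psi_{k,g}^*$). A secondary point is that the $x_i$ here is a fixed data point, not growing, so $(x_i - \hat{m}_k)$ stays bounded and the quadratic form converges without needing \Cref{lemma: BoWanglemma1}; that lemma is not strictly needed for this particular statement (it controls the expectation over the variational posterior, but the closed forms already collapse those expectations to explicit hyperparameter expressions), though one could invoke it as an alternative route to evaluating $\mathbb{E}_{\mu,\Lambda}[\cdot]$. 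I would close by remarking that this is the mixed-data analogue of the corresponding step in \cite{BoWang2006}, the only genuinely new ingredient being the categorical factor $\mathbb{E}_\psi[\ln\psi_{k,c_i}] \to \ln\psi_{k,c_i}^*$.
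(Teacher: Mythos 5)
Your proposal is correct, and it reaches the same three-way decomposition and the same limiting identifications as the paper ($\hat{m}_k\to\mu_k^*$, $\hat{\nu}_k\hat{\Phi}_k^{-1}\to\Lambda_k^*$, $q\hat{\beta}_k^{-1}=O(1/n)$, then exponentiate). Where you genuinely diverge is in how the three expectations are evaluated. The paper handles $\tfrac12\mathbb{E}_{\Lambda_k}[\ln\det\Lambda_k]$, $\mathbb{E}_\pi[\ln\pi_k]$ and $\mathbb{E}_\psi[\ln\psi_{k,c_i}]$ by invoking \Cref{lemma: BoWanglemma1} (the second-order moment-expansion lemma from \cite{BoWang2006}): it checks that the variational posterior means converge and the covariances are $O(1/n)$, and then reads off $\mathbb{E}[f(X_n)]=f(\mu_n)+O(1/n)$. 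You instead work directly from the closed-form expressions (digamma differences for the Dirichlet terms, and $q\ln 2+\ln\det\hat{\Phi}_k^{-1}+\sum_i\digamma(\tfrac{\hat{\nu}_k+1-i}{2})$ for the log-determinant) and expand via $\digamma(x)=\ln x+O(1/x)$, tracking the cancellation of the $\pm q\ln n$ contributions. Both routes are sound; your bookkeeping of the $\ln n$ cancellation checks out ($\ln\det\hat{\Phi}_k^{-1}$ contributes $-q\ln n-q\ln\pi_k^*+\ln\det\Lambda_k^*$ while the digamma sum contributes $+q\ln n+q\ln(\pi_k^*/2)$, and the $q\ln 2$ terms cancel). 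Your approach is more elementary and self-contained — it needs only the digamma asymptotic rather than a separate probabilistic lemma — and your observation that \Cref{lemma: BoWanglemma1} is not strictly required here is accurate, since the mean-field conjugacy already collapses every expectation to an explicit function of the hyperparameters. What the paper's route buys is uniformity of presentation: the same Lemma~\ref{lemma: BoWanglemma1} machinery is reused elsewhere in the convergence argument, so applying it here keeps the error-control argument in one idiom. Neither route has a gap; the remaining implicit caveat (that the constant in $O(1/n)$ depends on the fixed data point $x_i$) is shared by both your argument and the paper's.
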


Taking $\Theta=\begin{bmatrix} \bm{\hat{\pi}} \\ \bm{\hat{\mu}} \\ \bm{\hat{\Lambda}} \\ \bm{\hat{\psi}} \end{bmatrix}=\begin{bmatrix} \bm{\pi^*} \\ \bm{\mu^*} \\ \bm{\Lambda^*} \\ \bm{\psi^*} \end{bmatrix}=\Theta^*$

\begin{proof}
Recall that
\begin{align}
\mathbb{E}_{\mu, \Lambda} [\mathrm{ln}(\mathcal{N}(x_i|\mu_{k}, \Lambda_{k}^{-1} ) )] &=-\frac{q}{2}\mathrm{ln}(2\pi)+\frac{1}{2}\mathbb{E}_{\Lambda_{k}} [\mathrm{ln}(det (\Lambda_{k})) ]-\frac{1}{2} [(x_i-\hat{m}_{k})^T \hat{\nu}_{k} \hat{\Phi}_{k}^{-1} (x_i-\hat{m}_{k}) + q\beta_{k}^{-1} ] 
\end{align}

\begin{align}
&-\frac{1}{2} (x_i-\hat{m}_{k})^T \hat{\nu}_{k} \hat{\Phi}_{k}^{-1} (x_i-\hat{m}_{k}) \nonumber \\
&=-\frac{1}{2} (x_i-\mu_{k}^*-O(\frac{1}{n}) )^T (\Lambda_{k}^*+O(\frac{1}{n})) (x_i-\mu_{k}^*-O(\frac{1}{n})) \nonumber \\
&=-\frac{1}{2} (x_i-\mu_{k}^*)^T \Lambda_{k}^* (x_i-\mu_{k}^*)+O(\frac{1}{n})
\end{align}
where we used $\hat{m}_k=n\mu_k^* \pi_k^*/(n\pi_k^*+\beta) +\beta m_k/(n\pi_k^*+\beta) \rightarrow \mu_{k}^*+O(\frac{1}{n})$ and
\begin{align}
\hat{\nu}_{k} \hat{\Phi}_{k}^{-1} &= (n \pi_k^*+\nu)(n \pi_k^* (\Lambda_k^*)^{-1}+ \frac{n \pi_k^* }{ n \pi_k^*+\beta} \beta (\mu^*_k-m_k)(\mu^*_k-m_k)^T+\Phi)^{-1}  \nonumber \\
&= (n \pi_k^*+\nu)(n \pi_k^* (\Lambda_k^*)^{-1}(I + \frac{ \Lambda_k^* }{ n \pi_k^*+\beta} \beta (\mu^*_k-m_k)(\mu^*_k-m_k)^T+\frac{ \Lambda_k^* }{n \pi_k^*}\Phi))^{-1}  \nonumber \\
&= (I + \frac{ \Lambda_k^* }{ n \pi_k^*+\beta} \beta (\mu_k^*-m_k)(\mu_k^*-m_k)^T+\frac{ \Lambda_k^* }{n \pi_k^*}\Phi)^{-1} \frac{n \pi_k^*+\nu}{n \pi_k^*} \Lambda_k^* \nonumber \\
&= \Lambda_k^* + O(\frac{1}{n})
\end{align}

$$q\hat{\beta}_k^{-1}=O(\frac{1}{n})$$

For $\frac{1}{2}\mathbb{E}_{\Lambda_{k}} [\mathrm{ln}(det (\Lambda_{k})) ]$, $\mathbb{E}_{\Lambda_{k}}(\Lambda_k)=\hat{\nu}_k \hat{\Phi}_{k}^{-1} \rightarrow \Lambda_k^* $, \\
$Cov\left( [\Lambda_k]_{ij}, [\Lambda_k]_{ab} \right)
= \hat{\nu}_k \left( [\hat{\Phi}_k^{-1}]_{ia} [\hat{\Phi}_k^{-1}]_{jb} + [\hat{\Phi}_k^{-1}]_{ib} [\hat{\Phi}_k^{-1}]_{ja} \right)
= O\left(\frac{1}{n}\right)$. So using \Cref{lemma: BoWanglemma1} on $q(\Lambda_k)$, we have:

\begin{align}
\frac{1}{2}\mathbb{E}_{\Lambda_{k}} [\mathrm{ln}(det (\Lambda_{k})) ] &=\frac{1}{2}\mathrm{ln}(det (\Lambda_{k}^*))+O(\frac{1}{n}) 
\end{align}
Therefore
\begin{align}
\mathbb{E}_{\mu, \Lambda} [\mathrm{ln}(\mathcal{N}(x_i|\mu_{k}, \Lambda_{k}^{-1} ) )] &=-\frac{q}{2}\mathrm{ln}(2\pi)+\frac{1}{2}\mathbb{E}_{\Lambda_{k}} [\mathrm{ln}(det (\Lambda_{k})) ]-\frac{1}{2} [(x_i-\hat{m}_{k})^T \hat{\nu}_{k} \hat{\Phi}_{k}^{-1} (x_i-\hat{m}_{k}) + q\beta_{k}^{-1} ] \nonumber \\
&=-\frac{q}{2}\mathrm{ln}(2\pi)+\frac{1}{2}\mathrm{ln}(det (\Lambda_{k}^*))-\frac{1}{2} (x_i-\mu_{k}^*)^T \Lambda_{k}^* (x_i-\mu_{k}^*)+O(\frac{1}{n})
\end{align}

We also note $\mu_k$ has marginal variational posterior distribution 
$t_{\hat{\nu}_k-q+1}(\mu_k|\hat{m}_k,\frac{\hat{\Phi}_k}{\hat{\beta}_k (\hat{\nu}_k-q+1)})$ , which has covariance:

$$Cov(\mu_{ki},\mu_{la})=\frac{\hat{\nu}_k-q+1}{(\hat{\nu}_k-q-1)\hat{\beta}_k (\hat{\nu}_k-q+1)}[\hat{\Phi}_k]_{ia}=O(\frac{1}{n})$$
This is not strictly required here but will be useful in a different corollary later. 

\medskip

For $q(\pi)$, we note that $\mathbb{E}_{\pi}(\pi_k)= \frac{\hat{\alpha}_k} {\sum_{k'=1}^{K} \hat{\alpha}_{k'} }=\frac{n \pi_k^{*} + \alpha }{n + K\alpha } \rightarrow \pi_k^{*} $ and 
$$ Cov(\pi_k,\pi_l)=\frac{\delta_{kl}\hat{\alpha}_k}{(\sum_{k'=1}^{K} \hat{\alpha}_{k'})(\sum_{k'=1}^{K} \hat{\alpha}_{k'}+1) }-\frac{\hat{\alpha}_k \hat{\alpha}_l}{(\sum_{k'=1}^{K} \hat{\alpha}_{k'})(\sum_{k'=1}^{K} \hat{\alpha}_{k'})(\sum_{k'=1}^{K} \hat{\alpha}_{k'}+1) }= O(\frac{1}{n}) $$
So by \Cref{lemma: BoWanglemma1}, we have
\begin{align}
\mathbb{E}_{\pi} [\mathrm{ln}(\pi_{k} ) ]  = \mathrm{ln}(\pi_k^{*} )+O(\frac{1}{n})
\end{align}

Similarly, for $q(\psi_k)$, we have $\mathbb{E}_{\psi} [\psi_{k,c_{i}} ]= \frac{\hat{\eta}_{k,c_{i} }}{ \sum_{g=1}^{d} \hat{\eta}_{k,g} } = \frac{n \pi_k^{*}\psi_{k,c_i}^{*}+\eta}{ n \pi_k^{*}+d\eta } \rightarrow \psi_{k,c_i}^{*} $.
$$Cov(\psi_{k,g}, \psi_{k,h} )=\frac{\delta_{gh}\hat{\eta}_{k,g}}{(\sum_{g'=1}^{d} \hat{\eta}_{k,g'})(\sum_{g'=1}^{d} \hat{\eta}_{k,g'}+1) }-\frac{\hat{\eta}_{k,g} \hat{\eta}_{k,h}}{(\sum_{g'=1}^{d} \hat{\eta}_{k,g'})(\sum_{g'=1}^{d} \hat{\eta}_{k,g'})(\sum_{g'=1}^{d} \hat{\eta}_{k,g'}+1) }= O(\frac{1}{n}) $$
So by \Cref{lemma: BoWanglemma1}, we have
\begin{align}
\mathbb{E}_{\psi} [\mathrm{ln}(\psi_{k,c_{i}}) ] = \mathrm{ln}(\psi_{k,c_i}^{*} )+O(\frac{1}{n})
\end{align}

Therefore: 

\begin{align}
\rho_{ik}(\Theta^{*})&=\pi_{k}^{*}\psi_{k,c_{i}}^{*} (2\pi)^{\frac{-q}{2}}det(\Lambda_{k}^{*})^{\frac{1}{2}} \mathrm{exp}(-\frac{1}{2} (x_i-\mu_{k}^{*})^T \Lambda_k^{*} (x_i-\mu_{k}^{*})+O(\frac{1}{n})) \nonumber \\
\rho_{ik}(\Theta^{*})&=\pi_{k}^{*}\psi_{k,c_{i}}^{*} (2\pi)^{\frac{-q}{2}}det(\Lambda_{k}^{*})^{\frac{1}{2}} \mathrm{exp}(-\frac{1}{2} (x_i-\mu_{k}^{*})^T \Lambda_k^{*} (x_i-\mu_{k}^{*}))+O(\frac{1}{n}) \nonumber \\
&=\pi_{k}^{*}p(x_i,c_i|z_i=k,\Theta^{*})+O(\frac{1}{n})
\end{align}
So as $n \rightarrow \infty, \rho_{ik}(\Theta^{*}) \rightarrow \pi_{k}^{*}p(x_i,c_i|z_i=k,\Theta^{*})$. 
Also, $r_{ik}(\Theta^{*}) \rightarrow \frac{\pi_{k}^{*}p(x_i,c_i|z_i=k,\Theta^{*}) }{\sum_{k'=1}^{K}\pi_{k'}^{*}p(x_i,c_i|z_i=k',\Theta^{*})}$.
\end{proof}


Next we need to calculate the derivatives of $r_{ik}(\Theta)$ with respect to $\Theta$ as $n \rightarrow \infty$, at $\Theta=\Theta^*$. To do this, we first examine how the derivative of $r_{ik}$ looks like in general. Writing $r_{ik}=\frac{\mathrm{exp}(\ln(\rho_{ik}))}{\sum_{i=1}^{K} \mathrm{exp}(\ln(\rho_{ik}))}$, and applying the quotient rule, we have:
\begin{align}
\frac{\partial r_{ik}}{\partial \theta_{k'} }&=\frac{\rho_{ik} \frac{\partial \ln(\rho_{ik}) }{\partial \theta_{k'}}(\sum_{j=1}^{K} \rho_{ij})-\sum_{j=1}^{K} \rho_{ij} \frac{\partial \ln(\rho_{ij}) }{\partial \theta_{k'}}\rho_{ik} }{ (\sum_{j=1}^{K} \rho_{ij})^2 } \nonumber \\
&=\frac{\partial \ln(\rho_{ik}) }{\partial \theta_{k'}} r_{ik}-\sum_{j=1}^{K} r_{ij} \frac{\partial \ln(\rho_{ij}) }{\partial \theta_{k'}} r_{ik} 
\end{align}
Here $\theta_{k'}$ is any scalar parameter within $\Theta$ belonging to component $k'$, so we could have for example $\hat{\mu}_{k'l}$, the $l$th coordinate of $\hat{\mu}_{k'}$. Note if $\rho_{ik}$ is independent of $\theta_{k'}$ for $k' \neq k$, then 
\begin{align}\label{eq: dr_ikdtheta_k'}
\frac{\partial r_{ik} }{\partial \theta_{k'}}=\frac{\partial \ln(\rho_{ik'}) }{\partial \theta_{k'}} \Bigl\{ \delta_{kk'}r_{ik}-r_{ik}r_{ik'} \Bigl\} 
\end{align}
and we only need to compute $\frac{\partial \ln(\rho_{ik'}) }{\partial \theta_{k'}}$. $\ln(\rho_{ik}(\Theta))$ as a function of $\Theta$ is the same as the continuous data case (see \cite{BoWang2006}, above Eq $8$) except with the addition of the $\mathbb{E}_{\psi} [\mathrm{ln}(\psi_{k,c_{i}}) ]$ term, and because 
\begin{align}  \label{eq: extraEpsi}
\mathbb{E}_{\psi} [\mathrm{ln}(\psi_{k,c_{i}}) ] &=\digamma(\eta+\sum_{s=1}^{n} I_{\{c_{s}=c_{i}\}}r_{sk} )-\digamma(d\eta+\sum_{g=1}^{d} \sum_{s=1}^{n} I_{\{c_{s}=g\}} r_{sk}) \nonumber \\
&=\digamma(\eta+ n \hat{\pi}_k\hat{\psi}_{k,c_i})-\digamma(d\eta+n \hat{\pi}_k \sum_{g=1}^{d} \hat{\psi}_{k,g})
\end{align}
is independent of $\bm{\hat{\mu}}, \bm{\hat{\Lambda}}$, we therefore have $\frac{\partial \ln(\rho_{ik}) }{\partial \hat{\mu}_{kl}}$, and $\frac{\partial \ln(\rho_{ik}) }{\partial \hat{\Lambda}_{kjl}}$ are unchanged from the continuous data case, which as $n \rightarrow \infty$ are of the form (see \cite{BoWang2006}):

\begin{align}
\frac{\partial \ln(\rho_{ik}) }{\partial \hat{\mu}_{kl}} \rightarrow \sum_{j=1}^{q} \hat{\Lambda}_{klj}(x_{ij} - \hat{\mu}_{kj} )
\end{align}
\begin{align}
\frac{\partial \ln(\rho_{ik}) }{\partial \hat{\Lambda}_{kjl}} \rightarrow \frac{1}{2}[\hat{\Lambda}_{kjl}^{-1}- (x_{ij} - \hat{\mu}_{kj})(x_{il} - \hat{\mu}_{kl}) ]
\end{align}

for $\frac{\partial \ln(\rho_{ik}) }{\partial \hat{\psi}_{k,g}} $, we use the following lemma:

\begin{lemma}\label{lemma: digammaderivative}
Let $\digamma$ be the digamma function on the positive reals, and $b_1>0, b_2 \geq 0, b_3 \geq 0$ be real constants and $n$ positive integer. Let $\lambda=b_1nx+b_2n+b_3$, then:

$$\frac{\partial \digamma(\lambda) }{\partial x } \sim  \frac{\partial \lambda  }{\partial x } (  \lambda-1)^{-1}=b_1n(b_1nx+b_2n+b_3-1)^{-1} \rightarrow (x+b_2/b_1)^{-1}$$ as $n \rightarrow \infty$.
\end{lemma}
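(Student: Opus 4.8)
The plan is to reduce the claim to the classical asymptotics of the trigamma function $\digamma'$. First I would apply the chain rule: since $\lambda$ depends on $x$ only through $\lambda = b_1 n x + b_2 n + b_3$, we have $\partial\lambda/\partial x = b_1 n$ and hence
\begin{align}
\frac{\partial \digamma(\lambda)}{\partial x} = \digamma'(\lambda)\,\frac{\partial \lambda}{\partial x} = b_1 n\, \digamma'(\lambda).
\end{align}
So the whole statement comes down to showing $\digamma'(\lambda) \sim (\lambda-1)^{-1}$ as $\lambda \to \infty$, together with the elementary fact that $\lambda \to \infty$ as $n \to \infty$. The latter holds because $b_1 > 0$ and, in the application of the lemma, the argument $x$ stands for $\hat{\psi}_{k,g}$ evaluated in a neighbourhood of the strictly positive $\psi_{k,g}^*$, so $b_1 x + b_2 > 0$; more generally one only needs $b_1 x + b_2 > 0$.

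Next I would establish the two-sided bound $\tfrac{1}{\lambda} \le \digamma'(\lambda) \le \tfrac{1}{\lambda - 1}$ for $\lambda > 1$. Starting from the series representation $\digamma'(\lambda) = \sum_{k=0}^{\infty} (\lambda + k)^{-2}$ (term-by-term differentiation of the standard series for $\digamma$), the upper bound follows by comparing each summand with $\tfrac{1}{(\lambda+k-1)(\lambda+k)} = \tfrac{1}{\lambda+k-1} - \tfrac{1}{\lambda+k}$ and telescoping (the $k=0$ term is where $\lambda > 1$ is used), and the lower bound by comparing with $\tfrac{1}{(\lambda+k)(\lambda+k+1)} = \tfrac{1}{\lambda+k} - \tfrac{1}{\lambda+k+1}$ and telescoping. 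Since $\lambda/(\lambda-1) \to 1$, the squeeze yields $\digamma'(\lambda) = \lambda^{-1}(1 + o(1)) = (\lambda-1)^{-1}(1 + o(1))$ as $\lambda \to \infty$.

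Finally I would combine the two steps: for $n$ large enough that $\lambda > 1$,
\begin{align}
\frac{\partial \digamma(\lambda)}{\partial x} = b_1 n\,\digamma'(\lambda) = \frac{b_1 n}{\lambda - 1}\,(1 + o(1)) = \frac{b_1 n}{b_1 n x + b_2 n + b_3 - 1}\,(1 + o(1)),
\end{align}
and dividing numerator and denominator by $n$ shows the right-hand side tends to $b_1/(b_1 x + b_2) = (x + b_2/b_1)^{-1}$, which is the asserted limit (and, being a nonzero finite constant, is equivalent to the asymptotic equivalence $\sim$ stated in the lemma). There is no genuine obstacle here: the only points requiring care are verifying $\lambda \to \infty$ so the asymptotic regime applies, and the minor bookkeeping of the $k=0$ term in the telescoping upper bound. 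Alternatively, one could simply invoke the standard asymptotic expansion $\digamma'(\lambda) = \lambda^{-1} + \tfrac12\lambda^{-2} + O(\lambda^{-3})$ and bypass the elementary derivation of the bounds entirely.
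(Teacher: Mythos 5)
Your proof is correct, and it is more self-contained than what the paper actually provides: the paper's ``proof'' of this lemma is a pointer to Appendix~D of \cite{BoWang2006}, together with the remark that only the chain-rule factor $\partial\lambda/\partial x = b_1 n$ changes when $\lambda = nx+b_3$ is replaced by $\lambda = b_1 n x + b_2 n + b_3$. You instead derive the key asymptotic $\digamma'(\lambda)\sim(\lambda-1)^{-1}$ from scratch via the series $\digamma'(\lambda)=\sum_{k\ge 0}(\lambda+k)^{-2}$ and the telescoping bounds $\lambda^{-1}\le\digamma'(\lambda)\le(\lambda-1)^{-1}$ for $\lambda>1$; both bounds check out, and the squeeze plus the chain rule gives exactly the displayed equivalence and limit. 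You also correctly flag the one hypothesis the lemma leaves implicit, namely that $b_1x+b_2>0$ so that $\lambda\to\infty$ (which holds in every application, since $x$ is evaluated near strictly positive quantities such as $\hat{\pi}_k$ or $\hat{\psi}_{k,g}$ near $\Theta^*$). What your route buys is independence from the external reference and an elementary, fully verifiable argument; what the paper's route buys is brevity and consistency with the convention of \cite{BoWang2006}, whose Lemma~2/Appendix~D calculations the rest of the convergence proof repeatedly mirrors. Either way the statement stands.
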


\begin{proof}
See \cite{BoWang2006}, Appendix D, Pages 643-644, which proves the case for $\lambda=nx+b_3$ but the proof is the same except substituting $\lambda$ instead for $\lambda=b_1nx+b_2n+b_3$ and changing the chain rule derivative $\frac{\partial \lambda  }{\partial x }=b_1n$ accordingly.
\end{proof}

Using \Cref{lemma: digammaderivative}, for $\frac{\partial \ln(\rho_{ik}) }{\partial \hat{\psi}_{k,g}} $, we have:
\begin{align} \label{eq: dlnrho_ikdpsi_kg} 
& \frac{\partial \ln(\rho_{ik}) }{\partial \hat{\psi}_{k,g}}=\frac{\partial \digamma(\eta+ n \hat{\pi}_k\hat{\psi}_{k,c_i}) }{\partial \hat{\psi}_{k,g} }-\frac{\partial \digamma(d\eta+n \hat{\pi}_k \sum_{g=1}^{d} \hat{\psi}_{k,g}) }{\partial \hat{\psi}_{k,g} } \nonumber \\
\sim & \frac{\partial (\eta+ n \hat{\pi}_k\hat{\psi}_{k,c_i}) }{\partial \hat{\psi}_{k,g}  } (  \eta+ n \hat{\pi}_k\hat{\psi}_{k,c_i}-1)^{-1}-\frac{\partial (d\eta+n \hat{\pi}_k \sum_{g=1}^{d} \hat{\psi}_{k,g}) }{\partial \hat{\psi}_{k,g} } (d\eta+n \hat{\pi}_k \sum_{g=1}^{d} \hat{\psi}_{k,g}-1)^{-1}  \nonumber \\ 
\rightarrow & \frac{\delta_{g c_i}}{\hat{\psi}_{k,c_i} }-(n \hat{\pi}_k)(n \hat{\pi}_k \sum_{g=1}^{d} \hat{\psi}_{k,g})^{-1}=\frac{\delta_{g c_i}}{\hat{\psi}_{k,g} }-\frac{1}{\sum_{g=1}^{d} \hat{\psi}_{k,g}}
\end{align}
where $\delta_{g c_i}=I_{\{c_{i}=g\}}$. Lastly, $\frac{\partial \ln(\rho_{ik}) }{\partial \hat{\pi}_{k}}$ is a little more involved, as the additional $\mathbb{E}_{\psi} [\mathrm{ln}(\psi_{k,c_{i}}) ]$  term \Cref{eq: extraEpsi} does depend on $\hat{\pi}_{k}$. Again using \Cref{lemma: digammaderivative}, we have:
\begin{align}
\frac{\partial \mathbb{E}_{\psi} [\mathrm{ln}(\psi_{k,c_{i}}) ]}{\partial \hat{\pi}_{k}} &=\frac{\partial \digamma(\eta+ n \hat{\pi}_k\hat{\psi}_{k,c_i}) }{\partial \hat{\pi}_{k} }-\frac{\partial \digamma(d\eta+n \hat{\pi}_k \sum_{g=1}^{d} \hat{\psi}_{k,g} ) }{\partial \hat{\pi}_{k} }  \nonumber \\ 
\sim &  \frac{\partial (\eta+ n \hat{\pi}_k\hat{\psi}_{k,c_i}) }{\partial \hat{\pi}_{k} } (  \eta+ n \hat{\pi}_k\hat{\psi}_{k,c_i}-1)^{-1}-\frac{\partial (d\eta+n \hat{\pi}_k \sum_{g=1}^{d} \hat{\psi}_{k,g} ) }{\partial \hat{\pi}_{k} } (d\eta+n \hat{\pi}_k \sum_{g=1}^{d} \hat{\psi}_{k,g}-1)^{-1}  \nonumber \\
\rightarrow & (n \hat{\psi}_{k,c_i}) (n \hat{\pi}_k\hat{\psi}_{k,c_i} )^{-1} - n \sum_{g=1}^{d} \hat{\psi}_{k,g} (n \hat{\pi}_k \sum_{g=1}^{d} \hat{\psi}_{k,g})^{-1} = \hat{\pi}_k^{-1}-\hat{\pi}_k^{-1} = 0
\end{align}
It was shown in \cite{BoWang2006} that $\frac{\partial \mathbb{E}_{\mu, \Lambda} [\mathrm{ln}(\mathcal{N}(x_i|\mu_{k}, \Lambda_{k}^{-1} ) )]}{\partial \hat{\pi}_{k}} \rightarrow 0$ as $n \rightarrow \infty$. So $\frac{\partial \ln(\rho_{ik}) }{\partial \hat{\pi}_{k}}$ as in the continuous data case only depends on the $\mathbb{E}_{\pi} [\mathrm{ln}(\pi_{k} ) ]$ term as $n \rightarrow \infty$. Recall that 

\begin{align} \label{eq: Epilnpik2} 
\mathbb{E}_{\pi} [\mathrm{ln}(\pi_{k} ) ]=\digamma(\hat{\alpha}_k)-\digamma(\sum_{k=1}^{K}\hat{\alpha}_k)=\digamma(\alpha + n \hat{\pi}_k )-\digamma(K\alpha + n\sum_{k=1}^{K} \hat{\pi}_k)
\end{align}

As in \cite{BoWang2006}, rather than \Cref{eq: Epilnpik2} we instead use $\digamma(\alpha + n \hat{\pi}_k )-\digamma(K\alpha + n)$, which coincides with \Cref{eq: Epilnpik2} when $\sum_{k=1}^{K} \hat{\pi}_k=1$. This is always satisfied by the CAVI algorithm after the first update (and satisfied by the initial values too if the $r_{ik}$ are initialised to satisfy $\sum_{k=1}^{K} r_{ik}=1$), so $T(\Theta)$ still matches our proposed algorithm for $\Theta$ that lies in the support of the prior. This will simplify some of the following calculations by ensuring that $\frac{\partial \ln(\rho_{ik}) }{\partial \hat{\pi}_{k'}}=0$ if $k' \neq k$ \footnote{In contrast, we did not take $\sum_{g=1}^{d} \hat{\psi}_{k,g}$=1 in \Cref{eq: extraEpsi}, as this results in \Cref{eq: dlnrho_ikdpsi_kg}, and its expectation $\mathbb{E}[ (\frac{\delta_{g c}}{\psi_{k,g}^{*} }-\frac{1}{\sum_{g=1}^{d} \psi_{k,g}^{*}})\frac{\phi_k^*}{\phi^*} ]=0$, which will simplify some of the subsequent calculations.}.
By \Cref{lemma: digammaderivative}, we have:
\begin{align}
&\frac{\partial \ln(\rho_{ik}) }{\partial \hat{\pi}_{k}}  \rightarrow \frac{\partial \digamma(\alpha+ n \hat{\pi}_k) }{\partial \hat{\pi}_{k} }-\frac{\partial \digamma(K \alpha+n  ) }{\partial \hat{\pi}_{k} }  \nonumber \\
\sim &\frac{\partial (\alpha+ n \hat{\pi}_k) }{\partial \hat{\pi}_{k} } (\alpha+ n \hat{\pi}_k-1)^{-1}  \nonumber \\
\rightarrow & n(n \hat{\pi}_{k} )^{-1} = \frac{1}{\hat{\pi}_{k} }   
\end{align}

Now that we have computed the limits of the derivatives of $\mathrm{ln}(\rho_{ik} )$ as $n \rightarrow \infty$, we can return to our original objective of deriving the matrix of derivatives in \Cref{eq: Tgradient} as $n \rightarrow \infty$ at $\Theta=\Theta^*$. 
Let $\phi_k^*=p(x,c|z=k,\Theta^*)$ and $\phi^*=\sum_{k'=1}^{K}\pi_{k'}^*p(x,c|z=k',\Theta^*)$ denote the likelihood evaluated at an arbitrary point $x,c$ with the true parameter values $\Theta^*$ for a specific mixture component and with the mixture component integrated out, respectively. Also let:
\begin{align}
R_{k}^{\pi}&= \frac{\phi_{k}^* }{\phi^{*}}   \nonumber \\
R_{k}^{\mu}&= \Lambda^{*}_{k}(x - \mu^{*}_{k} )\frac{\phi_{k}^*\pi_{k}^* }{\phi^{*}}  \nonumber \\
R_{k}^{\Lambda}&= \frac{1}{2}[\Lambda_{k}^{*-1}- (x - \mu_{k}^{*})(x - \mu_{k}^{*})^T ]\frac{\phi_{k}^*\pi_{k}^* }{\phi^{*}}  \nonumber \\
R_{k}^{\psi}&: R_{kg'}^{\psi}= (\frac{\delta_{g' c}}{\psi_{k,g'}^*}-1)\frac{\phi_{k}^* \pi_{k}^* }{\phi^{*}}  \nonumber \\
L_{k}^{\pi}&= \frac{\pi_{k}^*\phi_{k}^* }{\phi^{*}} \nonumber \\
L_{k}^{\mu}&= (x - \mu^{*}_{k} )\frac{\phi_{k}^* }{\phi^{*}} \nonumber \\
L_{k}^{\Lambda}&= [\Lambda_{k}^{*} - \Lambda_{k}^{*}(x - \mu_{k}^{*})(x - \mu_{k}^{*})^T \Lambda_{k}^{*}]\frac{\phi_{k}^* }{\phi^{*}} \nonumber \\
L_{k}^{\psi}&: L_{kg'}^{\psi}=\frac{(\delta_{g' c}-\psi_{k,g'}^*)\phi_{k}^* }{ \phi^{*}}
\end{align}
and $R^{\theta}=[(R_{1}^{\theta})^T,\dots,(R_{K}^{\theta})^T ]$ and 
$L^{\theta}=
\begin{bmatrix}
L_{1}^{\theta} \\
\vdots \\
L_{K}^{\theta}
\end{bmatrix}$
for $\theta \in \{ \pi, \mu, \Lambda, \psi \}$.
 
One can then show the gradients inside \Cref{eq: Tgradient} take the following forms as $n \rightarrow \infty$:
\begin{align} \label{eq: operatorlimits}
\nabla_{\bm{\hat{\pi}}} \Pi (\Theta^*) \rightarrow I-\mathbb{E}[L^{\pi} R^{\pi}]  \nonumber \\
\nabla_{\bm{\hat{\mu}}} \Pi (\Theta^*) \rightarrow -\mathbb{E}[L^{\pi} R^{\mu}] \nonumber \\
\nabla_{\bm{\hat{\Lambda}}} \Pi (\Theta^*) \rightarrow -\mathbb{E}[L^{\pi} R^{\Lambda}] \nonumber \\ 
\nabla_{\bm{\hat{\psi}}} \Pi (\Theta^*) \rightarrow -\mathbb{E}[L^{\pi} R^{\psi}]  \nonumber \\
\nabla_{\bm{\hat{\pi}}} M (\Theta^*) \rightarrow -\mathbb{E}[L^{\mu} R^{\pi}] \nonumber \\
\nabla_{\bm{\hat{\mu}}} M (\Theta^*) \rightarrow I-\mathbb{E}[L^{\mu} R^{\mu}] \nonumber \\
\nabla_{\bm{\hat{\Lambda}}} M (\Theta^*) \rightarrow -\mathbb{E}[L^{\mu} R^{\Lambda}] \nonumber \\ 
\nabla_{\bm{\hat{\psi}}} M (\Theta^*) \rightarrow -\mathbb{E}[L^{\mu} R^{\psi}] \nonumber \\
\nabla_{\bm{\hat{\pi}}} S (\Theta^*) \rightarrow -\mathbb{E}[L^{\Lambda} R^{\pi}] \nonumber \\
\nabla_{\bm{\hat{\mu}}} S (\Theta^*) \rightarrow -\mathbb{E}[L^{\Lambda} R^{\mu}] \nonumber \\ 
\nabla_{\bm{\hat{\Lambda}}} S (\Theta^*) \rightarrow I-\mathbb{E}[L^{\Lambda} R^{\Lambda}] \nonumber \\
\nabla_{\bm{\hat{\psi}}} S (\Theta^*) \rightarrow -\mathbb{E}[L^{\Lambda} R^{\psi}] \nonumber \\
\nabla_{\bm{\hat{\pi}}} \Psi (\Theta^*) \rightarrow -\mathbb{E}[L^{\psi} R^{\pi}] \nonumber \\ 
\nabla_{\bm{\hat{\mu}}} \Psi (\Theta^*) \rightarrow -\mathbb{E}[L^{\psi} R^{\mu}] \nonumber \\ 
\nabla_{\bm{\hat{\Lambda}}} \Psi (\Theta^*) \rightarrow -\mathbb{E}[L^{\psi} R^{\Lambda}] \nonumber \\ 
\nabla_{\bm{\hat{\psi}}} \Psi (\Theta^*) \rightarrow I-C^{\psi}-\mathbb{E}[L^{\psi} R^{\psi}]
\end{align}
where $C^{\psi}: \mathbb{R}^{Kd} \rightarrow \mathbb{R}^{Kd}, C^{\psi}_{kgk'g'}=\delta_{kk'}\psi^{*}_{k,g}$. 
So if $a \in \mathbb{R}^{Kd}, u \in \mathbb{R}^{K}$:
$$\sum_{k'=1}^{K}\sum_{g'=1}^{d} C^{\psi}_{kgk'g'}a_{k'g'}=\sum_{k'=1}^{K}\sum_{g'=1}^{d}\delta_{kk'}\psi^{*}_{k,g} a_{k'g'}=\sum_{g'=1}^{d}\psi^{*}_{k,g} a_{kg'}$$

The expectation in \Cref{eq: operatorlimits} is taken with respect to $x,c$, which has density of the likelihood evaluated at the true data generating parameter $\Theta^*$. To evaluate the gradient limits in \Cref{eq: operatorlimits}, we use a strong law of large numbers like \Cref{lemma: modifiedSLLN}. This lemma has conditions that are much easier to verify than requiring the responsibilities $r_{ik}$ to converge uniformly as functions of $x_i$, which was assumed in \cite{BoWang2006}. As the gradient limit calculations are long, we reserve the details to \Cref{subsec: gradientcalcs}.

Returning to the main objective of showing the operator $\nabla T^{\epsilon}(\Theta^*)$ converges as $n \rightarrow \infty$ (almost surely) to an operator $A^{\epsilon}(\Theta^*)$ with norm less than $1$, where the norm is induced by  $\langle .,. \rangle$. To do this, we show $A^{\epsilon}(\Theta^*)<I$ and $A^{\epsilon}(\Theta^*)>-I$, where for example $A^{\epsilon}(\Theta^*)<I$ means $I-A^{\epsilon}(\Theta^*)$ is positive definite. To see why this is sufficient to imply $\| A^{\epsilon}(\Theta^*) \|_{op}<1$, note that $A^{\epsilon}(\Theta^*)<I$ implies $\langle B,A^{\epsilon}(\Theta^*)B \rangle \leq \langle B, B \rangle $ for all $B$ with equality iff $B=0$, and similarly $A^{\epsilon}(\Theta^*)>-I$ implies $\langle B,A^{\epsilon}(\Theta^*)B \rangle \geq -\langle B, B \rangle $ with equality iff $B=0$. Combined this implies $| \langle B,A^{\epsilon}(\Theta^*)B \rangle | < | \langle B, B \rangle |=1$ for all $\| B \|=1$. Because the unit circle $\{ B \in \mathbb{S}_{\pi} \bigoplus \mathbb{S}_{\mu} \bigoplus \mathbb{S}_{\Lambda} \bigoplus \mathbb{S}_{\psi}: \| B \|=1 \}$ is compact, $| \langle B,A^{\epsilon}(\Theta^*)B \rangle |$ attains its supremum on the unit circle which is smaller than $1$. Lastly note that $A^{\epsilon}(\Theta^*)$ is self adjoint \footnote{One can easily show similar to \Cref{eq: BdotL} that $\langle L,B \rangle=R(B)$, and so $\mathbb{E}[\langle LR(B_1),B_2 \rangle ]=\mathbb{E}[R(B_1)\langle L,B_2 \rangle ]=\mathbb{E}[R(B_1)R(B_2)]=\mathbb{E}[\langle B_1,L \rangle R(B_2)]=\mathbb{E}[\langle B_1,L R(B_2) \rangle]$. Likewise, one can easily derive $\langle C^{\psi}B_1,B_2 \rangle$ and $\langle B_1,C^{\psi}B_2 \rangle$ similar to \Cref{eq: greaterthanminusIinequalityotherterms} and show $\langle C^{\psi}B_1,B_2 \rangle=\langle B_1,C^{\psi}B_2 \rangle$ } and recall from Hilbert space theory that the operator norm for self-adjoint operators $A$ is equal to $\mathrm{sup}\{ | \langle B,A B \rangle | : \|B \| = 1,  B \in \mathbb{S}_{\pi} \bigoplus \mathbb{S}_{\mu} \bigoplus \mathbb{S}_{\Lambda} \bigoplus \mathbb{S}_{\psi} \} $.

Let $u \in \mathbb{S}_{\pi}$, $v \in \mathbb{S}_{\mu}$, $W \in \mathbb{S}_{\Lambda}$, $a \in \mathbb{S}_{\psi}$, $B=\begin{bmatrix} u \\  v \\ W \\ a \end{bmatrix} \in \mathbb{S}_{\pi} \bigoplus \mathbb{S}_{\mu} \bigoplus \mathbb{S}_{\Lambda} \bigoplus \mathbb{S}_{\psi}$.

Let $R: \mathbb{S}_{\pi} \bigoplus \mathbb{S}_{\mu} \bigoplus \mathbb{S}_{\Lambda} \bigoplus \mathbb{S}_{\psi} \rightarrow \mathbb{R} $ such that:
\begin{align}
R(B)=\sum_{k=1}^{K} R_{k}^{\pi} u_k+ \sum_{k=1}^{K} (R_{k}^{\mu} )^T v_k+\sum_{k=1}^{K} Tr(R_{k}^{\Lambda} W_k)+\sum_{k=1}^{K} (R_{k}^{\psi} )^Ta_k
\end{align}

Similarly let $L=\begin{bmatrix} L_{\pi} \\  L_{\mu} \\ L_{\Lambda} \\ L_{\psi} \end{bmatrix}$. Note we deliberately constructed the inner product $\langle .,. \rangle$ so that:

\begin{align}\label{eq: BdotL}
\langle B,L \rangle &=  \sum_{k=1}^{K} u_k (\frac{1}{\pi_{k}^*}) L_{k}^{\pi} + \sum_{k=1}^{K} (v_k)^T (\pi_k^* \Lambda_k^*) L_{k}^{\mu} +\sum_{k=1}^{K}\frac{\pi_k^*}{2} Tr(W_k \Lambda_k^{*-1} L_{k}^{\Lambda} \Lambda_k^{*-1})+\sum_{k=1}^{K} a_k^T diag(\frac{\pi_k^*}{\psi_k^*}) L_{k}^{\psi} \nonumber \\
&= \sum_{k=1}^{K} u_k R_{k}^{\pi} + \sum_{k=1}^{K} v_k^T R_{k}^{\mu} +\sum_{k=1}^{K} Tr(R_{k}^{\Lambda} W_k)+ \sum_{k=1}^{K} a_k^T R_{k}^{\psi}  \nonumber \\
&= R(B) 
\end{align}

For the case $A^{\epsilon}(\Theta^*)>-I$. From \Cref{eq: operatorlimits}, we have that $A^{1}(\Theta^*)= I-C^{\psi}-\mathbb{E}[LR] $. Here we use a slight abuse of notation where we extend the map $C^{\psi} : \mathbb{S}_{\pi} \bigoplus \mathbb{S}_{\mu} \bigoplus \mathbb{S}_{\Lambda} \bigoplus \mathbb{S}_{\psi} \rightarrow \mathbb{S}_{\pi} \bigoplus \mathbb{S}_{\mu} \bigoplus \mathbb{S}_{\Lambda} \bigoplus \mathbb{S}_{\psi}$ by simply mapping the extended dimensions to $0$ so:

\begin{align} \label{eq: greaterthanminusIinequality}
& A^{\epsilon}(\Theta^*)>-I \iff (1-\epsilon)I + \epsilon A^{1}(\Theta^*)>-I \iff C^{\psi}+\mathbb{E}[LR]<\frac{2}{\epsilon} I  \nonumber \\
& \iff \langle B,C^{\psi}B \rangle + \mathbb{E}[\langle B,LR(B) \rangle ] \leq \frac{2}{\epsilon} \langle B,B \rangle \quad \textit{with equality iff }B=0 
\end{align} 
where
\begin{align} \label{eq: greaterthanminusIinequalityotherterms}
\langle B,C^{\psi}B \rangle &= \sum_{k=1}^{K}\sum_{g=1}^{d}a_{kg} \frac{\pi_k^*}{\psi_{k,g}^*}\sum_{g'=1}^{d}\psi^{*}_{k,g} a_{kg'} =  \sum_{k=1}^{K}\pi_k^* \sum_{g=1}^{d}a_{kg}\sum_{g'=1}^{d} a_{kg'} = \sum_{k=1}^{K}\pi_k^* (a_k^T \bm{1})^2  \\
\langle B,B \rangle &= \sum_{k=1}^{K}  u_k^2(\frac{1}{\pi_{k}^*}) +\sum_{k=1}^{K} v_k^T(\pi_k^* \Lambda_k^*)v_k+\sum_{k=1}^{K} \frac{\pi_k^*}{2}Tr(W_k\Lambda_k^{*-1}W_k\Lambda_k^{*-1})+\sum_{k=1}^{K} a_k^T diag(\frac{\pi_k^*}{\psi_k^*}) a_k \nonumber 
\end{align}
where $\bm{1}$ is a vector of dimension $d$ where all the entries are $1$. 

$\mathbb{E}[\langle B,LR(B) \rangle ]$ is a much more involved term, for which we will derive an upper bound for below:
\begin{align}
& \mathbb{E}[\langle B,LR(B) \rangle ]=\mathbb{E}[ (R(B))R(B) ] \\
&= \mathbb{E}[ R(B)^2]
\end{align} 
\begin{align}
&\mathbb{E}[ R(B)^2] \nonumber \\
&=\mathbb{E}[ \Bigl\{ \sum_{k=1}^{K} \frac{\pi_k^*\phi_k^*}{\phi^*} (u_k(\frac{1}{\pi_{k}^*})+(\Lambda^{*}_{k}(x - \mu^{*}_{k}))^T v_k+ Tr(\frac{1}{2}[\Lambda_{k}^{*-1}- (x - \mu_{k}^{*})(x - \mu_{k}^{*})^T ] W_k)+ \bm{\frac{\delta_{c}}{\psi_{k}^* }}^T a_k- a_k^T \bm{1} ) \Bigl\} ^2]  \nonumber \\ 
& \leq \mathbb{E}[ \sum_{k=1}^{K} \frac{\pi_k^*\phi_k^*}{\phi^*} (u_k(\frac{1}{\pi_{k}^*})+(\Lambda^{*}_{k}(x - \mu^{*}_{k}))^T v_k+ Tr(\frac{1}{2}[\Lambda_{k}^{*-1}- (x - \mu_{k}^{*})(x - \mu_{k}^{*})^T ] W_k)+ \bm{\frac{\delta_{c}}{\psi_{k}^* }}^T a_k-  a_k^T \bm{1} )^2]  \nonumber \\
& = \mathbb{E}[ \sum_{k=1}^{K} \frac{\pi_k^*\phi_k^*}{\phi^*} (u_k^2(\frac{1}{\pi_{k}^*})^2+v_k^T\Lambda^{*}_{k}(x - \mu^{*}_{k})(x - \mu^{*}_{k})^T \Lambda^{*}_{k} v_k+ Tr(\frac{1}{2}[\Lambda_{k}^{*-1}- (x - \mu_{k}^{*})(x - \mu_{k}^{*})^T ] W_k)^2  \nonumber \\
&+ (\bm{\frac{\delta_{c}}{\psi_{k}^* }}^T a_k- a_k^T \bm{1} )^2+ 2u_k(\frac{1}{\pi_{k}^*}) (x - \mu^{*}_{k})^T \Lambda^{*}_{k} v_k+2u_k(\frac{1}{\pi_{k}^*})Tr(\frac{1}{2}[\Lambda_{k}^{*-1}- (x - \mu_{k}^{*})(x - \mu_{k}^{*})^T ] W_k)  \nonumber \\
&+ 2u_k(\frac{1}{\pi_{k}^*}) (\bm{\frac{\delta_{c}}{\psi_{k}^* }}^T a_k- a_k^T \bm{1})+2(x - \mu^{*}_{k})^T \Lambda^{*}_{k} v_k Tr(\frac{1}{2}[\Lambda_{k}^{*-1}- (x - \mu_{k}^{*})(x - \mu_{k}^{*})^T ] W_k) \nonumber \\
&+2(x - \mu^{*}_{k})^T \Lambda^{*}_{k} v_k (\bm{\frac{\delta_{c}}{\psi_{k}^* }}^T a_k- a_k^T \bm{1} )+2Tr(\frac{1}{2}[\Lambda_{k}^{*-1}- (x - \mu_{k}^{*})(x - \mu_{k}^{*})^T ] W_k)  (\bm{\frac{\delta_{c}}{\psi_{k}^* }}^T a_k- a_k^T \bm{1} ) ]
\end{align} 
Here $\bm{\frac{\delta_{c}}{\psi_{k}^* }}$ is a vector of dimension $d$ with components $\frac{\delta_{cg}}{\psi_{k,g}^* }$. The second inequality above is due to $| \sum_{k=1}^{K} \xi_k \eta_k | ^2 \leq \sum_{k=1}^{K} \xi_k^2 \eta_k$ for $\eta_k \geq 0$ and $\sum_{k=1}^K \eta_k=1$, and any real $\xi_k$. One can show this inequality by applying Cauchy-Schwarz on $\xi_k \sqrt{\eta_k}$ and $\sqrt{\eta_k}$. 
There are a lot of terms here but it will turn out the `diagonal' terms will simplify nicely, and the non diagonal terms will have expected value of $0$. 
\begin{align}  \label{eq: LRdiagonalterms}
&\mathbb{E}[ \sum_{k=1}^{K}  \frac{\pi_k^*\phi_k^*}{\phi^*} u_k^2(\frac{1}{\pi_{k}^*})^2 ]=\sum_{k=1}^{K} \pi_{k}^* u_k^2(\frac{1}{\pi_{k}^*})^2 = \sum_{k=1}^{K} u_k^2(\frac{1}{\pi_{k}^*}) \nonumber \\
&\mathbb{E}[\sum_{k=1}^{K}  \frac{\pi_k^*\phi_k^*}{\phi^*} (v_k^T\Lambda^{*}_{k}(x - \mu^{*}_{k})(x - \mu^{*}_{k})^T \Lambda^{*}_{k} v_k) ]=\sum_{k=1}^{K} v_k^T \pi^{*}_k\Lambda^{*}_{k} v_k  \nonumber \\
&\mathbb{E}[\sum_{k=1}^{K} \frac{\pi_k^*\phi_k^*}{\phi^*} (Tr(\frac{1}{2}[\Lambda_{k}^{*-1}- (x - \mu_{k}^{*})(x - \mu_{k}^{*})^T ] W_k)^2)]=\sum_{k=1}^{K} \frac{\pi_k}{2} Tr(\Lambda_{k}^{*-1} W_k \Lambda_{k}^{*-1} W_k) \nonumber \\
&\mathbb{E}[\sum_{k=1}^{K} \frac{\pi_k^*\phi_k^*}{\phi^*} ((\bm{\frac{\delta_{c}}{\psi_{k}^* }}^T a_k- a_k^T \bm{1} )^2 )]=\mathbb{E}[\sum_{k=1}^{K} \frac{\pi_k^*\phi_k^*}{\phi^*} (  a_k^T\bm{\frac{\delta_{c}}{\psi_{k}^* }} \bm{\frac{\delta_{c}}{\psi_{k}^* }}^T a_k-2a_k^T \bm{1}  \bm{\frac{\delta_{c}}{\psi_{k}^* }}^T a_k+(a_k^T \bm{1} )^2 )] \nonumber \\
&=\mathbb{E}[ \sum_{k=1}^{K} \frac{\pi_k^*\phi_k^*}{\phi^*} ( \sum_{g'=1}^{d}\sum_{g=1}^{d} a_{kg'} \frac{\delta_{cg'} }{\psi^*_{k,g'}}\frac{\delta_{cg} }{\psi^*_{k,g}} a_{kg} - 2a_k^T \bm{1} \sum_{g=1}^{d} \frac{\delta_{cg}}{\psi_{k,g}^*} a_{kg}+(a_k^T \bm{1} )^2  )] \nonumber \\
&= \sum_{k=1}^{K} a_k^T diag(\frac{\pi_k^*}{\psi_k^*}) a_k - \sum_{k=1}^{K}\pi_k^*(a_k^T \bm{1} )^2 
\end{align} 
The third expectation above can be derived by writing the Trace in summation convention and using Isserlis's theorem on the resulting fourth moment. Note the four `diagonal' terms above adds up to precisely $\langle B,B \rangle -\sum_{k=1}^{K}\pi_k^*(a_k^T \bm{1} )^2$, and the `off-diagonal' terms add up to $0$:
\begin{align}  \label{eq: LRnondiagonalterms}
&\mathbb{E}[ \sum_{k=1}^{K}  \frac{\pi_k^*\phi_k^*}{\phi^*} 2u_k(\frac{1}{\pi_{k}^*}) (x - \mu^{*}_{k})^T \Lambda^{*}_{k} v_k]=0 \nonumber \\
&\mathbb{E}[ \sum_{k=1}^{K}  \frac{\pi_k^*\phi_k^*}{\phi^*} 2u_k(\frac{1}{\pi_{k}^*})Tr(\frac{1}{2}[\Lambda_{k}^{*-1}- (x - \mu_{k}^{*})(x - \mu_{k}^{*})^T ] W_k) ]=0  \nonumber \\
&\mathbb{E}[ \sum_{k=1}^{K}  \frac{\pi_k^*\phi_k^*}{\phi^*} 2u_k(\frac{1}{\pi_{k}^*}) (\bm{\frac{\delta_{c}}{\psi_{k}^* }}^T a_k- a_k^T \bm{1}) ]=0
\end{align} 
\begin{align}
&\mathbb{E}[ \sum_{k=1}^{K}  \frac{\pi_k^*\phi_k^*}{\phi^*} 2(x - \mu^{*}_{k})^T \Lambda^{*}_{k} v_k Tr(\frac{1}{2}[\Lambda_{k}^{*-1}- (x - \mu_{k}^{*})(x - \mu_{k}^{*})^T ] W_k) ]=0 
\end{align}
By Isserlis's Theorem.
\begin{align}
&\mathbb{E}[ \sum_{k=1}^{K}  \frac{\pi_k^*\phi_k^*}{\phi^*} 2(x - \mu^{*}_{k})^T \Lambda^{*}_{k} v_k (\bm{\frac{\delta_{c}}{\psi_{k}^* }}^T a_k- \frac{1}{2} a_k^T \bm{1} )]=0 \nonumber \\
&\mathbb{E}[ \sum_{k=1}^{K}  \frac{\pi_k^*\phi_k^*}{\phi^*} 2Tr(\frac{1}{2}[\Lambda_{k}^{*-1}- (x - \mu_{k}^{*})(x - \mu_{k}^{*})^T ] W_k)  (\bm{\frac{\delta_{c}}{\psi_{k}^* }}^T a_k- \frac{1}{2} a_k^T \bm{1} )]=0
\end{align}

By independence between $x$ and $c$. Putting together \Cref{eq: LRdiagonalterms} and \Cref{eq: LRnondiagonalterms} we have:
\begin{align}
\mathbb{E}[\langle B,LR(B) \rangle ] \leq \langle B,B \rangle - \sum_{k=1}^{K}\pi_k^*(a_k^T \bm{1} )^2 
\end{align}
With an upper bound on $\mathbb{E}[\langle B,LR(B) \rangle ]$ we can now return to \Cref{eq: greaterthanminusIinequality}:

\begin{align}\label{eq: greaterthanminusIinequalitypt2}
& \langle B,C^{\psi}B \rangle + \mathbb{E}[\langle B,LR(B) \rangle ] \nonumber \\
& \leq \sum_{k=1}^{K}\pi_k^*(a_k^T \bm{1} )^2 + \langle B,B \rangle  - \sum_{k=1}^{K}\pi_k^*(a_k^T \bm{1} )^2 \nonumber \\
& = \langle B,B \rangle 
\end{align} 
So the desired inequality is satisfied by picking $0<\epsilon<2$, which is enough for our CAVI algorithm ($\epsilon=1$).


For the case $A^{\epsilon}(\Theta^*)<I$. Again from \Cref{eq: operatorlimits}, we have that $A^{1}(\Theta^*)=I-C^{\psi}-\mathbb{E}[LR]$, so:

\begin{align}
A^{\epsilon}(\Theta^*)<I \iff (1-\epsilon)I + \epsilon A^{1}(\Theta^*)-I<0 \iff C^{\psi}+\mathbb{E}[LR]>0
\end{align}
which is obviously true as $\mathbb{E}[LR]$ is positive definite and $C^{\psi}$ positive semidefinite under $\langle, \rangle$

We have established  $\nabla T^{\epsilon}(\Theta^*)$ converges almost surely as $n \rightarrow \infty$ to an operator with operator norm less than 1, therefore there exist $0 \leq \lambda<1$ for sufficiently large n, such that:

\begin{align}\label{eq: convergencetotrueTheta}
\| \Theta^{(t+1)}-\Theta^* \| &\leq \| T^{\epsilon}(\Theta^{(t)})-T^{\epsilon}(\Theta^*) \| + \| T^{\epsilon}(\Theta^*) - \Theta^* \| \nonumber \\
& \leq \lambda \| \Theta^{(t)}-\Theta^* \| +  \| T^{\epsilon}(\Theta^*) - \Theta^* \|
\end{align}

Calculations in \Cref{subsec: gradientcalcs} shows that $T^{\epsilon}(\Theta^*) \rightarrow \Theta^*$ as $n \rightarrow \infty$, therefore, the iterative procedure $\Theta^{(t+1)}=T^{\epsilon}(\Theta^{(t)})$ converges locally to $\Theta^*$ as $n \rightarrow \infty$.

\subsection{Proof of \thmposteriormeancorollary}\label{subsec: corollary1proof}

Using \Cref{eq: simplifiedtohyperparaappendix}, at iteration $t$, the variational posterior means take the following forms. Using the convergence of $\Theta^{(t)}$ in \thmmaintheorem, the variational posterior means therefore converge to the limits below as $t, n \rightarrow \infty$:

\begin{align}\label{eq: posteriormeans}
\mathbb{E}_{\pi}[\pi_k]&=\frac{\hat{\alpha}_k^{t}}{\sum_{k=1}^{K} \hat{\alpha}_k^{t}}=\frac{n \hat{\pi}_k^{t} + \alpha}{n \sum_{k=1}^{K}\hat{\pi}_k^{t} + K\alpha} \rightarrow \pi_k^{*} \nonumber \\
\mathbb{E}_{\mu,\Lambda}[\mu_k]&=\hat{m}_k^{t}=(n\hat{\mu}_k^{t} \hat{\pi}_k^{t}+\beta m_k)/(n\hat{\pi}_k^{t}+\beta) \rightarrow \mu_k^{*} \nonumber \\
\mathbb{E}_{\mu,\Lambda}[\Lambda_k]&=\hat{\nu}_k^{t} (\hat{\Phi}_k^{t})^{-1}= (n \hat{\pi}_k^{t}+\nu)(n \hat{\pi}_k^{t})^{-1} ( (\hat{\Lambda}_k^{t})^{-1}+ \frac{1 }{ n \hat{\pi}_k^{t}+\beta} \beta (\hat{\mu}_k^{t}-m_k)(\hat{\mu}_k^{t}-m_k)^T+\frac{\Phi}{n \hat{\pi}_k})^{-1} \nonumber \\
& \rightarrow \Lambda_k^{*} \nonumber \\
\mathbb{E}_{\psi}[\psi_{k,g}]&=\frac{\hat{\eta}_{k,g}^{t}}{ \sum_{g=1}^{d}\hat{\eta}_{k,g}^{t} }=\frac{n\hat{\pi}_k^{t}\hat{\psi}_{k,g}^{t}+\eta}{n\hat{\pi}_k^{t}\sum_{g=1}^{d}\hat{\psi}_{k,g}^{t}+d\eta} \rightarrow  \psi_{k,g}^{*}
\end{align}

\subsection{Proof of \thmMLEcorollary}\label{subsec: corollary2proof}
Let $\tilde{\Theta}$ be the strongly consistent maximum likelihood estimate of the parameter ${\Theta}$. One can show through differentiating the marginal likelihood that $\tilde{\Theta}$ satisfies the following equations: 

\begin{align}\label{eq: MLEequations}
& \tilde{\pi}_k-\tilde{\Pi}_k (\tilde{\Theta}) = 0, \quad \tilde{\Pi}_k (\Theta) = \frac{1}{n} \sum_{i=1}^{n} \frac{ p_{ik}(\Theta) }{ p_{i}(\Theta) } \nonumber \\
& \tilde{\mu}_k-\tilde{M}_k(\tilde{\Theta}) = 0, \quad \tilde{M}_k(\Theta) = \frac{\frac{1}{n} \sum_{i=1}^{n} x_i \frac{ p_{ik}(\Theta) }{p_{i}(\Theta)}}{\frac{1}{n} \sum_{i=1}^{n} \frac{p_{ik}(\Theta)}{p_{i}(\Theta)} } \nonumber \\
& \tilde{\Lambda}_k - \tilde{S}_k(\tilde{\Theta}) = 0, \quad \tilde{S}_k(\Theta) = (\frac{1}{n} \sum_{i=1}^{n} \frac{p_{ik}(\Theta)}{p_{i}(\Theta)})(\frac{1}{n} \sum_{i=1}^{n} \frac{p_{ik}(\Theta)}{p_{i}(\Theta)} (x_i-\hat{\mu}_k)(x_i-\hat{\mu}_k)^T )^{-1} \nonumber \\
 & \tilde{\psi}_{k,g}-\frac{\frac{1}{n} \sum_{i=1}^{n} I_{\{c_i=g\}} \frac{p_{ik}(\tilde{\Theta})}{p_{i}(\tilde{\Theta})}}{\frac{1}{n} \sum_{i=1}^{n} \frac{p_{ik}(\tilde{\Theta})}{p_{i}(\tilde{\Theta})} } =  0
\end{align}
where $p_{ik}(\Theta)=\hat{\pi}_k p(x_i,c_i|z_i=k,\Theta), p_{i}(\Theta)=\sum_{i=1}^{K} \hat{\pi}_k p(x_i,c_i|z_i=k,\Theta)$
. For $\tilde{\psi}_{k,g}$, we use a slightly modified equation, using $\tilde{p}_{ik}(\Theta)=p_{ik}(\Theta)/\sum_{g=1}^{d} \hat{\psi}_{k,g}$. Note $\tilde{p}_{ik}(\Theta)=p_{ik}(\Theta)$ whenever $\sum_{g=1}^{d} \hat{\psi}_{k,g}=1$, which $\tilde{\Theta}$ satisfies because $\sum_{g=1}^{d} \tilde{\psi}_{k,g}=1$:
\begin{align} \label{eq: MLEeqpsi}
\tilde{\psi}_{k,g}-\tilde{\Psi}_{k}(\tilde{\Theta})_{g}  = 0, \quad \tilde{\Psi}_{k}(\tilde{\Theta})_{g} = \frac{\frac{1}{n} \sum_{i=1}^{n} I_{\{c_i=g\}} \frac{p_{ik}(\tilde{\Theta})/\sum_{g=1}^{d} \tilde{\psi}_{k,g}  }{\sum_{k=1}^{K} p_{ik}(\tilde{\Theta})/\sum_{g=1}^{d} \tilde{\psi}_{k,g} }}{\frac{1}{n} \sum_{i=1}^{n} \frac{p_{ik}(\tilde{\Theta})/\sum_{g=1}^{d} \tilde{\psi}_{k,g} }{\sum_{k=1}^{K} p_{ik}(\tilde{\Theta})/\sum_{g=1}^{d} \tilde{\psi}_{k,g} } }
\end{align}
Similar to the notation for function $T$ as previously defined in \thmmaintheorem, let
\begin{align}
\tilde{\Pi}(\Theta) = \begin{bmatrix} \tilde{\Pi}_1(\Theta) \\  \vdots \\ \tilde{\Pi}_K(\Theta) \end{bmatrix},
\tilde{M}(\Theta) = \begin{bmatrix} \tilde{M}_1(\Theta) \\  \vdots \\ \tilde{M}_K(\Theta) \end{bmatrix},
\tilde{S}(\Theta) = \begin{bmatrix} \tilde{S}_1(\Theta) \\  \vdots \\ \tilde{S}_K(\Theta) \end{bmatrix},
\tilde{\Psi}(\Theta) = \begin{bmatrix} \tilde{\Psi}_1(\Theta) \\  \vdots \\ \tilde{\Psi}_K(\Theta) \end{bmatrix}
\end{align}
and
\begin{align}
\tilde{T}(\Theta)=\begin{bmatrix} \tilde{\Pi}(\Theta) \\ \tilde{M}(\Theta) \\ \tilde{S}(\Theta) \\ \tilde{\Psi}(\Theta) \end{bmatrix}
\end{align}
We have $\tilde{\Theta}-\tilde{T}(\tilde{\Theta})=0$. Let $\Theta^{\infty}$ be the limit of $\Theta^{(t-1)}$ as $t \rightarrow \infty$, then $\Theta^{\infty}-T(\Theta^{\infty})=0$. We have:
\begin{align}
&\Theta^{\infty}-T(\Theta^{\infty})=0 \\
&\Theta^{\infty}-\tilde{T}(\Theta^{\infty})+\tilde{T}(\Theta^{\infty})-T(\Theta^{\infty})=0 \\
&\Theta^{\infty}-\tilde{T}(\Theta^{\infty})=O(\frac{1}{n}) \\
&\tilde{\Theta}-\tilde{T}(\tilde{\Theta})+\Theta^{\infty}-\tilde{\Theta}-\nabla\tilde{T}(\tilde{\Theta}+h(\Theta^{\infty}-\tilde{\Theta}) )(\Theta^{\infty}-\tilde{\Theta})=O(\frac{1}{n}) \\
&\Theta^{\infty}-\tilde{\Theta}-\nabla\tilde{T}(\tilde{\Theta}+h(\Theta^{\infty}-\tilde{\Theta}) )(\Theta^{\infty}-\tilde{\Theta})=O(\frac{1}{n})
\end{align}
In the above, we used $\tilde{T}(\Theta^{\infty})-T(\Theta^{\infty})=O(\frac{1}{n})$ in the third line, which follows from $\tilde{T}$ having the same form as $T$ except with $\rho_{ik}$ replaced by $p_{ik}$ or $\tilde{p}_{ik}$, and $\rho_{ik}(\Theta^{\infty})=p_{ik}(\Theta^{\infty})+O(\frac{1}{n})$ established back in \Cref{lemma: rho_iklimit} (the same proof works at $\Theta^{\infty}$). On the fourth line, we used Taylor's Theorem, and $0\leq h\leq 1$.

Finally, by \thmmaintheorem we have as $n \rightarrow \infty$, $\Theta^{\infty} \rightarrow \Theta^{*}$ almost surely and it is known the MLE $\tilde{\Theta} \rightarrow \Theta^{*}$ almost surely. In addition, the derivatives of $\mathrm{ln}(p_{ik})$ and $\mathrm{ln}(\tilde{p}_{ik})$ matches the derivatives of $\mathrm{ln}(\rho_{ik})$ derived in \thmmaintheorem, as $n \rightarrow \infty$. Combining these, one can show in calculations similar to the ones done for $\nabla T(\Theta^*)$ in \thmmaintheorem that $\nabla\tilde{T}(\tilde{\Theta}+\lambda(\Theta^{\infty}-\tilde{\Theta}) ) \rightarrow A(\Theta^*)$. We have shown in the proof of \thmmaintheorem that $I-A(\Theta^*)$ is positive definite. Consequently, $\Theta^{\infty}=\tilde{\Theta}+O(\frac{1}{n})$.

\subsection{Proof of \thmposteriorpredcorollary}\label{subsec: corollary3proof}
Recall the variational posterior predictive density evaluated at iteration $t$ is
\begin{align} \label{eq: posteriorpredappendix}
q(\tilde{x},\tilde{c}|x,c) &= \sum_{k=1}^{K} \frac{\hat{\alpha}_k^{t}}{\sum_{k'=1}^{K} \alpha_{k'}^{t}} \;t_{\hat{\nu}_{k}^{t}-q+1}\Big(\tilde{x} \Big\vert \hat{m}_{k}^{t}, \frac{\hat{\Phi}_{k}^{t}(\hat{\beta}_{k}^{t}+1)}{\hat{\beta}_{k}^{t}(\hat{\nu}_{k}^{t}-q+1)} \Big)\; \frac{\hat{\eta}_{k,\tilde{c}}^{t}}{\sum_{g=1}^{d} \hat{\eta}_{k,g}^{t}}
\end{align} 
we already established $\frac{\hat{\alpha}_k^{t}}{\sum_{k'=1}^{K} \alpha_{k'}^{t}} \rightarrow \pi_k^*$, $\frac{\hat{\eta}_{k,\tilde{c}}^{t}}{\sum_{g=1}^{d} \hat{\eta}_{k,g}^{t}} \rightarrow \psi_{k,\tilde{c}}^*$. For remaining middle term, we know $\hat{m}_{k}^{t} \rightarrow \mu_k^*$, $\frac{\hat{\Phi}_{k}^{t}(\hat{\beta}_{k}^{t}+1)}{\hat{\beta}_{k}^{t}(\hat{\nu}_{k}^{t}-q+1)} \rightarrow \Lambda_k^{*-1}$ and $\hat{\nu}_{k}^{t}-q+1 \rightarrow \infty$. So $t_{\hat{\nu}_{k}^{t}-q+1}\Big(\tilde{x} \Big\vert \hat{m}_{k}^{t}, \frac{\hat{\Phi}_{k}^{t}(\hat{\beta}_{k}^{t}+1)}{\hat{\beta}_{k}^{t}(\hat{\nu}_{k}^{t}-q+1)} \Big) \rightarrow \mathcal{N}(\tilde{x}|\mu_k^*, \Lambda_k^{*-1})$. Putting it all together we have $q(\tilde{x},\tilde{c}|x,c) \rightarrow  \sum_{k=1}^{K} \pi_k^* \mathcal{N}(\tilde{x}|\mu_k^*, \Lambda_k^{*-1}) \psi_{k,\tilde{c}}^*$ as required. 

\subsection{Proof of \thmposteriorconsistencycorollary}\label{subsec: corollary4proof}

For posterior consistency, we require $Q_{n}^{\infty}(\| \Theta-\Theta^*\|_2>\epsilon)$ to converge to $0$ for all $\epsilon>0$ as $n \rightarrow \infty$. Using Markov's inequality:
\begin{align} \label{eq: posteriorconsistency}
Q_{n}^{\infty}(\| \Theta-\Theta^*\|_2>\epsilon)&=Q_{n}^{\infty}(\| \Theta-\Theta^*\|^2_2>\epsilon^2) \leq \frac{1}{\epsilon^2}\mathbb{E}_{Q_{n}^{\infty}}(\| \Theta-\Theta^*\|^2_2) \\
&= \frac{1}{\epsilon^2}\mathbb{E}_{Q_{n}^{\infty}}(\| \Theta-\mathbb{E}_{Q_{n}^{\infty}}[\Theta]+\mathbb{E}_{Q_{n}^{\infty}}[\Theta]-\Theta^*\|^2_2) \\
& \leq \frac{2}{\epsilon^2} \mathbb{E}_{Q_{n}^{\infty}}(\| \Theta-\mathbb{E}_{Q_{n}^{\infty}}[\Theta] \|^2_2) + \frac{2}{\epsilon^2}\mathbb{E}_{Q_{n}^{\infty}}(\| \mathbb{E}_{Q_{n}^{\infty}}[\Theta]-\Theta^{*} \|^2_2)
\end{align}
The $\mathbb{E}_{Q_{n}^{\infty}}(\| \Theta-\mathbb{E}_{Q_{n}^{\infty}}[\Theta]  \|^2_2)$ term is a sum of the marginal variances of the variational posterior which were all shown to be $O(\frac{1}{n})$ in the proof of \Cref{lemma: rho_iklimit}. The $\mathbb{E}_{Q_{n}^{\infty}}(\| \mathbb{E}_{Q_{n}^{\infty}}[\Theta] -\Theta^*\|^2_2)=\| \mathbb{E}_{Q_{n}^{\infty}}[\Theta] -\Theta^*\|^2_2$ term converges to $0$ almost surely by \thmmaintheorem as $n \rightarrow \infty$.
For the contraction rate, let $\epsilon=M/\sqrt{n}$, then 
$\frac{2}{\epsilon^2} \mathbb{E}_{Q_{n}^{\infty}}(\| \Theta-\mathbb{E}_{Q_{n}^{\infty}}[\Theta] \|^2_2)
=\frac{2n}{M^2}\,O(n^{-1})
=O(\frac{1}{M^2})$ almost surely.
For the second term, using Theorem 3.1 from \cite{Redner1984}, we have $\tilde{\Theta}-\Theta^*=O_p(n^{-1/2})$, combining this with \thmMLEcorollary we have:
$\mathbb{E}_{Q_{n}^{\infty}}[\Theta]-\Theta^{*}=\tilde{\Theta}-\Theta^*+O(\frac{1}{n})=O_p(n^{-1/2})$. Hence 
$$\frac{2}{\epsilon^2}\| \mathbb{E}_{Q_{n}^{\infty}}[\Theta] -\Theta^*\|^2_2=\frac{2n}{M^2}O_p(n^{-1})=O_p(\frac{1}{M^2}).$$ Finally, let $M=M_n \rightarrow \infty$ and we have $Q_n^\infty\!\left(\big\|\Theta-\Theta^*\big\|_2 > \frac{M_n}{\sqrt{n}}\right)
\ \xrightarrow[]{P_{\Theta^*}}\ 0.$

\subsection{Details of gradient limit calculations}\label{subsec: gradientcalcs}


We start with the gradients of $\Psi$. As before, let $\theta_{k'}$ be any scaler variable in $\Theta$ belonging to component $k'$. We have:
\begin{align} \label{eq: dPsidpsi}
\frac{\partial \Psi_k(\Theta)_g}{\partial \theta_{k'}} &= \frac{\partial}{\partial \theta_{k'}}( \frac{\sum_{i=1}^{n} I_{\{c_i=g\}} r_{ik} }{ \sum_{i=1}^{n} r_{ik} } ) \nonumber \\
&= \frac{ (\sum_{i=1}^{n} I_{\{c_i=g\}} \frac{\partial  r_{ik}}{\partial \theta_{k'}} )( \sum_{i=1}^{n} r_{ik} ) -  (\sum_{i=1}^{n} \frac{\partial  r_{ik}}{\partial \theta_{k'} } )( \sum_{i=1}^{n} I_{\{c_i=g\}} r_{ik} ) }{ (\sum_{i=1}^{n} r_{ik})^2 } 
\end{align}
Starting with the case $\theta_{k'}=\hat{\psi}_{k',g'}$. We consider the two terms in the last line as $n \rightarrow \infty$ at $\Theta=\Theta^*$. Dividing each term in the top and bottom by $n$:
\begin{align} \label{eq: dPidpsi}
\frac{1}{n}\sum_{i=1}^{n} \frac{\partial  r_{ik}(\Theta^*)}{\partial \hat{\psi}_{k',g'}}&=\frac{1}{n}\sum_{i=1}^{n} \frac{\partial \ln(\rho_{ik'}(\Theta^*)) }{\partial \hat{\psi}_{k',g'}} \Bigl\{ \delta_{kk'}r_{ik}(\Theta^*)-r_{ik}(\Theta^*)r_{ik'}(\Theta^*) \Bigl\} \nonumber \\
& \rightarrow  \mathbb{E}[ (\frac{\delta_{g' c}}{\psi_{k',g'}^* }-1) (\delta_{kk'}\frac{\pi_k^*\phi_k^*}{\phi^*}-\frac{\pi_k^*\phi_k^* \pi_{k'}^* \phi_{k'}^*}{\phi^{*2}} ) ] \nonumber \\
& =  (\frac{\psi_{k,g'}^*}{\psi_{k',g'}^*}-1)\delta_{kk'}\pi_k^*-\mathbb{E}[ (\frac{\delta_{g' c}}{\psi_{k',g'}^* }-1) \frac{\pi_k^*\phi_k^* \pi_{k'}^* \phi_{k'}^*}{\phi^{*2}}  ]=-\mathbb{E}[ (\frac{\delta_{g' c}}{\psi_{k',g'}^* }-1) \frac{\pi_k^*\phi_k^* \pi_{k'}^* \phi_{k'}^*}{\phi^{*2}}  ]
\end{align}
where in the second line, we used the limit of $r_{ik}(\Theta^*)$ previously derived in \Cref{lemma: rho_iklimit}, and a modified version of the strong law of large numbers \Cref{lemma: modifiedSLLN}. The expectation here is taken with respect to $x,c$, which has density of the likelihood evaluated at the true data generating parameter $\Theta^*$. In the last line we used $\frac{\delta_{kk'}\pi_k^*}{{\psi_{k',g'}^*}}\mathbb{E}[\frac{\delta_{g' c}\phi_k^*}{\phi^*}]=\frac{\delta_{kk'}\pi_k^*}{{\psi_{k',g'}^*}}\psi_{k,g'}^*=\delta_{kk'}\pi_k^*$. 
Similarly:
\begin{align}
\frac{1}{n}\sum_{i=1}^{n}I_{\{c_i=g\}} \frac{\partial  r_{ik}(\Theta^*)}{\partial \psi_{k',g'}}&=\frac{1}{n}\sum_{i=1}^{n} I_{\{c_i=g\}} \frac{\partial \ln(\rho_{ik'}(\Theta^*)) }{\partial \psi_{k',g'}} \Bigl\{ \delta_{kk'}r_{ik}(\Theta^*)-r_{ik}(\Theta^*)r_{ik'}(\Theta^*) \Bigl\}  \nonumber \\
& \rightarrow  \mathbb{E}[ \delta_{g c} (\frac{\delta_{g' c}}{\psi_{k',g'}^*}-1) (\delta_{kk'}\frac{\pi_k^*\phi_k^*}{\phi^*}-\frac{\pi_k^*\phi_k^* \pi_{k'}^* \phi_{k'}^*}{\phi^{*2}} ) ]  \nonumber \\
& \rightarrow \delta_{gg'}\delta_{kk'}\pi_k^*-\delta_{kk'}\psi_{k,g}^{*}\pi_k^*-\mathbb{E}[ \delta_{gc}(\frac{\delta_{g' c}}{\psi_{k',g'}^* }-1) \frac{\pi_k^*\phi_k^* \pi_{k'}^* \phi_{k'}^*}{\phi^{*2}}  ]
\end{align}
where in the last line we used $\delta_{g c}\delta_{g' c}=\delta_{gg'}\delta_{g' c}$.
Additionally:
\begin{align}
&\Pi_k(\Theta^*)=\frac{1}{n}\sum_{i=1}^{n} r_{ik}(\Theta^*) \rightarrow \mathbb{E}[\frac{\pi_k^*\phi_k^*}{\phi^*} ] = \pi_k^{*}  \nonumber \\
&\frac{1}{n}\sum_{i=1}^{n} I_{\{c_i=g\}} r_{ik}(\Theta^*) \rightarrow \mathbb{E}[\delta_{gc}\frac{\pi_k^*\phi_k^*}{\phi^*}]= \pi_k^{*} \psi_{k,g}^{*}
\end{align}
Consequently, $\Psi_k(\Theta^*)_g \rightarrow \psi_{k,g}^{*}$. Putting everything together and plugging into \Cref{eq: dPsidpsi} and we get:
\begin{align} 
\frac{\partial \Psi_k(\Theta^*)_g}{\partial \psi_{k',g'}} & \rightarrow \delta_{gg'}\delta_{kk'}-\delta_{kk'}\psi_{k,g}^{*}-\mathbb{E}[ \delta_{gc}(\frac{\delta_{g' c}}{\psi_{k',g'}^* }-1) \frac{\phi_k^* \pi_{k'}^* \phi_{k'}^*}{\phi^{*2}}  ]+\mathbb{E}[ (\frac{\delta_{g' c}}{\psi_{k',g'}^* }-1) \frac{\psi_{k,g}^*\phi_k^* \pi_{k'}^* \phi_{k'}^*}{\phi^{*2}}  ]  \nonumber \\
& = \delta_{gg'}\delta_{kk'}-\delta_{kk'}\psi_{k,g}^{*} -\mathbb{E}[  \frac{(\delta_{gc}-\psi_{k,g}^{*})\phi_k^* }{\phi^{*}} (\frac{\delta_{g' c}}{\psi_{k',g'}^*}-1) \frac{\pi_{k'}^*\phi_{k'}^*}{\phi^{*}} ]
\end{align}
$\frac{\partial \Psi_k(\Theta)_g}{\partial \hat{\pi}_{k'}}$, $\frac{\partial \Psi_k(\Theta)_g}{\partial \hat{\mu}_{k'l}}$, $\frac{\partial \Psi_k(\Theta)_g}{\partial \hat{\Lambda}_{k'jl}}$ can be calculated similarly to \Cref{eq: dPsidpsi} and replacing $\theta_{k'}$ with the corresponding variable to be differentiated:

\begin{align}
\frac{\partial \Psi_k(\Theta^*)_g}{\partial \hat{\pi}_{k'}} &\rightarrow \mathbb{E}[ \delta_{g c} (\frac{1}{\pi_{k'}^{*}}) (\delta_{kk'}\frac{\pi_k^*\phi_k^*}{\phi^*}-\frac{\pi_k^*\phi_k^* \pi_{k'}^* \phi_{k'}^*}{\phi^{*2}} ) ] \frac{\pi_{k}^*}{\pi_{k}^{*2}}  \nonumber \\
& -\mathbb{E}[ (\frac{1}{\pi_{k'}^{*}} ) (\delta_{kk'}\frac{\pi_k^*\phi_k^*}{\phi^*}-\frac{\pi_k^*\phi_k^* \pi_{k'}^* \phi_{k'}^*}{\phi^{*2}} ) ]\frac{\pi_{k}^*\psi_{k,g}^*}{\pi_{k}^{*2} }  \nonumber \\
&=-\mathbb{E}[  \frac{(\delta_{gc}-\psi_{k,g}^{*})\phi_k^* }{\phi^{*}} \frac{\phi_{k'}^* }{\phi^{*}} ] 
\end{align}
Using Einstein's summation convention $\sum_{j=1}^{q} \Lambda_{k'lj}^{*}(x_{j} - \mu_{k'j}^{*})= \Lambda_{k'lj}^{*}(x_{j} - \mu_{k'j}^{*})$, we have:
\begin{align}
\frac{\partial \Psi_k(\Theta^*)_g}{\partial \hat{\mu}_{k'l}}  &\rightarrow \mathbb{E}[ \delta_{g c} \Lambda_{k'lj}^{*}(x_{j} - \mu_{k'j}^{*} )  (\delta_{kk'}\frac{\pi_k^*\phi_k^*}{\phi^*}-\frac{\pi_k^*\phi_k^* \pi_{k'}^* \phi_{k'}^*}{\phi^{*2}} ) ] \frac{\pi_{k}^*}{\pi_{k}^{*2}}  \nonumber \\
& -\mathbb{E}[ \Lambda_{k'lj}^{*}(x_{j} - \mu_{k'j}^{*} ) (\delta_{kk'}\frac{\pi_k^*\phi_k^*}{\phi^*}-\frac{\pi_k^*\phi_k^* \pi_{k'}^* \phi_{k'}^*}{\phi^{*2}} ) ]\frac{\pi_{k}^*\psi_{k,g}^*}{\pi_{k}^{*2} }  \nonumber \\
&= -\mathbb{E}[ \frac{(\delta_{gc}-\psi_{k,g}^{*})\phi_k^* }{\phi^{*}} \Lambda^{*}_{k'lj}(x_{j} - \mu^{*}_{k'j} )\frac{\phi_{k'}^*\pi_{k'}^* }{\phi^{*}} ]
\end{align}
\begin{align}
\frac{\partial \Psi_k(\Theta^*)_g}{\partial  \hat{\Lambda}_{k'jl}} &\rightarrow \mathbb{E}[ \delta_{g c} \frac{1}{2}(\Lambda_{k'jl}^{* -1}- (x_{j} - \mu_{k'j}^{*})(x_{l} - \mu_{k'l}^{*}) )
  (\delta_{kk'}\frac{\pi_k^*\phi_k^*}{\phi^*}-\frac{\pi_k^*\phi_k^* \pi_{k'}^* \phi_{k'}^*}{\phi^{*2}} ) ] \frac{\pi_{k}^*}{\pi_{k}^{*2}} \nonumber \\
& -\mathbb{E}[ \frac{1}{2}(\Lambda_{k'jl}^{* -1}- (x_{j} - \mu_{k'j}^{*})(x_{l} - \mu_{k'l}^{*}) )
 (\delta_{kk'}\frac{\pi_k^*\phi_k^*}{\phi^*}-\frac{\pi_k^*\phi_k^* \pi_{k'}^* \phi_{k'}^*}{\phi^{*2}} ) ]\frac{\pi_{k}^*\psi_{k,g}^*}{\pi_{k}^{*2} }  \nonumber \\
&=-\mathbb{E}[ \frac{(\delta_{gc}-\psi_{k,g}^{*})\phi_k^* }{\phi^{*}}  \frac{1}{2}[\Lambda_{k'jl}^{* -1}- (x_{j} - \mu_{k'j}^{*})(x_{l} - \mu_{k'l}^{*}) ]\frac{\phi_{k'}^*\pi_{k'}^* }{\phi^{*}}   ]
\end{align}

Next, for the gradients of $\Pi$:

\begin{align}  \label{eq: dPidpsishort}
\frac{\partial \Pi_k(\Theta^*)}{\partial \hat{\psi}_{k',g'} }&=\frac{\partial}{\partial \hat{\psi}_{k',g'}}(\frac{1}{n}\sum_{i=1}^{n} r_{ik}(\Theta^*) ) \nonumber \\ 
& \rightarrow -\mathbb{E}[ (\frac{\delta_{g' c}}{\psi_{k',g'}^* }-1) \frac{\pi_k^*\phi_k^* \pi_{k'}^* \phi_{k'}^*}{\phi^{*2}} ]
\end{align}
Recall this is the same as \Cref{eq: dPidpsi}.

\begin{align} 
\frac{\partial \Pi_k(\Theta^*)}{\partial \hat{\pi}_{k'} }&=\frac{\partial}{\partial \hat{\pi}_{k'}}(\frac{1}{n}\sum_{i=1}^{n} r_{ik}(\Theta^*) )  \nonumber \\ 
& \rightarrow \mathbb{E}[ \frac{1}{\pi_{k'}^{*}} (\delta_{kk'}\frac{\pi_k^*\phi_k^*}{\phi^*}-\frac{\pi_k^*\phi_k^* \pi_{k'}^* \phi_{k'}^*}{\phi^{*2}} ) ] \nonumber \\
&= \delta_{kk'}-\mathbb{E}[\frac{\pi_k^*\phi_k^* \phi_{k'}^*}{\phi^{*2}} ] 
\end{align}

\begin{align} 
\frac{\partial \Pi_k(\Theta^*)}{\partial \hat{\mu}_{k'l} }&=\frac{\partial}{\partial \hat{\mu}_{k'l}}(\frac{1}{n}\sum_{i=1}^{n} r_{ik}(\Theta^*) )  \nonumber \\ 
& \rightarrow \mathbb{E}[ \Lambda_{k'lj}^{*}(x_{j} - \mu_{k'j}^{*} ) (\delta_{kk'}\frac{\pi_k^*\phi_k^*}{\phi^*}-\frac{\pi_k^*\phi_k^* \pi_{k'}^* \phi_{k'}^*}{\phi^{*2}} ) ]  \nonumber \\
&= -\mathbb{E}[\Lambda_{k'lj}^{*}(x_{j} - \mu_{k'j}^{*} ) \frac{\pi_k^*\phi_k^*\pi_{k'}^* \phi_{k'}^*}{\phi^{*2}} ) ] 
\end{align}

\begin{align} 
\frac{\partial \Pi_k(\Theta^*)}{\partial \hat{\Lambda}_{k'jl} }&=\frac{\partial}{\partial \hat{\Lambda}_{k'jl}}(\frac{1}{n}\sum_{i=1}^{n} r_{ik}(\Theta^*) )  \nonumber \\ 
& \rightarrow \mathbb{E}[ \frac{1}{2}(\Lambda_{k'jl}^{* -1}- (x_{j} - \mu_{k'j}^{*})(x_{l} - \mu_{k'l}^{*}) ) (\delta_{kk'}\frac{\pi_k^*\phi_k^*}{\phi^*}-\frac{\pi_k^*\phi_k^* \pi_{k'}^* \phi_{k'}^*}{\phi^{*2}} ) ]  \nonumber \\
&= -\mathbb{E}[\frac{1}{2}(\Lambda_{k'jl}^{* -1}- (x_{j} - \mu_{k'j}^{*})(x_{l} - \mu_{k'l}^{*}) ) \frac{\pi_k^*\phi_k^*\pi_{k'}^* \phi_{k'}^*}{\phi^{*2}} ) ] 
\end{align}

Next, for the gradients of $M$:

\begin{align}
\frac{\partial M_k(\Theta)}{\partial \theta_{k'} }&=\frac{\partial}{\partial \theta_{k'} }( \frac{\sum_{i=1}^{n} r_{ik}x_i}{\sum_{i=1}^{n} r_{ik}} )  \nonumber \\ 
&=\frac{ (\sum_{i=1}^{n} \frac{\partial  r_{ik} }{\partial \theta_{k'} } x_i )( \sum_{i=1}^{n} r_{ik} ) -  (\sum_{i=1}^{n} \frac{\partial  r_{ik} }{\partial \theta_{k'} } )( \sum_{i=1}^{n} r_{ik} x_i ) }{ (\sum_{i=1}^{n} r_{ik} )^2 }
\end{align}
Note
\begin{align}
M_k(\Theta^*)=\frac{\sum_{i=1}^{n} r_{ik}(\Theta^*)x_i}{\sum_{i=1}^{n} r_{ik}(\Theta^*)} \rightarrow \frac{\mathbb{E}[  \frac{\pi_k^*\phi_k^* x}{\phi^{*}} ]}{\mathbb{E}[  \frac{\pi_k^*\phi_k^*}{\phi^{*}} ]}=\mu_k^{*}
\end{align}
For $\theta_{k'}=\hat{\psi}_{k',g'}$, we have:
\begin{align}
\frac{\partial M_k(\Theta^*)}{\partial \hat{\psi}_{k',g'} } & \rightarrow \frac{1}{\pi_k^{*2} } \Bigl\{ \mathbb{E}[ (\frac{\delta_{g' c}}{\psi_{k',g'}^* }-1) (\delta_{kk'}\frac{\pi_k^*\phi_k^*}{\phi^*}-\frac{\pi_k^*\phi_k^* \pi_{k'}^* \phi_{k'}^*}{\phi^{*2}} )x ] \pi_k^* - \mathbb{E}[ (\frac{\delta_{g' c}}{\psi_{k',g'}^* }-1) (\delta_{kk'}\frac{\pi_k^*\phi_k^*}{\phi^*}-\frac{\pi_k^*\phi_k^* \pi_{k'}^* \phi_{k'}^*}{\phi^{*2}} ) ] \pi_k^* \mu_k^* \Bigl\}   \nonumber \\
& =  \delta_{kk'} \mathbb{E}[ (\frac{\delta_{g' c}}{\psi_{k',g'}^* }-1) \frac{\phi^*_k}{\phi^*} (x-\mu_k^*) ] - \mathbb{E}[ \frac{(x-\mu_k^*) \phi_k^*}{\phi^{*} } (\frac{\delta_{g' c}}{\psi_{k',g'}^* }-1) \frac{\pi_{k'}^* \phi_{k'}^*}{\phi^*}  ]   \nonumber \\
& =  - \mathbb{E}[ \frac{(x-\mu_k^*) \phi_k^*}{\phi^{*} } (\frac{\delta_{g' c}}{\psi_{k',g'}^* }-1) \frac{\pi_{k'}^* \phi_{k'}^*}{\phi^*}  ] 
\end{align}
For $\theta_{k'}=\hat{\pi}_{k'}$, we have:
\begin{align}
\frac{\partial M_k(\Theta^*)}{\partial \hat{\pi}_{k'} } & \rightarrow  \frac{1}{\pi_k^{*2} } \Bigl\{ \mathbb{E}[ \frac{1}{\pi_{k'}^{*}} (\delta_{kk'}\frac{\pi_k^*\phi_k^*}{\phi^*}-\frac{\pi_k^*\phi_k^* \pi_{k'}^* \phi_{k'}^*}{\phi^{*2}} )x ] \pi_k^* - \mathbb{E}[ \frac{1}{\pi_{k'}^{*}} (\delta_{kk'}\frac{\pi_k^*\phi_k^*}{\phi^*}-\frac{\pi_k^*\phi_k^* \pi_{k'}^* \phi_{k'}^*}{\phi^{*2}} ) ] \pi_k^* \mu_k^* \Bigl\}  \nonumber \\
& = \delta_{kk'} \mathbb{E}[ \frac{1}{\pi_{k'}^{*}} \frac{\phi^*_k}{\phi^*} (x-\mu_k^*) ] - \mathbb{E}[ \frac{(x-\mu_k^*) \phi_k^*}{\phi^{*} } (\frac{1}{\pi_{k'}^{*}}) \frac{\pi_{k'}^* \phi_{k'}^*}{\phi^*}  ]   \nonumber \\
& = - \mathbb{E}[ \frac{(x-\mu_k^*) \phi_k^*}{\phi^{*} } \frac{ \phi_{k'}^*}{\phi^*}  ] 
\end{align}
For $\theta_{k'}=\hat{\mu}_{k'l}$, we have:
\begin{align}
\frac{\partial M_k(\Theta^*)_{r}}{\partial \hat{\mu}_{k'l} } & \rightarrow  \frac{1}{\pi_k^{*2} } \Bigl\{ \mathbb{E}[ \Lambda_{k'lj}^{*}(x_{j} - \mu_{k'j}^{*} ) (\delta_{kk'}\frac{\pi_k^*\phi_k^*}{\phi^*}-\frac{\pi_k^*\phi_k^* \pi_{k'}^* \phi_{k'}^*}{\phi^{*2}} )x_{r} ] \pi_k^*  \nonumber \\
&- \mathbb{E}[ \Lambda_{k'lj}^{*}(x_{j} - \mu_{k'j}^{*} ) (\delta_{kk'}\frac{\pi_k^*\phi_k^*}{\phi^*}-\frac{\pi_k^*\phi_k^* \pi_{k'}^* \phi_{k'}^*}{\phi^{*2}} ) ] \pi_k^* \mu_{kr}^* \Bigl\}  \nonumber \\
& = \delta_{kk'} \mathbb{E}[ \Lambda_{k'lj}^{*}(x_{j} - \mu_{k'j}^{*} ) \frac{\phi^*_k}{\phi^*} (x_{r}-\mu_{kr}^*) ] - \mathbb{E}[ \frac{(x_{r}-\mu_{kr}^*) \phi_k^*}{\phi^{*} } (\Lambda_{k'lj}^{*}(x_{j} - \mu_{k'j}^{*} )) \frac{\pi_{k'}^* \phi_{k'}^*}{\phi^*}  ]  \nonumber \\
& = \delta_{kk'}\delta_{lr}- \mathbb{E}[ \frac{(x_{r}-\mu_{kr}^*) \phi_k^*}{\phi^{*} } (\Lambda_{k'lj}^{*}(x_{j} - \mu_{k'j}^{*} )) \frac{\pi_{k'}^* \phi_{k'}^*}{\phi^*}  ]
\end{align}
For $\theta_{k'}=\hat{\Lambda}_{k'jl}$, we have:
\begin{align}
\frac{\partial M_k(\Theta^*)_{r}}{\partial \hat{\Lambda}_{k'jl} } & \rightarrow  \frac{1}{\pi_k^{*2} } \Bigl\{ \mathbb{E}[ \frac{1}{2}(\Lambda_{k'jl}^{* -1}- (x_{j} - \mu_{k'j}^{*})(x_{l} - \mu_{k'l}^{*}) ) (\delta_{kk'}\frac{\pi_k^*\phi_k^*}{\phi^*}-\frac{\pi_k^*\phi_k^* \pi_{k'}^* \phi_{k'}^*}{\phi^{*2}} )x_{r} ] \pi_k^*  \nonumber \\
&- \mathbb{E}[ \frac{1}{2}(\Lambda_{k'jl}^{* -1}- (x_{j} - \mu_{k'j}^{*})(x_{l} - \mu_{k'l}^{*}) ) (\delta_{kk'}\frac{\pi_k^*\phi_k^*}{\phi^*}-\frac{\pi_k^*\phi_k^* \pi_{k'}^* \phi_{k'}^*}{\phi^{*2}} ) ] \pi_k^* \mu_{kr}^* \Bigl\}  \nonumber \\
& = \delta_{kk'} \mathbb{E}[ \frac{1}{2}(\Lambda_{k'jl}^{* -1}- (x_{j} - \mu_{k'j}^{*})(x_{l} - \mu_{k'l}^{*}) ) \frac{\phi^*_k}{\phi^*} (x_{r}-\mu_{kr}^*) ]  \nonumber \\
& - \mathbb{E}[ \frac{(x_{r}-\mu_{kr}^*) \phi_k^*}{\phi^{*} } (\frac{1}{2}[\Lambda_{k'jl}^{* -1}- (x_{j} - \mu_{k'j}^{*})(x_{l} - \mu_{k'l}^{*}) ]) \frac{\pi_{k'}^* \phi_{k'}^*}{\phi^*}  ]  \nonumber \\
& = - \mathbb{E}[ \frac{(x_{r}-\mu_{kr}^*) \phi_k^*}{\phi^{*} } (\frac{1}{2}[\Lambda_{k'jl}^{* -1}- (x_{j} - \mu_{k'j}^{*})(x_{l} - \mu_{k'l}^{*}) ]) \frac{\pi_{k'}^* \phi_{k'}^*}{\phi^*}  ]
\end{align}

Finally, for the gradients of $S$:

\begin{align}
&\frac{\partial S_k(\Theta)}{\partial \theta_{k'} }=\frac{\partial}{\partial \theta_{k'}}\Bigl\{ (\sum_{i=1}^{n} r_{ik} )(\sum_{i=1}^{n} r_{ik} (x_i-M_k)(x_i-M_k)^T )^{-1} \Bigl\}  \nonumber \\ 
&= (\sum_{i=1}^{n} \frac{\partial r_{ik}}{\partial \theta_{k'} } )(\sum_{i=1}^{n} r_{ik} (x_i-M_k)(x_i-M_k)^T )^{-1}  \nonumber \\
&- (\sum_{i=1}^{n} r_{ik} ) (\sum_{i=1}^{n} r_{ik} (x_i-M_k)(x_i-M_k)^T )^{-1} (\sum_{i=1}^{n} \frac{\partial }{\partial \theta_{k'} }\Bigl\{ r_{ik}(x_i-M_k)(x_i-M_k)^T \Bigl\} ) (\sum_{i=1}^{n} r_{ik} (x_i-M_k)(x_i-M_k)^T )^{-1} 
\end{align}

where we used $M_k=M_k(\Theta)$ as a shorthand. Before continuing we need to evaluate some intermediate limits:
\begin{align}
&\frac{1}{n}\sum_{i=1}^{n} r_{ik}(\Theta^*) (x_i-M_k(\Theta^*))(x_i-M_k(\Theta^*))^T \nonumber \\
&= \frac{1}{n}\sum_{i=1}^{n} r_{ik}(\Theta^*) (x_i-\mu_k^{*}+\mu_k^{*}-M_k(\Theta^*))(x_i-\mu_k^{*}+\mu_k^{*}-M_k(\Theta^*))^T  \nonumber \\
&=\frac{1}{n}\sum_{i=1}^{n} r_{ik}(\Theta^*) (x_i-\mu_k^{*})(x_i-\mu_k^{*})^T+\frac{1}{n}\sum_{i=1}^{n} r_{ik}(\Theta^*)(x_i-\mu_k^{*})(\mu_k^{*}-M_k(\Theta^*))^T  \nonumber \\
&+(\mu_k^{*}-M_k(\Theta^*))\frac{1}{n}\sum_{i=1}^{n} r_{ik}(\Theta^*)(x_i-\mu_k^{*})^T+(\mu_k^{*}-M_k(\Theta^*))(\frac{1}{n}\sum_{i=1}^{n} r_{ik}(\Theta^*))(\mu_k^{*}-M_k(\Theta^*))^T  \nonumber \\
& \rightarrow  \mathbb{E}[\frac{\pi_k^*\phi_k^*}{\phi^{*}}(x-\mu_k^{*})(x-\mu_k^{*})^T ] = \pi_k^* (\Lambda_k^{*})^{-1}
\end{align}
Consequently, $S_k(\Theta^*) \rightarrow \Lambda_k^{*}$. Using a similar argument, $\frac{1}{n}\sum_{i=1}^{n} r_{ik}(\Theta^*) (x_i-M_k(\Theta^*)) \rightarrow 0$

\begin{align}
&\frac{1}{n}\sum_{i=1}^{n} \frac{\partial }{\partial \theta_{k'}}\Bigl\{ r_{ik}(x_i-M_k)(x_i-M_k)^T \Bigl\} )  \nonumber \\
&=\frac{1}{n}\sum_{i=1}^{n} \frac{\partial r_{ik} }{\partial \theta_{k'}} (x_i-M_k)(x_i-M_k)^T + (\frac{-\partial M_k }{\partial \theta_{k'}})\frac{1}{n}\sum_{i=1}^{n} r_{ik}(x_i-M_k)^T + \frac{1}{n}\sum_{i=1}^{n} r_{ik}(x_i-M_k)(\frac{-\partial M_k }{\partial \theta_{k'}})^T \nonumber \\
&=\frac{1}{n}\sum_{i=1}^{n} \frac{\partial r_{ik} }{\partial \theta_{k'}} (x_i-\mu_k^{*})(x_i-\mu_k^{*})^T+(\mu_k^{*}-M_k)\frac{1}{n}\sum_{i=1}^{n} \frac{\partial r_{ik} }{\partial \theta_{k'}}(x_i-\mu_k^{*})^T  \nonumber \\ 
&+\frac{1}{n}\sum_{i=1}^{n} \frac{\partial r_{ik} }{\partial \theta_{k'}} (x_i-\mu_k^{*})(\mu_k^{*}-M_k)^T+(\mu_k^{*}-M_k)\frac{1}{n}\sum_{i=1}^{n} \frac{\partial r_{ik} }{\partial \theta_{k'}}(\mu_k^{*}-M_k)^T  \nonumber \\
&+(\frac{-\partial M_k }{\partial \theta_{k'}})\frac{1}{n}\sum_{i=1}^{n} r_{ik}(x_i-M_k)^T + \frac{1}{n}\sum_{i=1}^{n} r_{ik}(x_i-M_k)(\frac{-\partial M_k }{\partial \theta_{k'}})^T  \nonumber \\
& \sim \frac{1}{n}\sum_{i=1}^{n} \frac{\partial r_{ik} }{\partial \theta_{k'}} (x_i-\mu_k^{*})(x_i-\mu_k^{*})^T \textit{   at }\Theta=\Theta^{*}, n \rightarrow \infty
\end{align}
Putting everything together, for the case $\theta_{k'}=\hat{\psi}_{k',g'}$, we therefore have:
\begin{align}
& \frac{\partial S_k(\Theta^*)}{\partial \hat{\psi}_{k',g'} }\rightarrow  \mathbb{E}[ (\frac{\delta_{g' c}}{\psi_{k',g'}^* }-1) (\delta_{kk'}\frac{\pi_k^*\phi_k^*}{\phi^*}-\frac{\pi_k^*\phi_k^* \pi_{k'}^* \phi_{k'}^*}{\phi^{*2}} )]  (\pi_k^{* -1} \Lambda_k^*)   \nonumber \\
&  - \pi_k^{*}(\pi_k^{* -1} \Lambda_k^*) \mathbb{E}[ (\frac{\delta_{g' c}}{\psi_{k',g'}^*}-1) (\delta_{kk'}\frac{\pi_k^*\phi_k^*}{\phi^*}-\frac{\pi_k^*\phi_k^* \pi_{k'}^* \phi_{k'}^*}{\phi^{*2}} )(x-\mu_k^{*})(x-\mu_k^{*})^T ] (\pi_k^{* -1} \Lambda_k^*)  \nonumber \\
&  = \delta_{kk'} \mathbb{E}[  (\frac{\delta_{g' c}}{\psi_{k',g'}^*}-1) \frac{\phi_{k}^*}{\phi^{*} } (\Lambda_k^*-\Lambda_k^*(x-\mu_k^{*})(x-\mu_k^{*})^T \Lambda_k^*) ]  \nonumber \\
&  -\mathbb{E}[  \frac{\phi_{k}^*}{\phi^{*} } (\Lambda_k^*-\Lambda_k^*(x-\mu_k^{*})(x-\mu_k^{*})^T \Lambda_k^*) (\frac{\delta_{g' c}}{\psi_{k',g'}^*}-1)(\frac{\pi_{k'}^* \phi_{k'}^*}{ \phi^*} ) ]  \nonumber \\
&  = -\mathbb{E}[  \frac{\phi_{k}^*}{\phi^{*} } (\Lambda_k^*-\Lambda_k^*(x-\mu_k^{*})(x-\mu_k^{*})^T \Lambda_k^*) (\frac{\delta_{g' c}}{\psi_{k',g'}^*}-1)(\frac{\pi_{k'}^* \phi_{k'}^*}{\phi^*} ) ]  
\end{align}
For $\theta_{k'}=\hat{\pi}_{k'}$, we have:

\begin{align}
& \frac{\partial S_k(\Theta^*)}{\partial \hat{\pi}_{k'} }\rightarrow  \mathbb{E}[ \frac{1}{\pi_{k'}^{*}} (\delta_{kk'}\frac{\pi_k^*\phi_k^*}{\phi^*}-\frac{\pi_k^*\phi_k^* \pi_{k'}^* \phi_{k'}^*}{\phi^{*2}} )]  (\pi_k^{* -1} \Lambda_k^*) \nonumber \\
&  - \pi_k^{*}(\pi_k^{* -1} \Lambda_k^*) \mathbb{E}[ (\frac{1}{\pi_{k'}^{*}}) (\delta_{kk'}\frac{\pi_k^*\phi_k^*}{\phi^*}-\frac{\pi_k^*\phi_k^* \pi_{k'}^* \phi_{k'}^*}{\phi^{*2}} )(x-\mu_k^{*})(x-\mu_k^{*})^T ] (\pi_k^{* -1} \Lambda_k^*)  \nonumber \\
& =  -\mathbb{E}[ (\Lambda_k^*-\Lambda_k^*(x-\mu_k^{*})(x-\mu_k^{*})^T\Lambda_k^*)\frac{\phi_k^* \phi_{k'}^*}{\phi^{*2}} ]
\end{align}

For $\theta_{k'}=\hat{\mu}_{k'l}$, we have:





\begin{align}
& \frac{\partial S_k(\Theta^*)}{\partial \hat{\mu}_{k'l} }\rightarrow  \mathbb{E}[ \Lambda_{k'lj}^{*}(x_{j} - \mu_{k'j}^{*} ) (\delta_{kk'}\frac{\pi_k^*\phi_k^*}{\phi^*}-\frac{\pi_k^*\phi_k^* \pi_{k'}^* \phi_{k'}^*}{\phi^{*2}} )]  (\pi_k^{* -1} \Lambda_k^*)  \nonumber \\
&  - \pi_k^{*}(\pi_k^{* -1} \Lambda_k^*) \mathbb{E}[ \Lambda_{k'lj}^{*}(x_{j} - \mu_{k'j}^{*} ) (\delta_{kk'}\frac{\pi_k^*\phi_k^*}{\phi^*}-\frac{\pi_k^*\phi_k^* \pi_{k'}^* \phi_{k'}^*}{\phi^{*2}} )(x-\mu_k^{*})(x-\mu_k^{*})^T ] (\pi_k^{* -1} \Lambda_k^*) \nonumber \\
& =  -\mathbb{E}[ (\Lambda_k^*-\Lambda_k^*(x-\mu_k^{*})(x-\mu_k^{*})^T\Lambda_k^*)\Lambda_{k'lj}^{*}(x_{j} - \mu_{k'j}^{*} ) \frac{\phi_k^* \pi_{k'}^*\phi_{k'}^*}{\phi^{*2}} ]
\end{align}

For $\theta_{k'}=\hat{\Lambda}_{k'jl}$, we have:

\begin{align}
& \frac{\partial S_k(\Theta^*)_{rs} }{\partial \hat{\Lambda}_{k'jl} }\rightarrow  \mathbb{E}[ \frac{1}{2}(\Lambda_{k'jl}^{* -1}- (x_{j} - \mu_{k'j}^{*})(x_{l} - \mu_{k'l}^{*}) )(\delta_{kk'}\frac{\pi_k^*\phi_k^*}{\phi^*}-\frac{\pi_k^*\phi_k^* \pi_{k'}^* \phi_{k'}^*}{\phi^{*2}} )]  (\pi_k^{* -1} \Lambda_{krs}^*) \nonumber \\
&  - \pi_k^{*}(\pi_k^{* -1} \Lambda_{krc}^*) \mathbb{E}[ \frac{1}{2}(\Lambda_{k'jl}^{* -1}- (x_{j} - \mu_{k'j}^{*})(x_{l} - \mu_{k'l}^{*}) )
 (\delta_{kk'}\frac{\pi_k^*\phi_k^*}{\phi^*}-\frac{\pi_k^*\phi_k^* \pi_{k'}^* \phi_{k'}^*}{\phi^{*2}} )(x_{c}-\mu_{kc}^{*})(x_{d}-\mu_{kd}^{*}) ] (\pi_k^{* -1} \Lambda_{kds}^*) \nonumber \\
&=-\delta_{kk'}\Lambda_{krc}^*\frac{1}{2}[\Lambda_{kjl}^{* -1}\Lambda_{kcd}^{* -1}- (\Lambda_{kjl}^{* -1}\Lambda_{kcd}^{* -1}+\Lambda_{kjc}^{* -1}\Lambda_{kld}^{* -1}+\Lambda_{kjd}^{* -1}\Lambda_{klc}^{* -1})  ] (\Lambda_{kds}^*) \nonumber \\
& -\mathbb{E}[ (\Lambda_{krs}^*-\Lambda_{krc}^*(x_{c}-\mu_{kc}^{*})(x_{d}-\mu_{kd}^{*})\Lambda_{kds}^*)\frac{1}{2}(\Lambda_{k'jl}^{* -1}- (x_{j} - \mu_{k'j}^{*})(x_{l} - \mu_{k'l}^{*}) ) \frac{\phi_k^* \pi_{k'}^*\phi_{k'}^*}{\phi^{*2}} ] \nonumber \\
&=\frac{1}{2}\delta_{kk'}(\delta_{rj}\delta_{ls}+\delta_{js}\delta_{rl}) \nonumber \\
& -\mathbb{E}[ (\Lambda_{krs}^*-\Lambda_{krc}^*(x_{c}-\mu_{kc}^{*})(x_{d}-\mu_{kd}^{*})\Lambda_{kds}^*)\frac{1}{2}(\Lambda_{k'jl}^{* -1}- (x_{j} - \mu_{k'j}^{*})(x_{l} - \mu_{k'l}^{*}) ) \frac{\phi_k^* \pi_{k'}^*\phi_{k'}^*}{\phi^{*2}} ]
\end{align}

\begin{lemma}\label{lemma: modifiedSLLN}
If ${X_n}$ is a sequence of independent and identically distributed random variables and $F_n, F_0$ are measurable functions such that $F_n(.) \rightarrow F_0(.)$ pointwise, and there exists a measurable function $G$ such that $|F_n(x)| \leq G(x), \mathbb{E}[G^2(X_1)]<\infty$, then almost surely:

$$\frac{1}{n} \sum_{i=1}^{n} F_n(X_i) \rightarrow \mathbb{E}[F_0(X_1)].$$

\begin{proof}
Decompose the sum as:
$$\frac{1}{n} \sum_{i=1}^{n} F_n(X_i) = \frac{1}{n} \sum_{i=1}^{n} F_n(X_i)-\mathbb{E}[F_n(X_1)]+\mathbb{E}[F_n(X_1)].$$

The second term $\mathbb{E}[F_n(X_1)]$ converges to $\mathbb{E}[F_0(X_1)]$ by the dominated convergence theorem, since $F_n(x) \rightarrow F_0(x)$ pointwise, and $|F_n(x)| \leq G(x)$ and $G(X_1)$ has finite expectation. The first term $\frac{1}{n} \sum_{i=1}^{n} F_n(X_i)-\mathbb{E}[F_n(X_1)]$ converges to $0$ almost surely by Theorem 2 of \cite{Hu1989}, where the required conditions are satisfied. Specifically, consider the array of random variables $Y_{n,i}=F_n(X_i)-\mathbb{E}[F_n(X_1)]$. By construction $\mathbb{E}[Y_{n,i}]=0$ and the $Y_{n,i}$ are row-wise independent (for fixed $n$, the $Y_{n,i}$ are independent), and the $Y_{n,i}$ are uniformly bounded by $H(X_1)=G(X_1)+\mathbb{E}[G(X_1)] $, in the sense that $P(|Y_{n,i}|>t) \leq P(|H(X_1)|>t)$ for all $t>0$. The requirement $\mathbb{E}[H^2(X_1)]<\infty$ is satisfied because $ \mathbb{E}[G^2(X_1)]<\infty$.

\end{proof}
\end{lemma}
\Cref{lemma: modifiedSLLN} in our context is straightforward to apply because the $r_{ik}$ are bounded between $0$ and $1$, so the construction of a dominating function $G$ is immediate. Furthermore, derivatives of $r_{ik}$ only introduce bounded or polynomial factors of $x_i$, so it's still straightforward to find dominating functions for the derivatives of $r_{ik}$, as mixtures of independent Gaussian and finite discrete distributions possess finite moments of all orders. In contrast, Lemma 2 of \cite{BoWang2006} requires the $r_{ik}$ and its derivatives to be uniformly convergent as functions of $x_i$ as $n \rightarrow \infty$ instead, which appears much more cumbersome to verify compared to the relaxed requirements in \Cref{lemma: modifiedSLLN}.

\section{Detail of data generating parameters for simulated examples} \label{subsec: simulatedparams}

\subsection{Scenario 1: a continuous data dominated example} \label{subsec: continuousdataexample}

For this scenario, we simulate data as follows: $\pi^{*}$ are generated from normalising a random vector of size $K^{*}=5$, where each component is generated from $U(0.5,2.0)$. This ensures the ratio between the most likely component and the least likely component is capped to at most $4$ and not too skewed. We use a single categorical variable with integer values from $1$ to $d_1=4$, where the true probability  $\psi^{*}$ for the categorical variable is generated from a Dirichlet distribution with all $d_1$ hyperparameters equal to $5$. To ensure the continuous data are well separated between components, we generate $K^{*}$ vectors $\mu^{*}_{k}$ that are a distance of at least $4K^{*}$ and at most $8K^{*}$ apart. To generate $\Sigma^{*}_{k}$, for each $k$ we sample diagonal matrices $D_{k}$ with eigenvalues between $k^2/2$ and $k^2$ (so `standard deviation' of at most $k^{*}$), then transforming it via conjugation with a random orthogonal matrix $U_{k}$ to obtain a non diagonal covariance matrix  $\Sigma^{*}_{k}= U_{k} D_{k} U^{T}_{k}$. This procedure generates components where the separation between the component centres is relatively large compared to the standard deviations of the covariance matrices, so the data sampled from different components are likely to be well separated.  

\subsection{A categorical data dominated example} \label{subsec: categorydataexample}

For this scenario, again $\pi^{*}$ are generated from normalising a random vector of size $K^{*}=5$, where each component is generated from $U(0.5,2)$. There are $p=2K^{*}$ binary categorical variables, which we generate by having them correlate strongly with the component label as follows: if the component label is equal to $k$, then the $(2k-1)$th and $2k$th categorical variables have $0.9$ probability of equalling $1$, while all other categorical variables have $0.9$ probability of equalling $0$. In other words, $\psi^{*}_{k,2k-1,1}=0.9, \psi^{*}_{k,2k,1}=0.9, \psi^{*}_{k,j,1}=0.1$ for $j \neq 2k-1, 2k$. For the continuous data, we set $q=K^{*}$, with the component centres $\mu^{*}_{k}=e_{k}$, where $e_{k}$ is the unit vector in the $k$th direction. Each $\Sigma^{*}_{k}$ is chosen to be a diagonal matrix, with the $k$th diagonal set to $9$ while other diagonals are set to $4$. In this example, the distance between any two component centres is $\sqrt{2}$, but the standard deviations of the covariance matrices are $2$ or $3$. This leads to the points generated from different components to overlap significantly with each other while still being distinct, to ensure the components are largely driven by the value of the categorical variables.

\subsection{A third example} \label{subsec: mixofbothexample}

For this scenario, the true component probabilities $\pi^{*}$ are generated as for the two previous scenarios. We define a single categorical variable (i.e. $p=1$) with integer values between $1$ and $d_1=K^*$ that significantly correlates with the component label. In particular, for each $k$, $\psi^{*}_{k,1,k}=0.75$ while $\psi^{*}_{k,1,l}=0.25/(d_1-1)$ for all $l \neq k$, so under the likelihood the categorical variable satisfies $P(c=z)=0.75$.For the continuous data, we set $q=5, \mu^{*}_{k}=4K^{*} e_{k}$. $\Sigma^{*}_{k}$ is generated as in scenario 1, except the eigenvalue of the diagonal matrices $D_{k}$ is sampled uniformly to be between $(1.6K^{*})^{2}/2$ and $(1.6K^{*})^{2}$, significantly higher than in example 1, then transformed via conjugation with an orthogonal matrix to obtain a non diagonal matrix. The overall idea of this example is to represent a middle ground between scenario 1 and 2: have components that are less well separated in continuous space than scenario 1, in the sense that the standard deviations of the covariance matrices is relatively larger in size compared to the separation between the component centres, but compensated by a categorical variable that is significantly correlated with the component label $z$, though not as much as in scenario 2. 

\section{Further simulation results} \label{section: furthersimulations}

In this section we present further examples on scenario 3. We consider three higher dimensional settings: $q=10, K^*=5$, $q=20, K^*=10$ and $q=30, K^*=15$. We retain a single categorical variable but increase its number of categories to $d_1=q$, which can be viewed as combining several categorical variables into one. For each setting, we run both the CAVI and Gibbs sampler for sample sizes $n=2500, 5000, 10000, 20000$ for $80$ to $100$ runs, and compare their run times and error metrics across those runs. Gibbs sampler is again run for $3001$ iterations with $600$ iterations as burn-in.

As can be seen in \Cref{table:moreruntimecomparison}, CAVI retains orders of magnitude faster run times on higher dimensions compared to the Gibbs sampler, suggesting it is much more computationally scalable. Additionally, from \Cref{fig: q10K5errors}, \Cref{fig: q20K10errors} and \Cref{fig: q30K15errors}, it can be seen the average errors for the model parameters $\mu, \Sigma, \psi, \pi$ range from competitive between VI and Gibbs, to VI performing noticeably better for smaller sample sizes $n$, most notably in the $q=20,K=10$ and $q=30,K=15$ cases. 

\begin{table}[h!]
\small
\centering
\begin{tabular}{|c|c|c|c|c|c|c|c|}
\hline
& $q=d_1=10, K^*=5$ & $q=d_1=20, K^*=10$ & $q=d_1=30, K^*=15$  \\ 
\hline
$n=2500$, VI & 18.2, [14.7, 23.3] & 139.0, [73.5,279.3] & 218.6, [103.9,337.2]  \\ 
\hline
$n=5000$, VI & 53.0, [48.0,59.4] & 277.1, [174.5,452.9] & 1489.1, [832.5,2438.3]  \\ 
\hline
$n=10000$, VI & 187.8, [174.1,203.9] & 549.9, [424.4,782.0] & 3619.4, [2070.7,6329.8] \\ 
\hline
$n=20000$, VI & 705.1, [654.8,766.2] & 1578.0, [1449.0,1807.3] & 6480.3, [3769.6,11241.8]  \\ 
\hline
$n=2500$, Gibbs & 7330.2, [6934.5,8182.5] & 16668.8, [15597.5,18397.9] & 30248.8, [29087.4,32148.8]   \\ 
\hline
$n=5000$, Gibbs & 15006.1, [14105.0,17195.1] & 33347.0, [31312.1,36889.9] & 61433.2, [59038.4,66356.1]   \\ 
\hline
$n=10000$, Gibbs & 31594.1, [29451.2,36327.5] & 67348.9, [63045.2,70922.0] & 124848.3, [129447.5,135515.3]   \\ 
\hline
$n=20000$, Gibbs & 67046.3, [63625.7,75326.9] & 137046.3, [129033.1,146891.6] & 246098.3, [240813.5,254807.8]   \\ 
\hline
\end{tabular}
\caption{Average, 5\% and 95\% runtimes for CAVI and Gibbs in seconds, scenario 3, higher dimension settings}
\label{table:moreruntimecomparison}
\end{table}

\begin{figure*}[h!]{
\centering
\includegraphics[width = 1.00\textwidth]{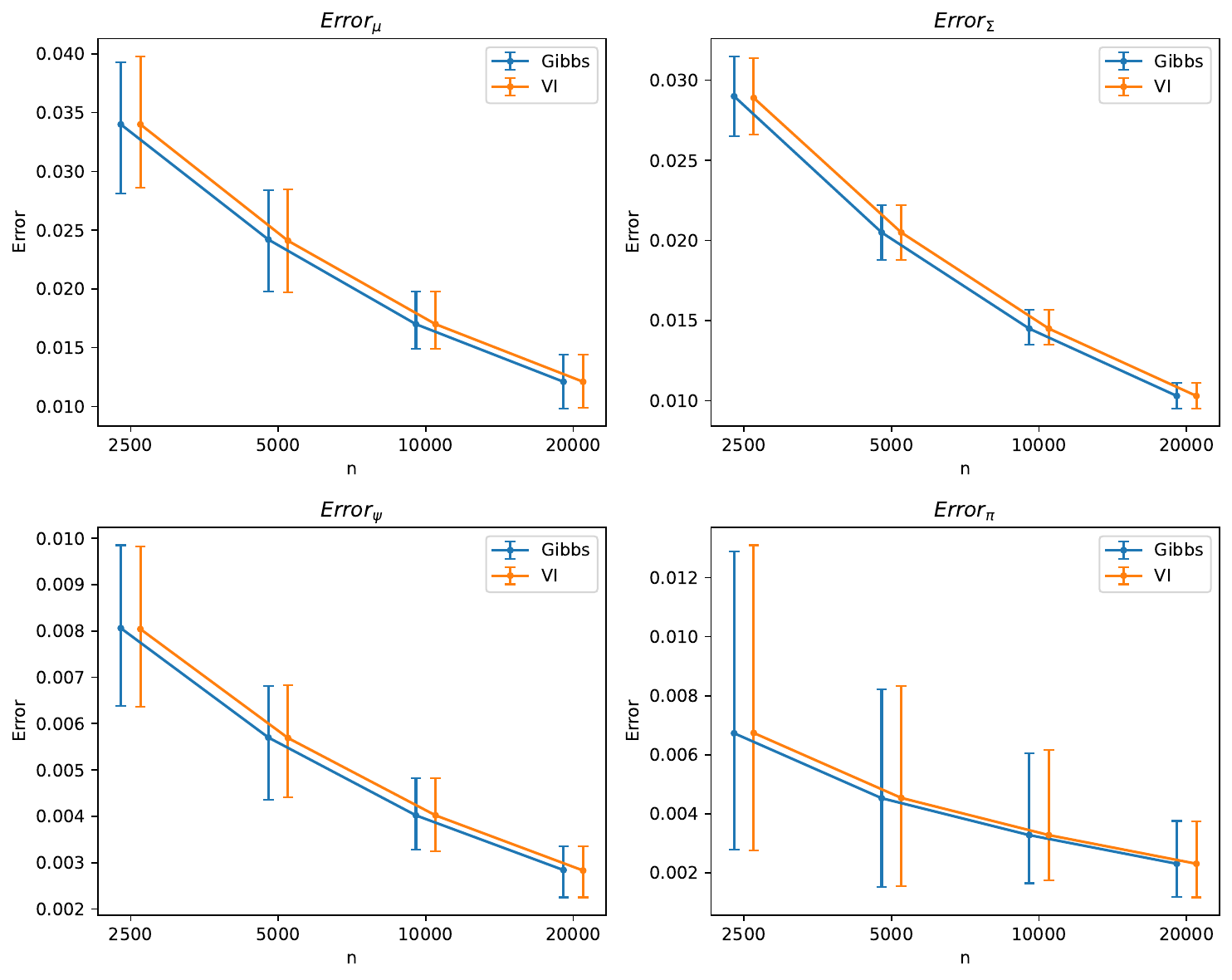}
}
\caption{Comparison of error metrics between VI and Gibbs sampler for scenario 3, $q=10,K=5$. Plotted are the mean errors along with the 5th and 95th quantiles, over 100 runs.}
\label{fig: q10K5errors}
\end{figure*}

\begin{figure*}[h!]{
\centering
\includegraphics[width = 1.00\textwidth]{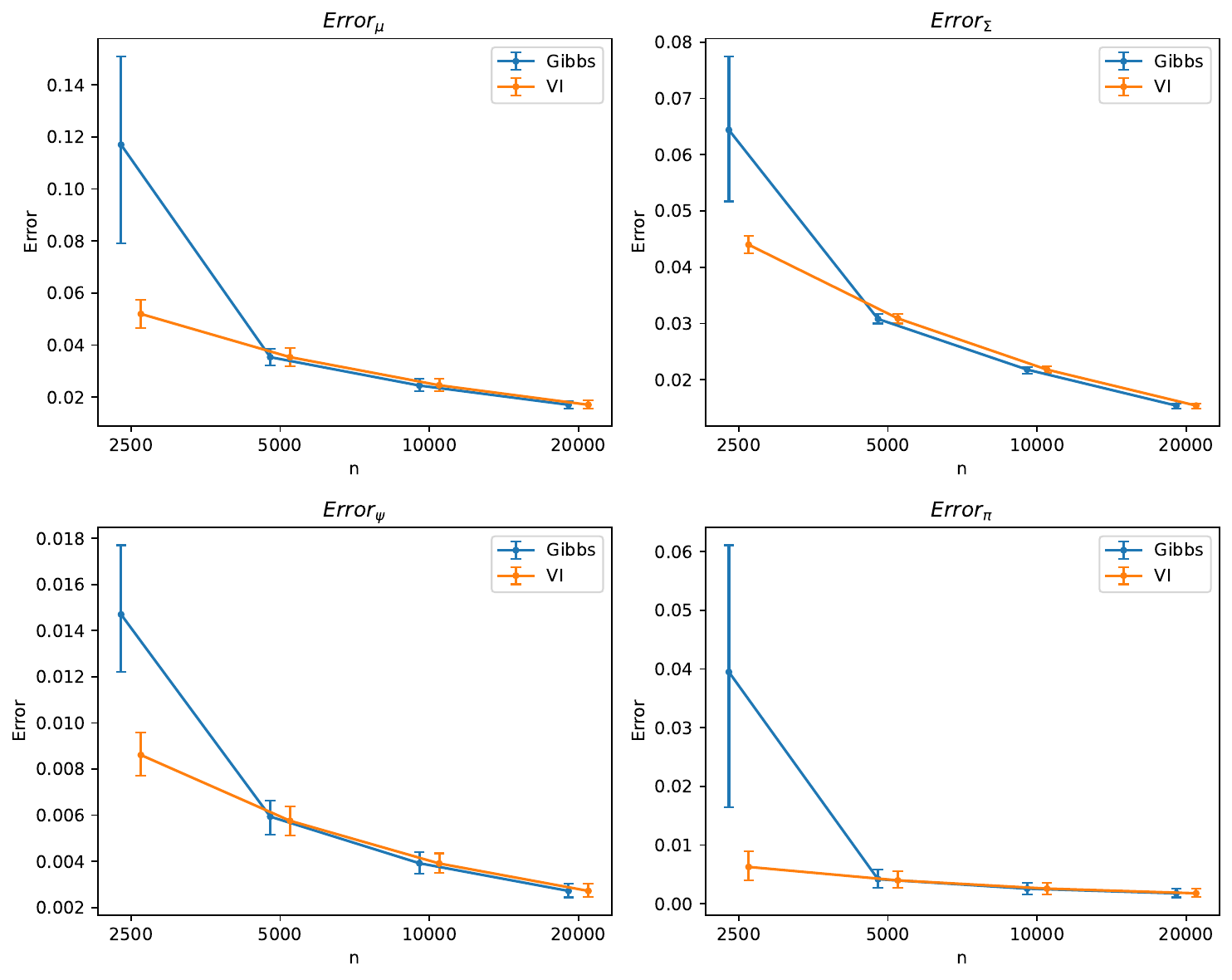}
}
\caption{Comparison of error metrics between VI and Gibbs sampler for scenario 3, $q=20,K=10$. Plotted are the mean errors along with the 5th and 95th quantiles, over 100 runs.}
\label{fig: q20K10errors}
\end{figure*}

\begin{figure*}[h!]{
\centering
\includegraphics[width = 1.00\textwidth]{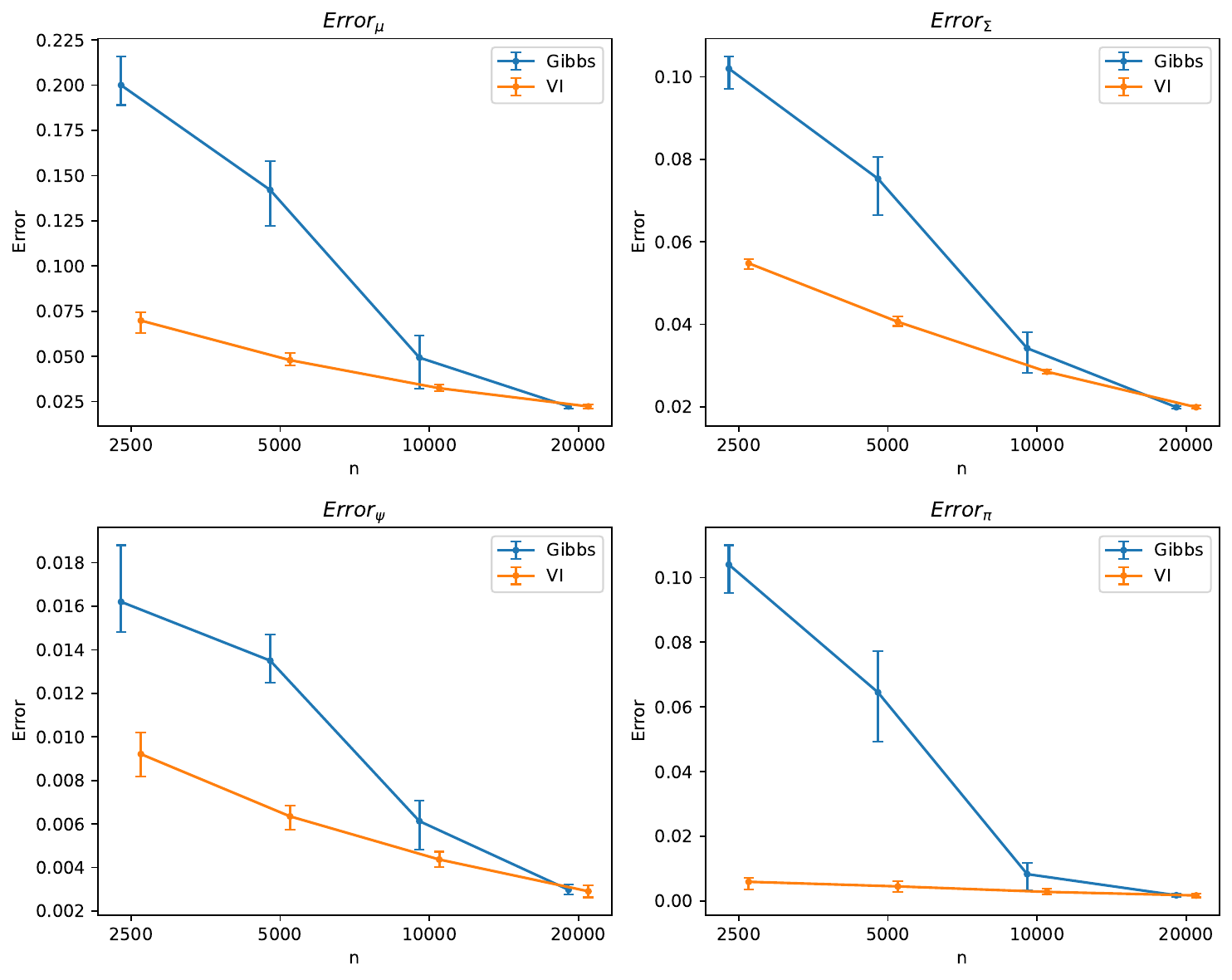}
}
\caption{Comparison of error metrics between VI and Gibbs sampler for scenario 3, $q=30,K=15$. Plotted are the mean errors along with the 5th and 95th quantiles, over 80-100 runs}
\label{fig: q30K15errors}
\end{figure*}

\end{document}